\DeclareTextFontCommand{\emph}{\slshape}
\renewcommand{\paragraph}{%
	\@startsection{paragraph}{4}%
	{\z@}{1.75ex \@plus 1ex \@minus .2ex}{-0.7em}%
	{\normalfont\normalsize\bfseries}%
}
\let\originalleft\left
\let\originalright\right
\renewcommand{\left}{\mathopen{}\mathclose\bgroup\originalleft}
\renewcommand{\right}{\aftergroup\egroup\originalright}
\pgfplotsset{compat=1.10}
\setlist[enumerate,1]{label=(\arabic*)}
\setlist[itemize,1]{label=--}
\setlist[itemize,2]{label=--}
\setlist[itemize,3]{label=--}
\setlist[itemize,4]{label=--}
\theoremstyle{definition}
\newtheorem{theorem}{Theorem}
\newtheorem{proposition}{Proposition}
\newtheorem{lemma}{Lemma}
\newtheorem{corollary}{Corollary}
\newtheorem{remark}{Remark}
\newtheorem{example}{Example}
\newtheorem{definition}{Definition}
\newtheoremstyle{named}
	{\topsep}					
	{\topsep}					
	{}							
	{0pt}						
	{\bfseries}					
	{}							
	{5pt plus 1pt minus 1pt}	
	{\thmnote{#3}}				
\theoremstyle{named}
\newtheorem{namedthm}{}
\renewcommand{\qedsymbol}{$\blacksquare$}
\xpatchcmd{\proof}{\itshape}{\proofheadfont}{}{}
\newcommand{\proofheadfont}{\slshape}
\crefname{page}{p.}{pp.}
\crefname{equation}{equation}{equations}
\crefname{section}{section}{sections}
\crefname{subsection}{section}{sections}
\crefname{subsubsection}{section}{sections}
\crefname{appsec}{appendix}{appendices}
\crefname{supplsec}{supplemental appendix}{supplemental appendices}
\crefname{footnote}{footnote}{footnotes}
\crefname{figure}{figure}{figures}
\crefname{table}{table}{tables}
\crefname{theorem}{theorem}{theorems}
\crefname{proposition}{proposition}{propositions}
\crefname{lemma}{lemma}{lemmata}
\crefname{corollary}{corollary}{corollaries}
\crefname{remark}{remark}{remarks}
\crefname{observation}{observation}{observations}
\crefname{example}{example}{examples}
\crefname{fact}{fact}{facts}
\crefname{definition}{definition}{definitions}
\crefname{assumption}{assumption}{assumptions}
\crefname{exercise}{exercise}{exercises}
\crefname{notation}{notation}{notation}
\crefname{claim}{claim}{claims}
\crefname{conjecture}{conjecture}{conjectures}
\newcommand{\eps}{\varepsilon}
\newcommand{\dd}{\mathrm{d}}
\DeclareMathOperator*{\argmin}{arg\,min}
\DeclareMathOperator*{\argmax}{arg\,max}
\DeclareMathOperator*{\interior}{int}
\DeclareMathOperator*{\co}{co}
\DeclareMathOperator*{\supp}{supp}
\DeclareMathOperator*{\cav}{cav}
\newcommand{\R}{\mathbf{R}}
\newcommand{\Q}{\mathbf{Q}}
\newcommand{\N}{\mathbf{N}}
\newcommand{\1}{\boldsymbol{1}}
\newcommand{\join}{\vee}
\newcommand{\meet}{\wedge}
\newcommand{\union}{\cup}
\newcommand{\intersect}{\cap}
\newcommand{\Union}{\bigcup}
\DeclarePairedDelimiter\abs{\lvert}{\rvert}
\DeclarePairedDelimiter\norm{\lVert}{\rVert}
\newcommand*{\xslant}[2][76]{%
	\begingroup
	\sbox0{#2}%
	\pgfmathsetlengthmacro\wdslant{\the\wd0 + cos(#1)*\the\wd0}%
	\leavevmode
	\hbox to \wdslant{\hss
		\tikz[
			baseline=(X.base),
			inner sep=0pt,
			transform canvas={xslant=cos(#1)},
		] \node (X) {\usebox0};%
		\hss
		\vrule width 0pt height\ht0 depth\dp0 %
	}%
	\endgroup
}
\newcommand*{\xslantmath}{}
\def\xslantmath#1#{%
	\@xslantmath{#1}%
}
\newcommand*{\@xslantmath}[2]{%
	\ensuremath{%
		\mathpalette{\@@xslantmath{#1}}{#2}%
	}%
}
\newcommand*{\@@xslantmath}[3]{%
	\xslant#1{$#2#3\m@th$}%
}
\def\namedlabel#1#2{\begingroup
	#2%
	\def\@currentlabel{#2}%
	\phantomsection\label{#1}\endgroup
}
\let\save@mathaccent\mathaccent
\newcommand*\if@single[3]{%
	\setbox0\hbox{${\mathaccent"0362{#1}}^H$}%
	\setbox2\hbox{${\mathaccent"0362{\kern0pt#1}}^H$}%
	\ifdim\ht0=\ht2 #3\else #2\fi
	}
\newcommand*\rel@kern[1]{\kern#1\dimexpr\macc@kerna}
\newcommand*\widebar[1]{\@ifnextchar^{{\wide@bar{#1}{0}}}{\wide@bar{#1}{1}}}
\newcommand*\wide@bar[2]{\if@single{#1}{\wide@bar@{#1}{#2}{1}}{\wide@bar@{#1}{#2}{2}}}
\newcommand*\wide@bar@[3]{%
	\begingroup
	\def\mathaccent##1##2{%
	  \let\mathaccent\save@mathaccent
	  \if#32 \let\macc@nucleus\first@char \fi
	  \setbox\z@\hbox{$\macc@style{\macc@nucleus}_{}$}%
	  \setbox\tw@\hbox{$\macc@style{\macc@nucleus}{}_{}$}%
	  \dimen@\wd\tw@
	  \advance\dimen@-\wd\z@
	  \divide\dimen@ 3
	  \@tempdima\wd\tw@
	  \advance\@tempdima-\scriptspace
	  \divide\@tempdima 10
	  \advance\dimen@-\@tempdima
	  \ifdim\dimen@>\z@ \dimen@0pt\fi
	  \rel@kern{0.6}\kern-\dimen@
	  \if#31
	    \overline{\rel@kern{-0.6}\kern\dimen@\macc@nucleus\rel@kern{0.4}\kern\dimen@}%
	    \advance\dimen@0.4\dimexpr\macc@kerna
	    \let\final@kern#2%
	    \ifdim\dimen@<\z@ \let\final@kern1\fi
	    \if\final@kern1 \kern-\dimen@\fi
	  \else
	    \overline{\rel@kern{-0.6}\kern\dimen@#1}%
	  \fi
	}%
	\macc@depth\@ne
	\let\math@bgroup\@empty \let\math@egroup\macc@set@skewchar
	\mathsurround\z@ \frozen@everymath{\mathgroup\macc@group\relax}%
	\macc@set@skewchar\relax
	\let\mathaccentV\macc@nested@a
	\if#31
	  \macc@nested@a\relax111{#1}%
	\else
	  \def\gobble@till@marker##1\endmarker{}%
	  \futurelet\first@char\gobble@till@marker#1\endmarker
	  \ifcat\noexpand\first@char A\else
	    \def\first@char{}%
	  \fi
	  \macc@nested@a\relax111{\first@char}%
	\fi
	\endgroup
}
	\newcommand{\hyperdest}[1]{\Hy@raisedlink{\hypertarget{#1}{}}}
\newcommand{\lessvexu}[1]{u:#1}
\newcommand{\lessvexv}[1]{v:#1}
\newcommand{\lowerthan}{\quad\text{is lower than}\quad}
\newcommand{\customspace}{\quad\;}
\newcommand{\strictlyhigherthan}{\customspace\text{\parbox{\widthof{higher than}}{\centering is strictly higher than}}\customspace}
\newcommand{\notstrictlyhigherthan}{\customspace\text{\parbox{\widthof{is not strictly}}{\centering is not strictly higher than}}\customspace}
\newif\ifviolation
\newif\ifbig
\newif\ifu
\newif\ifp
\newif\ifX
\newif\ifv
\newif\ifq
\newif\ifsshape
\newif\ifconcav
\newif\ifaffine
\newif\ifinterpol
\title{\scshape The comparative statics of persuasion%
\thanks{We are grateful for comments from Da\u{g}han Carlos Akkar, Itai Arieli, Yunus Aybas, Gabe Carroll, Tommaso Denti, Piotr Dworczak, Michael Eldar, Pia Ennuschat, Matteo Escudé, Alkis Georgiadis-Harris, Alexis Ghersengorin, Ben Golub, Duarte Gonçalves, Olivier Gossner, Ian Jewitt, Paul Klemperer, Peter Klibanoff, Anton Kolotilin, Annie Liang, Elliot Lipnowski, Thomas Mariotti, Laurent Mathevet, Meg Meyer, Wojciech Olszewski, Paula Onuchic, Marco Ottaviani, Alessandro Pavan, Antonio Penta, Jacopo Perego, Philipp Strack, Bruno Strulovici, Alex Wolitzky, Kun Zhang and audiences at Arizona State, Berlin, Bonn, Caltech, Cambridge, Cergy, City, CREST, EUI, Glasgow, LSE, Mannheim, Michigan, Nottingham, Oxford, Pompeu Fabra, Surrey, Tilburg, UCL, VSET and several conferences. Curello acknowledges support from the German Research Foundation (DFG) through CRC TR 224 (Project B02).}}
\author{%
Gregorio Curello \\
University of Mannheim
\and
Ludvig Sinander \\
University of Oxford}
\date{25 November 2025}
\begin{document}

\maketitle

\begin{abstract}
	In the persuasion model, apart from a few special cases, comparative statics has been an open question.
	We answer it,
	delineating which shifts of the sender's interim payoff
	lead her optimally to choose a more informative signal.
	Our first theorem identifies a coarse notion of `increased convexity' that we show characterises those shifts of the sender's interim payoff that lead her optimally to choose \emph{no less} informative signals.
	To strengthen this conclusion to `\emph{more} informative'
	requires further assumptions:
	our second theorem identifies the necessary and sufficient condition on the sender's interim payoff, which strictly generalises the convex--concave (`S') shape commonly imposed in the literature.
	We identify conditions under which increased alignment of interests between sender and receiver leads to comparative statics, and study a number of applications.
\end{abstract}

\section{Introduction}
\label{sec:intro}

The persuasion model of \textcite{KamenicaGentzkow2011} is by now canonical. Much effort has been devoted to characterising and solving for optimal signals.
In this paper, we ask a different question: not how optimal signals look or may be computed, but rather how they vary with economic primitives.
Concretely, we pose and answer the comparative-statics question:
which shifts of model primitives, specifically of the sender's interim payoff, lead her optimally to choose a more informative signal?

Recall that the persuasion model features an uncertain state of the world, whose distribution is called the \emph{prior,} and a character called the \emph{sender.}
The sender flexibly designs what will and won't be revealed about the state, by choosing a \emph{signal.}
The model's primitives are the prior and the sender's interim payoff function, which maps each posterior belief into an expected payoff.
(This interim payoff is a reduced-form description of a downstream interaction, typically involving one or more other players called `receivers'.)

Motivated by applications, we primarily focus on the case in which the sender's interim payoff depends on only one moment of the posterior belief---without loss, the mean.
This `single-moment' assumption is satisfied by several important economic models, and is therefore maintained in much of the recent literature on persuasion in economic environments.%
	\footnote{\label{footnote:mean-meas_ex}For example, \textcite{RoeslerSzentes2017,RavidRoeslerSzentes2022,DoganHu2022,ArmstrongZhou2022,HwangKimBoleslavsky2023,BergemannHeumannMorris2022,MenschRavid2022,Thereze2023adv,Thereze2023scr,Kreutzkamp2023}. See also \textcite{Bergemannetal2022}.}

Following the comparative-statics literature, we divide our comparative-statics question into two parts: we first ask which shifts of the sender's interim payoff lead her optimally to choose \emph{not strictly less} informative signals, and then ask what is the maximal domain of interim payoffs on which these shifts actually lead \emph{(weakly) more} informative signals to be chosen.%
	\footnote{`Not strictly less' does not imply `more' since `less informative than' is a \emph{partial} order.}
This division of comparative-statics questions into a `non-decreasing' question (characterising payoff shifts) and an `increasing' question (identifying a payoff domain) is fundamental to the theory of comparative statics, e.g. \textcite{Topkis1978,MilgromShannon1994,QuahStrulovici2009,QuahStrulovici2007extensions}.%
	\footnote{\label{footnote:mcs_properties}In these classic papers, the first (`non-decreasing') question is answered by payoff-shift notions called (respectively) increasing differences, single-crossing differences, and interval dominance, while the second question is answered by domain restrictions called (respectively) supermodularity, quasi-supermodularity, and I-quasi-supermodularity.}
It tends to yield economically interpretable and easily applicable conditions.%
	\footnote{There are two papers \parencite{CheKimKojima2021,AmirRietzke2025} which eschew the canonical bipartite division, instead identifying `omnibus' conditions for comparative statics which are formally weaker, but harder to interpret or verify in applications.}

Our first theorem shows that a coarse notion of `less convex than'
characterises `non-decreasing' comparative statics in the persuasion model:
coarsely more convex interim payoffs are exactly those
that lead \emph{not strictly less} informative signals to be chosen by the sender, whatever the prior.

Our main theorem characterises what more is needed
to obtain `increasing' comparative statics:
it identifies a property of interim payoffs
that is necessary and sufficient
for coarse-convexity shifts to cause \emph{(weakly) more} informative signals to be chosen, whatever the prior.
This property, the \emph{crater property,} is a simple geometric condition
that strictly generalises the convex--concave (`S') shape
commonly assumed in the literature.

The crater property is demanding. Nevertheless, we show that it is satisfied in a number of applications, permitting new comparative-statics conclusions to be drawn about the problems of persuading a privately informed receiver \parencite{KolotilinEtal2017}, persuading voters \parencite[à la][]{AlonsoCamara2016aer}, designing (health) risk warnings \parencite{MariottiSchweizerSzechVonwangenheim2023}, costly information acquisition \parencite[e.g.][]{RavidRoeslerSzentes2022}, discretionary delegation \parencite[e.g.][]{Xu2024}, and persuasion with choice.

A string of further results shows that our main theorem is robust. We further show that shifts of the prior \emph{cannot} produce robust comparative statics, and that relaxing the `single-moment' assumption also yields impossibility.

Finally, we ask whether and when increased alignment of interests between the sender and a \emph{receiver} (who takes an action) yields a coarse-convexity shift, and thus potentially greater information-provision. We identify a simple condition that is sufficient and almost necessary.

\subsection{Relation to the persuasion literature}
\label{sec:intro:lit_pers}

The persuasion model was introduced by \textcite{KamenicaGentzkow2011}, with precedents in \textcite{AumannMaschler1995,BrocasCarillo2007,RayoSegal2010}. A rich literature has sought to characterise and solve for optimal signals \parencite[e.g.][]{Kolotilin2014,Kolotilin2018,DworczakMartini2019,KleinerMoldovanuStrack2021,ArieliBabichenkoSmorodinskyYamashita2023,DworczakKolotilin2023,KolotilinCorraoWolitzky2023}.

Comparative statics has been an open question, except in three special cases. Each of these concerns particular shifts of interim payoffs, and additionally restricts attention either to `S'-shaped interim payoffs \parencite{KolotilinMylovanovZapechelnyuk2022}, to binary priors \parencite{Yoder2022}, or both \parencite{GitmezMolavi2023}. We discuss these special cases in §\ref{sec:mcs_incr:s} below.

\subsection{Relation to the comparative-statics literature}
\label{sec:intro:lit_mcs}

The comparative-statics literature \parencite[e.g.][]{Topkis1978,MilgromShannon1994,QuahStrulovici2009} asks, for any problem in which an agent chooses an action $a$ from a partially ordered set $\mathcal{A}$, which shifts of the agent's objective function $U : \mathcal{A} \to \R$ lead her 	optimally to choose a higher action.%
	\footnote{A detail: the literature actually restricts attention to action sets $\mathcal{A}$ whose partial order has a \emph{lattice} structure.
	This proviso is satisfied in the persuasion model (see \cref{app:product}).}
In the persuasion model, the sender's action is a distribution $F$ drawn from the set of all inducible (by a signal) posterior-mean distributions, ordered by `less informative than', and her objective function is $U(F) \coloneqq \int u \dd F$, where $u(m)$ is the sender's interim payoff in case the posterior mean is $m$.

As mentioned above, the literature features two types of properties: notions of \emph{shift} of the payoff $U$, and \emph{domain restrictions} on the payoff $U$ (classic properties of each type are listed in \cref{footnote:mcs_properties} above). The role of shifts is to encourage the agent to take higher actions; formally, a shift ensures that the agent chooses a \emph{not strictly lower} action.%
	\footnote{See e.g. \textcite[Proposition~5]{QuahStrulovici2007extensions} and \textcite{AndersonSmith2024}.}
The second type of property identifies a domain of payoffs $U$ on which shifts in fact lead the agent to choose a \emph{(weakly) higher} action.

Our first theorem identifies the correct notion of `shift' in the persuasion model: coarse-convexity shifts of the interim payoff $u$ are exactly those which lead the sender optimally to choose a \emph{not strictly less} informative signal (whatever the prior). The proof uses comparative-statics theory: in particular, we show that coarse-convexity shifts of $u$ produce interval-dominance shifts of $U$, then invoke Proposition~5 in \textcite{QuahStrulovici2007extensions}.

Our main theorem identifies the maximal domain of interim payoffs $u$ on which coarse-convexity shifts lead the sender to choose a \emph{more} informative signal. The answer (the crater property) is restrictive but non-trivial, with several applications. Our proof exploits the specific structure of the persuasion model. Existing comparative-statics theory is of no help, since even the weakest domain-restriction property in the literature, Quah and Strulovici's (\citeyear{QuahStrulovici2007extensions}) `I-quasi-supermodularity', is satisfied by the objective function $U(F) = \int u \dd F$ only if $u$ is either concave or strictly convex (i.e. the trivial cases).%
	\footnote{We prove this in \cref{app:mcs_lit}. These cases are trivial because if $u$ is concave (convex) then `no information' (`full information') is optimal whatever the prior.}

The literature contains one broadly analogous result: \textcite{AndersonSmith2024} identify conditions for comparative statics in the \textcite{Becker1973} sorting model beyond the supermodular (positive assortative matching) case. Standard comparative-statics theory is inapplicable, so the authors argue from first principles, using the particular structure of the sorting model.

There is no contradiction with the literature's results on supermodularity-type properties being necessary (in a sense) for comparative statics. Rather, those necessity results are weaker and subtler than they may seem, as explained by \textcite{AmirRietzke2025} and exemplified by our main theorem.

Our main theorem is a proof of concept, showing that non-trivial comp\-arative-statics conclusions can sometimes be drawn beyond the standard super\-modularity-type payoff domains. This matters because in our experience, these supermodularity-type properties often fail in economic applications, except if actions are totally ordered (e.g. scalars). Our main theorem (and Anderson and Smith's) may thus be viewed as a first step toward a widely applicable comparative-statics theory allowing for \emph{partially} ordered actions.

\subsection{Roadmap}
\label{sec:intro:roadmap}

We describe the persuasion model in the next section.
In §\ref{sec:mcs}, we characterise `non-decreasing' comparative statics in terms of a coarse notion of `less convex than' (\Cref{theorem:nondecr}).
We then (§\ref{sec:mcs_incr}) give necessary and sufficient conditions for `increasing' comparative statics (\Cref{theorem:incr}, our main result).
In §\ref{sec:appl1}, we study alignment and three applications.
We then (§\ref{sec:limits}) explore the limits of \Cref{theorem:incr}, e.g. considering shifts of the prior and dropping the `single-moment' assumption.
We conclude in §\ref{sec:appl2} with three more applications.

\section{The persuasion model}
\label{sec:model}

There is an uncertain state of the world,
formally a random variable taking values in a bounded interval $\left[ \underline{x}, \bar{x} \right]$. We assume without loss of generality that $\underline{x}=0$ and $\bar{x}=1$.
We shall use the term \emph{distribution} to refer to CDFs $[0,1] \to [0,1]$.
We write $F_0$ for the distribution of the state, and refer to it as `the prior (distribution)'.
For two distributions $F$ and $G$,
recall that $F$ is a mean-preserving contraction of $G$ if and only if
\begin{equation*}
	\int_0^x F
	\leq \int_0^x G
	\quad \text{for every $x \in [0,1]$, with equality at $x=1$,}
\end{equation*}
or equivalently iff $\int \psi \dd F \leq \int \psi \dd G$
for every convex $\psi : [0,1] \to \R$.%
	\footnote{See e.g. \textcite[§3.A]{ShakedShanthikumar2007}.}

A sender chooses a \emph{signal,} i.e. a random variable jointly distributed with the state.%
	\footnote{Formally, a signal is $(M,\pi)$, where $M$ is a compact metric space and $\pi$ is a Borel measurable map $[0,1] \to \Delta(M)$, where $\Delta(M)$ is set of all the Borel probabilities on $M$, with the topology of weak convergence.
	The interpretation is that $M$ is a set of messages, and that $\pi(x) \in \Delta(M)$ is the distribution of messages sent if the state is $x \in [0,1]$.}
Given a signal, each signal realisation induces a posterior belief via Bayes's rule, whose expectation we call the \emph{posterior mean.}
Each signal thus induces a random posterior mean, with some distribution.
Call a distribution \emph{feasible (given $F_0$)} iff it is the posterior-mean distribution induced by some signal.
\textcite[][Proposition~1]{Kolotilin2014} showed that the feasible distributions are precisely the mean-preserving contractions of the prior $F_0$.%
	\footnote{This result may be traced to \textcite{HardyLittlewoodPolya1929,Blackwell1951}.}

The sender's (interim) payoff at a given realised posterior belief
is assumed to depend only on its mean:
her payoff at posterior mean $m \in [0,1]$ is $u(m)$,
where $u : [0,1] \to \R$ is upper semi-continuous.
Her problem is to choose among the feasible distributions $F$ to maximise her expected payoff $\int u \dd F$.

\begin{remark}
	\label{remark:mean_meas}
	Our assumption that only the mean matters is motivated by applications, where it is common for payoffs to depend on a single moment of the posterior distribution---without loss, the mean.%
		\footnote{This is without loss because if payoffs depend on the interim expectation of $f(X)$, where $X$ is the state of the world and $f : [0,1] \to \R$ is continuous, then we may re-define the state of the world to be $Y \coloneqq f(X)$.}
	This `single-moment' assumption holds in much of the recent literature on persuasion in classic economic environments (see \cref{footnote:mean-meas_ex} above).
	We relax it in §\ref{sec:limits:multi} below.
\end{remark}

\subsection{Informativeness}
\label{sec:model:informativeness}

\begin{definition}
	\label{definition:informativeness}
	For distributions $F$ and $G$,
	we call $F$ \emph{less informative} than $G$
	if and only if $F$ is a mean-preserving contraction of $G$.
\end{definition}

This captures informativeness in the spirit of Blackwell:
a more informative distribution is precisely one that is preferred ex-ante
by every expected-utility decision-maker who cares about the state only through its mean.%
	\footnote{Explicitly,
	$F$ is less informative than $G$ iff
	for any non-empty (action) set $\mathcal{A}$ and any (payoff) $U : \mathcal{A} \times [0,1] \to \R$ such that $U(a,\cdot)$ is affine for each $a \in \mathcal{A}$,
	we have $\int \sup_{a \in \mathcal{A}} U(a,m) F( \dd m )
	\leq \int \sup_{a \in \mathcal{A}} U(a,m) G( \dd m )$.
	This is because a function $\psi : [0,1] \to \R$ is convex and continuous iff it equals $m \mapsto \sup_{a \in \mathcal{A}} U(a,m)$ for some such $\mathcal{A}$ and $U$.}

Since there need not be a unique optimal posterior-mean distribution,
comparative statics requires comparing \emph{sets} of distributions.
We handle this in standard fashion by using the \emph{weak set order:}
for two sets $S,S'$ of feasible distributions,
we call $S$ \emph{lower} than $S'$
if and only if for any $F \in S$ and $G \in S'$,
there is a distribution in $S'$ that is more informative than $F$,
and there is a distribution in $S$ that is less informative than $G$.%
	\footnote{The literature often instead uses the \emph{strong} set order; we discuss this in \cref{app:mcs_lit}.}
We say that $S$ is \emph{strictly lower} than $S'$
if and only if $S$ is lower than $S'$
and $S'$ is not lower than $S$.
Clearly for singletons $S = \{F\}$ and $S' = \{G\}$, $S$ is (strictly) lower than $S'$ if and only if ($F \neq G$ and) $F$ is less informative than $G$.
Finally, we call $S'$ \emph{(strictly) higher} than $S$
if and only if $S$ is (strictly) lower than $S'$.

\subsection{Interpretation}
\label{sec:model:interpretation}

The interim payoff $u : [0,1] \to \R$ is a reduced-form object, capturing the (expected) payoff consequences for the sender of whatever downstream interaction takes place after her chosen signal realises.

In the simplest case, the downstream interaction involves a (single) \emph{receiver} taking an action.
Formally, there is a non-empty set $\mathcal{A}$ of actions, and the sender's and receiver's interim payoffs $U_S(a,m)$ and $U_R(a,m)$ depend on the chosen action $a \in \mathcal{A}$ and on the mean $m \in [0,1]$ of their (posterior) belief about the state.%
	\footnote{Equivalently, ex-post payoffs $\widebar{u}_S(a,x)$ and $\widebar{u}_R(a,x)$ depend on the action $a \in \mathcal{A}$ and the state $x \in [0,1]$, and $\widebar{u}_S(a,\cdot)$ and $\widebar{u}_R(a,\cdot)$ are affine for each $a \in \mathcal{A}$.}
When the posterior mean is $m \in [0,1]$, the receiver chooses action $A(m) \in \argmax_{a \in \mathcal{A}} U_R(a,m)$, so the sender's interim payoff is $u(m) \coloneqq U_S( A(m), m )$.
The assumption that $u$ is upper semi-continuous can be micro-founded by assuming that ($A$ is such that) the receiver breaks ties in the sender's favour.
This simple sender--receiver model of a downstream interaction nests some but not all of our applications in §\ref{sec:appl1} and §\ref{sec:appl2} below.

Our analysis will be robust to the details of the downstream interaction, giving conditions directly on the interim payoff $u$ that are necessary and sufficient for comparative statics. These conditions may then be checked in particular applications, as we illustrate in §\ref{sec:appl1} and §\ref{sec:appl2} below.

\section{\texorpdfstring{`}{‘}Non-decreasing\texorpdfstring{'}{’} comparative statics}
\label{sec:mcs}

In this section,
we ask a preliminary `non-decreasing' comparative-statics question:
which shifts of the sender's interim payoff $u$
ensure that she does \emph{not} choose a \emph{strictly less} informative distribution?
Intuition suggests that convexity should be decisive,
since a `more convex' $u$ embodies a greater liking for informative distributions.
We validate this intuition, by defining a new \emph{coarse} notion of comparative convexity and proving that it is the necessary and sufficient condition for `non-decreasing' comparative statics.

\begin{definition}
	\label{definition:lessconvex_coarse}
	For functions $u,v : [0,1] \to \R$,
	we say that $u$ is \emph{coarsely less convex} than $v$
	if and only if for any $x<y$ in $[0,1]$ such that
	\begin{equation}
		u( \alpha x + (1-\alpha) y )
		\leq \alpha u(x) + (1-\alpha) u(y) 
		\label{eq:C_u}
		\tag{$\lessvexu{\alpha}$}
	\end{equation}
	holds for every $\alpha \in (0,1)$, we also have
	\begin{equation}
		v( \alpha x + (1-\alpha) y )
		\leq \alpha v(x) + (1-\alpha) v(y) 
		\label{eq:C_v}
		\tag{$\lessvexv{\alpha}$}
	\end{equation}
	for every $\alpha \in (0,1)$,
	and furthermore any $\alpha \in (0,1)$ at which the inequality \eqref{eq:C_u} is strict is also one at which \eqref{eq:C_v} is strict.
\end{definition}

We call $v$ \emph{coarsely more convex} than $u$
if and only if
$u$ is coarsely less convex than $v$.
By inspection, the relation `coarsely less convex than' is transitive and reflexive, but not anti-symmetric.

There is a simple sufficient condition:

\begin{lemma}
	\label{lemma:CLC_suff}
	For functions $u,v : [0,1] \to \R$,
	if $v(x) = \Phi(u(x),x)$ for every $x \in [0,1]$, where $\Phi : \R \times [0,1] \to \R \union \{\infty\}$ is convex with $\Phi(\cdot,x)$ strictly increasing for every $x \in (0,1)$,
	then $u$ is coarsely less convex than $v$.
\end{lemma}

\begin{proof}
	Fix $x<y$ in $[0,1]$ and $\alpha \in (0,1)$ such that $u( \alpha x + (1-\alpha) y ) \leq \mathrel{(<)} \alpha u(x) + (1-\alpha) u(y)$.
	Since $\alpha x + (1-\alpha) y \in (0,1)$, $\Phi(\cdot,\alpha x + (1-\alpha) y)$ is strictly increasing, so
	\begin{multline*}
		v( \alpha x + (1-\alpha) y )
		\leq \mathrel{(<)} 
		\Phi\Bigl( \alpha u(x) + (1-\alpha) u(y), \alpha x + (1-\alpha) y \Bigr)
		\\
		\leq \alpha v(x) + (1-\alpha) v(y) ,
	\end{multline*}
	where the latter inequality follows from the convexity of $\Phi$.
\end{proof}

Thus $u$ is coarsely less convex than $v$ whenever $u$ is less convex than $v$ in the conventional sense: $v = \phi \circ u$ for some convex and strictly increasing function $\phi : \R \to \R \union \{\infty\}$ (to see this, take $\Phi(k,x) \coloneqq \phi(k)$ in \Cref{lemma:CLC_suff}). Such a shift from $u$ to $v$ occurs whenever the sender becomes less risk-averse \parencite{Pratt1964} or gains access to an outside option \parencite{CurelloSinanderWhitmeyer2025}.
A different sufficient condition for $u$ to be coarsely less convex than $v$ is that $v = u + \psi$ for some convex $\psi : [0,1] \to \R$ (take $\Phi(k,x) \coloneqq k + \psi(x)$ in \Cref{lemma:CLC_suff}). Such shifts from $u$ to $v$, which feature in the literature on costly information acquisition, occur when the stakes are raised or the sender gains flexibility (\cite{Whitmeyer2024,Delara2025}; see §\ref{sec:appl2:costly}--\ref{sec:appl2:twoside} below).
In case $u$ and $v$ are twice continuously differentiable, the former sufficient condition is equivalent to $u'' \cdot \abs*{v'} \leq v'' \cdot \abs*{u'}$, and the latter to $u'' \leq v''$.
For later reference, we summarise these findings in a corollary:

\begin{corollary}
	\label{corollary:CLC_suffsuff}
	For $u,v : [0,1] \to \R$,
	$u$ is coarsely less convex than $v$ whenever either
	(i)~$v = \phi \circ u$ for some convex and strictly increasing $\phi : \R \to \R \union \{\infty\}$ or
	(ii)~$v = u + \psi$ for some convex $\psi : [0,1] \to \R$.
\end{corollary}

We show in \cref{app:CLC_suff_tight} that \Cref{lemma:CLC_suff} is nearly tight, by giving a partial converse, as well as an exact (but more complicated) characterisation of coarse-convexity-increasing transformations $\Phi : \R \times [0,1] \to \R$.

The following result characterises `non-decreasing' comparative statics.

\begin{theorem}
	\label{theorem:nondecr}
	Let $u,v : [0,1] \to \R$ be upper semi-continuous.
	If $u$ is coarsely less convex than $v$,
	then for any distribution $F_0$,
	\begin{equation}
		\argmax_{\text{$F$ feasible given $F_0$}} \int u \dd F
		\notstrictlyhigherthan
		\argmax_{\text{$F$ feasible given $F_0$}} \int v \dd F .
		\label{eq:mcs}
		\tag{$\star$}
	\end{equation}
	Conversely, if \eqref{eq:mcs} holds for every distribution $F_0$,
	then $u$ must be coarsely less convex than $v$.
\end{theorem}

To interpret \eqref{eq:mcs}, consider the (generic) case in which the maximisers are unique:
\begin{equation*}
	\argmax_{\text{$F$ feasible given $F_0$}} \int u \dd F
	= \{G\}
	\quad \text{and} \quad
	\argmax_{\text{$F$ feasible given $F_0$}} \int v \dd F
	= \{H\} .
\end{equation*}
In this case, property \eqref{eq:mcs} reads `$G$ is not strictly more informative than $H$'. In other words, \emph{either} $G$ is less informative than $H$, \emph{or} $G$ is neither more nor less informative than $H$.

The proof is in \cref{app:pf_thm_nondecr}.
The second half (necessity) is straightforward.
For sufficiency, we show that if $u$ is coarsely less convex than $v$,
then $F \mapsto \int u \dd F$ is \emph{interval-dominated}
by $F \mapsto \int v \dd F$;
a standard comparative-statics theorem due to \textcite{QuahStrulovici2007extensions} then implies that \eqref{eq:mcs} must hold for every distribution $F_0$.

Our argument for interval dominance runs as follows.
What interval dominance demands is, roughly speaking, that if a distribution $G$ is optimal for $u$ given some prior $F_0$, then any less informative distribution $F$ is dis-preferred by $v$: $\int v \dd F \leq \int v \dd G$. We prove this in three cases of increasing generality.%
	\footnote{We thank Ian Jewitt for suggesting this tripartite argument.}
\emph{Case~1: $F$ is a point mass and $G$ is binary.} In this case, $\int v \dd F \leq \int v \dd G$ follows directly from $u$ being coarsely less convex than $v$.
\emph{Case~2: $F$ is a point mass and $G$ is arbitrary.} Write $m$ for the (common) mean of $F$ and $G$. $G$ is a weighted average of binary distributions with mean $m$, by Choquet's theorem and the fact that all extreme points of the space of mean-$m$ distributions are binary \parencite{Karr1983}. Hence $\int v \dd F \leq \int v \dd G$ by Case~1.
\emph{Case~3: both $F$ and $G$ are arbitrary.} By Blackwell's theorem, $G$ can be obtained from $F$ by first drawing a point $x$ from $F$, then applying an ($x$-contingent) mean-preserving spread. By Case~2, each of these mean-preserving spreads has non-negative expected value under $v$. Hence, integrating out $x$ under its distribution $F$, we have $\int v \dd F \leq \int v \dd G$.

The proof in \cref{app:pf_thm_nondecr} formalises the above argument, taking care of the measure-theoretic niceties and adding arguments to handle the gap between the rough definition of interval dominance given above and the (more involved) exact definition.


\section{\texorpdfstring{`}{‘}Increasing\texorpdfstring{'}{’} comparative statics}
\label{sec:mcs_incr}

In this section,
we ask what is required
for a shift of the sender's interim payoff
to lead her optimally to choose a \emph{more} informative distribution.
By \Cref{theorem:nondecr}, it is necessary that the payoff become coarsely more convex.

This condition is not sufficient
if all upper semi-continuous interim payoffs $u,v : [0,1] \to \R$
and all prior distributions $F_0$
are considered.
(We will see this explicitly §\ref{sec:mcs_incr:w} below, in a proof.)
Our question is thus:
on what restricted domain of interim payoffs $u,v$ and/or priors $F_0$
are coarse-convexity shifts \emph{sufficient}
for `increasing' comparative statics?

Our main result (\Cref{theorem:incr}) describes the maximal domain of interim payoffs on which `increasing' comparative statics holds.
Concretely, it identifies the condition on the interim payoff $u$ that is necessary and sufficient
for `increasing' comparative statics to hold under any prior $F_0$ between $u$ and any coarsely more convex $v$.
This condition is called the \emph{crater property.}

We also exhibit a suitable domain of priors.
A \emph{binary} prior is one with a two-point support;
under such a prior, the state is effectively binary.
We show (\Cref{proposition:ido_binary}) that for `increasing' comparative statics between payoffs $u$ and $v$ to hold
across all \emph{binary} priors $F_0$, it is both necessary and sufficient that $u$ be coarsely less convex than $v$.

The crater property is demanding. A key message of this section is therefore that comparative statics are often highly prior-sensitive. On the other hand, the crater property does often hold in applications, allowing comparative-statics conclusions to be drawn, as we show in §\ref{sec:appl1} and §\ref{sec:appl2} below.

Finally (§\ref{sec:mcs_incr:s}), we specialise our results to the three cases in which comparative statics have previously been obtained in the literature \parencite{KolotilinMylovanovZapechelnyuk2022,Yoder2022,GitmezMolavi2023}.

\subsection{Regularity and nowhere affineness}
\label{sec:mcs_incr:regularity}

We shall mostly restrict attention to well-behaved payoffs:

\begin{definition}
	\label{definition:regular}
	Call a function $u : [0,1] \to \R$ \emph{regular}
	iff
	(i)~$u$ is continuous and possesses a continuous and bounded derivative $u' : (0,1) \to \R$, and
	(ii)~$[0,1]$ may be partitioned into finitely many intervals,
	on each of which $u$ is either strictly convex or strictly concave.
\end{definition}

Part~(ii) of regularity rules out affine segments. This is merely for simplicity: we show in \cref{app:incr_affine} that our results below remain true (though with much longer proofs) when regularity is weakened by replacing part~(ii) with the requirement that on each of the finitely many intervals, $u$ is either strictly convex, strictly concave, \emph{or affine.} This weakening of property~(ii) is one of the maintained assumptions of \textcite{DworczakMartini2019}.

For a regular function $u : [0,1] \to \R$,
we extend the derivative $u' : (0,1) \to \R$
to a continuous map $[0,1] \to \R$
by letting $u'(0)$ be the right-hand derivative of $u$ at $0$
and $u'(1)$ the left-hand derivative at $1$.

\subsection{Maximal domain of interim payoffs}
\label{sec:mcs_incr:w}

The following property will be the key to comparative statics.

\begin{definition}
	\label{definition:crater}
	A regular function $u : [0,1] \to \R$ satisfies the
	\emph{crater property}
	if and only if
	for any $x < y < z < w$ in $[0,1]$
	such that $u$ is concave on $[x,y]$ and $[z,w]$ and convex on $[y,z]$,
	the tangents to $u$ at $x$ and at $w$
	cross at coordinates $(X,Y) \in \R^2$
	satisfying $Y \leq u(X)$.
\end{definition}

The property is illustrated in \Cref{fig:W}.
Loosely, it requires that any `valley' of $u$
be sufficiently steep-walled, wide, and shallow---like a crater.

\begin{figure}
	\centering
	\begin{subfigure}{0.48\textwidth}
		\centering
		\violationtrue
		\bigfalse
		\utrue
		\ptrue
		\Xtrue
		\vfalse
		\qfalse
		\input{tikz/W}
		\caption{A violation.}
		\label{fig:W:violationtrue}
	\end{subfigure}
	\begin{subfigure}{0.48\textwidth}
		\centering
		\violationfalse
		\bigfalse
		\utrue
		\ptrue
		\Xtrue
		\vfalse
		\qfalse
		\input{tikz/W}
		\caption{Not a violation.}
		\label{fig:W:violationfalse}
	\end{subfigure}
	\caption{Illustration of the crater property.}
	\label{fig:W}
\end{figure}

The crater property is demanding. It rules out multiple interior strict local maxima, for example. More strongly, it implies \emph{affine-closedness,} a property that characterises those interim payoffs for which a monotone-partitional signal is optimal whatever the prior \parencite[][Theorem~3]{DworczakMartini2019}.

Nevertheless, there are important classes of interim payoffs which satisfy the crater property.
Call a function $u : [0,1] \to \R$ \emph{S-shaped} iff it is continuous and, for some $x \in [0,1]$, convex on $[0,x]$ and concave on $[x,1]$.
Examples include the logistic function and all unimodal CDFs.
All regular S-shaped functions satisfy the crater property (vacuously).
The same is true of \emph{reverse-S-shaped} functions, meaning those $u : [0,1] \to \R$ such that $x \mapsto u(1-x)$ is S-shaped; for example, the logit and probit functions.
S-shaped interim payoffs $u$ are important in the literature, both because they arise naturally in applications (e.g.~§\ref{sec:appl1:coaxing} and §\ref{sec:appl1:voters} below) and because they permit a sharp characterisation of optimal posterior-mean distributions.%
	\footnote{In particular,
	`upper censorship' is optimal in this case \parencite[][p.~14]{Kolotilin2014}.}

More generally, the crater property is satisfied by all \emph{W-shaped} functions, meaning continuous functions that are convex on $[0,x]$ and on $[y,1]$ and concave on $[x,y]$, for some $x \leq y$ in $[0,1]$. W-shaped interim payoffs also arise naturally in applications, for example in §\ref{sec:appl1:convexcav} and §\ref{sec:appl2:twoside} below.

\begin{theorem}
	\label{theorem:incr}
	Let $u : [0,1] \to \R$ be regular.
	If $u$ satisfies the crater property,
	then for every regular $v : [0,1] \to \R$ that is coarsely more convex than $u$
	and every distribution $F_0$,
	\begin{equation}
		\argmax_{\text{$F$ feasible given $F_0$}} \int u \dd F
		\lowerthan
		\argmax_{\text{$F$ feasible given $F_0$}} \int v \dd F .
		\label{eq:mcs_incr}
		\tag{$\star\star$}
	\end{equation}
	Conversely, if \eqref{eq:mcs_incr} holds
	for every regular $v$ that is coarsely more convex than $u$
	and every distribution $F_0$,
	then $u$ satisfies the crater property.
\end{theorem}

In short, the crater property is necessary and sufficient for coarse-convexity shifts to yield `increasing' comparative statics. Since the crater property is demanding, this may be viewed as a negative result: comparative statics is prior-sensitive, so that conclusions often cannot be drawn robustly across all priors $F_0$.
On the other hand, the crater property does hold in several applications, and in such cases \Cref{theorem:incr} delivers comparative statics. We treat several such applications in §\ref{sec:appl1} and §\ref{sec:appl2} below.

We show in \cref{app:incr_affine} that regularity may be weakened to allow affine segments, at the cost of a longer proof. \Cref{theorem:incr} also remains true if only `nice' priors $F_0$ are considered; in particular, the proof below of the converse (necessity) part uses only atomless convex-support priors $F_0$, and can easily be tweaked to focus on e.g. full-support or finite-support priors.

\begin{proof}[Proof of the converse (necessity) part]
	Suppose that $u$ is regular and violates the crater property;
	we shall find a regular and coarsely more convex $v : [0,1] \to \R$
	and a distribution $F_0$ such that \eqref{eq:mcs_incr} fails.

	\begin{SCfigure}
		\violationtrue
		\bigtrue
		\utrue
		\ptrue
		\Xtrue
		\vtrue
		\qfalse
		\input{tikz/W}
		\caption{Proof of the converse part of \Cref{theorem:incr}.}
		\label{fig:Wproof}
	\end{SCfigure}

	Since $u$ violates the crater property
	(refer to \Cref{fig:Wproof}),
	there are $x' < x < y < z < w < w'$ in $[0,1]$
	such that $u$ is strictly concave on $[x',y]$ and $[z,w']$ and strictly convex on $[y,z]$,
	and there is a convex function $p : [0,1] \to \R$
	and an $X \in (x,w)$
	such that
	$p$ is affine on $[x',X]$ and on $[X,w']$,
	weakly exceeds $u$ on $[x',w']$,
	strictly exceeds $u$ at $X$,
	and is tangent to $u$ at $x$ and at $w$.
	Let $F_0$
	be a distribution that is atomless
	with support $[x',w']$, and
	\begin{equation*}
		\frac{1}{F_0(X)} \int_0^X \xi F_0(\dd \xi) = x
		\quad \text{and} \quad
		\frac{1}{1-F_0(X)} \int_X^1 \xi F_0(\dd \xi) = w .
	\end{equation*}

	Since $u'$ is bounded,
	we may choose a regular $v : [0,1] \to \R$
	that coincides with $u$ on $[X,1]$
	and that weakly exceeds $u$ and is strictly convex on $[0,X]$
	(refer to \Cref{fig:Wproof}).
	It is easily seen that $v$ is coarsely more convex than $u$.

	As $v$ is S-shaped,
	an `upper censorship' distribution $G$ is optimal
	by Kolotilin's (\citeyear{Kolotilin2014}, p.~14) well-known result:
	for $a \in (0,1)$ satisfying
	\begin{equation*}
		\frac{ v(b) - v(a) }{ b-a } = v'(b) ,
		\quad \text{where} \quad
		b \coloneqq \frac{1}{1-F_0(a)} \int_a^1 \xi F_0( \dd \xi ) ,
	\end{equation*}
	this distribution $G$ fully reveals $[0,a)$ and pools $[a,1]$.%
		\footnote{Explicitly,
		$G = F_0$ on $[0,a)$,
		$G = F_0(a)$ on $[a,b)$
		and $G = 1$ on $[b,1]$.
		$G$ is optimal since for any distribution $H$ feasible given $F_0$,
		letting $q : [0,1] \to \R$ match $v$ on $[0,a]$ and match $x \mapsto v(a) + (x-a)v'(b)$ on $[a,1]$,
		$\int v \dd G
		= \int q \dd G
		= \int q \dd F_0
		\geq \int q \dd H
		\geq \int v \dd H$,
		where the steps hold because, respectively,
		$v=q$ $G$-a.e.,
		$q$ is affine on $[a,1]$,
		$q$ is convex and $H$ is feasible given $F_0$,
		and $q \geq v$.}
	A simple graphical argument shows that
	$a$ must be strictly smaller than $X$.%
		\footnote{\label{footnote:a_less_X}We have $b < w$,
		since $b \geq w$ would imply both $a < X$ (for tangency, as $p>u=v$ at $X$)
		and $a \geq X$ (as $b$ is the mean conditional on the event $[a,1]$).
		Then since 		
		$b$ ($w$) equals the mean conditional on the event $[a,1]$ ($[X,1]$),
		we must have $a<X$.}
	Thus the optimal distribution $G$ 
	pools some states to the left of $X$
	with states to its right.

	For the payoff $u$, however,
	it is strictly sub-optimal to pool states on either side of $X$ together.
	In particular, the distribution $F$ that reveals (only) whether the state exceeds $X$ is strictly better than any distribution that pools states on either side of $X$ together, because $p$ is kinked at $X$.%
		\footnote{\label{footnote:nopool}Explicitly,
		$F = 0$ on $[0,x)$, $F = F_0(X)$ on $[x,w)$ and $F=1$ on $[w,1]$.
		$F$ is strictly better than any distribution $H$ that is feasible given $F_0$ and pools states on either side of $X$ together ($\int_0^X H < \int_0^X F_0$) since
		$\int u \dd F
		= \int p \dd F
		= \int p \dd F_0
		> \int p \dd H
		\geq \int u \dd H$,
		where the first (second) equality holds because $u=p$ $F$-a.e. (because $p$ is affine on $[0,X]$ and on $[X,1]$), and the weak inequality holds since $p \geq u$. The strict inequality holds because $p$ is kinked at $X$; in detail,
		$\int p \dd F_0
		= p(w') - \int_0^{w'} p' F_0
		= p(w')
		- p'(w) \int F_0
		+ [ p'(w) - p'(x) ] \int_0^X F_0
		> p(w')
		- p'(w) \int H
		+ [ p'(w) - p'(x) ] \int_0^X H
		= \int p \dd H$,
		where the equalities follow from integration by parts \parencite[e.g.][Theorem~18.4]{Billingsley1995} and the affineness of $p$ on $[0,X]$ and on $[X,1]$, and the strict inequality holds since $H$ is feasible given $F_0$ (so $\int H = \int F_0$), $p$ is convex with a kink at $X$ (so $p'(w) > p'(w)$), and $H$ pools states on either side of $X$ ($\int_0^X H < \int_0^X F_0$).}

	Thus \eqref{eq:mcs_incr} fails:
	no distribution optimal for $u$ given $F_0$
	is less informative than $G$,
	since the latter pools across $X$
	while the former do not.
\end{proof}

The first (sufficiency) part of \Cref{theorem:incr} is proved in \cref{app:pf_thm_incr}; we give a sketch below. The argument makes no use of general-purpose `increasing' comparative-statics results \parencite[e.g.][]{QuahStrulovici2007extensions,QuahStrulovici2009}, because such results are inapplicable to the persuasion model except in trivial cases, as discussed in §\ref{sec:intro:lit_mcs} above.
Instead, we proceed from first principles, exploiting the particular structure of the persuasion model, via the dual \parencite[see][]{DworczakMartini2019}.

\begin{proof}[Sketch proof of the first (sufficiency) part]
	Fix a regular $u : [0,1] \to \R$, a coarsely more convex and regular $v : [0,1] \to \R$, and a distribution $F_0$. Assume for simplicity that $F_0$ is atomless with convex support; this (together with the crater property) turns out to imply that there is a unique distribution $F$ that is optimal for $u$ given $F_0$. Fix any distribution $G$ that is optimal for $v$ given $F_0$; we must show that $G$ is more informative than $F$. This is immediate if $G$ is fully informative ($G=F_0$), so suppose not; then there is an interval $(x,z)$ of states which are not fully revealed by $G$ (formally, $\int_x^y G < \int_x^y F_0$ for every $y \in (x,z)$). Fix any \emph{maximal} such interval $(x,z)$; it suffices to show that the distribution $F$ fully pools the states $(x,z)$ (formally, $F$ is constant on $(x,z)$ except for one jump).

	It cannot be that $v$ is convex on $(x,z)$, since otherwise $G$ could be strictly improved (for $v$ given $F_0$) by moving probability mass `outward' (to neighbourhoods of $x$ and of $z$). Hence $u$ is not convex on $(x,z)$, either, since it is coarsely less convex than $v$.
	Thus the distribution $F$ pools states in an interval that overlaps with $(x,z)$. By \Cref{theorem:nondecr}, there must be at least one maximal such interval, call it $(x',z')$, which 
	satisfies
	either $(x',z') \supseteq (x,z)$
	or $(x',z') \nsubseteq (x,z) \nsubseteq (x',z')$.
	It remains only to rule out the latter possibility.

	So suppose toward a contradiction that $x' < x$ and $z' < z$ (we omit the symmetric argument for the case in which $x < x'$ and $z < z'$). Then
	\begin{equation*}
		y' \coloneqq \frac{1}{F_0(z')-F_0(x')} \int_{x'}^{z'} \xi F_0(\dd \xi)
		<
		\frac{1}{F_0(z)-F_0(x)} \int_x^z \xi F_0(\dd \xi) .
	\end{equation*}
	Let $w \coloneqq \max\{y',x\}$, and choose a $y \in (w,z') \intersect \supp(G)$.
	Using the crater property, it can be shown that $u$ must be reverse-S-shaped on $[y,z']$.
	(This is the key step, formalised in \cref{app:pf_thm_incr} as \Cref{lemma:crater}.)
	Then on the interval $(w,z')$, $u$ lies strictly below the (unique) affine function that intersects it at both $w$ and $z'$. The same cannot be true of $v$, since then $G$ could be strictly improved (for $v$ given $F_0$) by moving probability mass from $y$ `outward' to neighbourhoods of $w$ and of $z'$. But then $u$ fails to be coarsely less convex than $v$---a contradiction.
\end{proof}

\begin{remark}
	\label{remark:crater_local}
	The crater property is local in character: it can be checked by separately inspecting each maximal interval $[x,w]$ on which $u$ is concave--convex--concave.
	This is noteworthy since it contrasts with the global character of the persuasion problem, in which a change of $u$ on an interval $I \subseteq [0,1]$ can impact optimal information-provision about states far from $I$.
\end{remark}

\begin{remark}
	\label{remark:crater_clc}
	The crater property is not preserved by coarse-convexity shifts: for regular $u,v : [0,1] \to \R$, if $u$ satisfies the crater property and is coarsely less convex than $v$, it need \emph{not} be that $v$ satisfies the crater property.
\end{remark}

\subsection{The domain of binary priors}
\label{sec:mcs_incr:binary}

Call a distribution $F$ \emph{binary} iff its support comprises at most two values:
$F = p \1_{[x,1]} + (1-p) \1_{[y,1]}$ for some $p,x,y \in [0,1]$.
When the prior distribution $F_0$ is binary,
the persuasion model is equivalent to
a simpler model in which there are just \emph{two} states,
and the sender's interim payoff at posterior belief $(q,1-q)$ is $u(q)$, for some upper semi-continuous function $u : [0,1] \to \R$.

\begin{proposition}
	\label{proposition:ido_binary}
	Let $u,v : [0,1] \to \R$ be upper semi-continuous.
	If $u$ is coarsely less convex than $v$,
	then for any binary distribution $F_0$,
	\begin{equation}
		\argmax_{\text{$F$ feasible given $F_0$}} \int u \dd F
		\lowerthan
		\argmax_{\text{$F$ feasible given $F_0$}} \int v \dd F .
		\label{eq:mcs_incr_binary}
		\tag{$\star\star$}
	\end{equation}
	Conversely, if \eqref{eq:mcs_incr_binary} holds for every binary distribution $F_0$,
	then $u$ must be coarsely less convex than $v$.
\end{proposition}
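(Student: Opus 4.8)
The plan follows the text's hint: a binary prior reduces this to a two-state problem, which I would attack directly via concave envelopes. Fix a binary prior $F_0$ with support $\{x,y\}$, $x<y$, and mean $m_0$; if $m_0\in\{x,y\}$ the feasible set is a singleton and the claim is trivial, so take $m_0\in(x,y)$. The mean-preserving contractions of $F_0$ are exactly the distributions on $[x,y]$ with mean $m_0$ (on $(x,y)$ the constraint $\int_0^r F\le\int_0^r F_0$ just says that the convex function $r\mapsto\int_0^r F$ lies below the chord between its endpoint values, which is automatic). Write $g\coloneqq\widehat{u}^{[x,y]}$; by the usual Jensen argument, using $u\le g$ and concavity of $g$, a feasible $F$ is $u$-optimal iff $\supp F\subseteq C_u\coloneqq\{t\in[x,y]:u(t)=g(t)\}$ and $g$ is affine on $\co(\supp F)$, and since $m_0\in\co(\supp F)$ this means precisely that $\supp F\subseteq K_u\coloneqq C_u\cap I_u$ with mean $m_0$, where $I_u$ is the maximal interval containing $m_0$ on which $g$ is affine. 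If $m_0$ is an endpoint of $I_u$, then $\delta_{m_0}$ is the unique $u$-optimal distribution and \eqref{eq:mcs_incr_binary} is immediate; so assume $I_u=[a_u,b_u]$ with $a_u<m_0<b_u$. Define $h\coloneqq\widehat v^{[x,y]}$, $C_v$, $I_v=[a_v,b_v]$, $K_v$ analogously. (Call a function \emph{sub-chordal} on an interval if it lies weakly below the chord joining its endpoint values there, and \emph{strictly} sub-chordal if strictly so on the interior---this is the content of \eqref{eq:C_u}.)

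Assume $u$ is ordinally less convex than $v$. The first half of \eqref{eq:mcs_incr_binary} follows once I show $I_u\subseteq I_v$: then each $u$-optimal $F$ is supported on $K_u\subseteq[a_v,b_v]$ with mean $m_0$, hence is a mean-preserving contraction of the two-point distribution on $\{a_v,b_v\}$ with mean $m_0$, which is $v$-optimal. To get $I_u\subseteq I_v$: the endpoints $a_u,b_u$ of the affine piece $I_u$ of the concave envelope $g$ lie in $C_u$, so $u\le g=(\text{chord of }u\text{ over }[a_u,b_u])$ there, i.e.\ $u$ is sub-chordal on $[a_u,b_u]$; hence so is $v$. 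Suppose $a_v>a_u$; then $a_v\in(a_u,b_u)$, and since $\{h>v\}$ is open with $h$ affine on each of its components, $a_v\notin\{h>v\}$ (otherwise $h$ would be affine on an interval containing both $[a_v,b_v]$ and a left-neighbourhood of $a_v$, hence on an interval properly containing $I_v$, a contradiction), so $h(a_v)=v(a_v)$. Sub-chordality of $v$ bounds $v(a_v)$ above by the chord of $v$ over $[a_u,b_u]$ at $a_v$, while concavity of $h$ together with $h\ge v$ puts $h$ weakly above that same chord throughout $[a_u,b_u]$; thus $h$ touches an affine minorant at the interior point $a_v$ while dominating it, forcing $h$ to coincide with it on all of $[a_u,b_u]\ni m_0$---contradicting $a_v>a_u$. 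Symmetrically $b_v\ge b_u$.

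For the second half, if $m_0\in C_u$ then $\delta_{m_0}$ is $u$-optimal and less informative than every feasible distribution, and we are done. If $m_0\notin C_u$, set $s\coloneqq\max\{q\in K_u:q<m_0\}$ and $t\coloneqq\min\{q\in K_u:q>m_0\}$ (these exist because $K_u$ is closed and contains $a_u,b_u$, and they lie in $I_u$). Then $(s,t)\cap C_u=\varnothing$, while $g$ is affine on $[s,t]$ and equals $u$ at $s$ and at $t$, so $u$ is strictly sub-chordal on $[s,t]$; the strictness clause of ordinal convexity then makes $v$ strictly sub-chordal on $[s,t]$, whence $h>v$ on $(s,t)$ and so $(s,t)\cap C_v=\varnothing$. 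Hence any $v$-optimal $G$, being supported on $K_v\subseteq C_v$, places no mass on $(s,t)$, so $r\mapsto\int_0^r G$ is affine on $[s,t]$; comparing it with the piecewise-affine map $r\mapsto\int_0^r F^\ast$, where $F^\ast\coloneqq\tfrac{t-m_0}{t-s}\delta_s+\tfrac{m_0-s}{t-s}\delta_t$, at $r=s$, $r=t$ and $r=1$ shows $\int_0^r F^\ast\le\int_0^r G$ for every $r$, i.e.\ $F^\ast$ is less informative than $G$; and $F^\ast$ is $u$-optimal since $\{s,t\}\subseteq K_u$. This settles the ``if'' direction.

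For the converse, suppose $u$ is not ordinally less convex than $v$, witnessed by $x<y$ on which $u$ is sub-chordal, so that $g=\widehat u^{[x,y]}$ is the chord of $u$ over $[x,y]$. If $v$ is not sub-chordal on $[x,y]$, pick $m^\ast\in(x,y)$ at which $v$ lies strictly above its chord; for the binary prior on $\{x,y\}$ with mean $m^\ast$ the fully revealing $F_0$ is $u$-optimal but, since $\widehat v^{[x,y]}(m^\ast)\ge v(m^\ast)>\int v\,\dd F_0$, not $v$-optimal, and as $F_0$ is the most informative feasible distribution, \eqref{eq:mcs_incr_binary} fails. Otherwise $v$ is sub-chordal on $[x,y]$ but there is $m^\ast\in(x,y)$ with $u(m^\ast)$ strictly below, and $v(m^\ast)$ equal to, the respective chords; for the binary prior on $\{x,y\}$ with mean $m^\ast$ the point mass $\delta_{m^\ast}$ is $v$-optimal but not $u$-optimal, and as it is the least informative feasible distribution, \eqref{eq:mcs_incr_binary} again fails. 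I expect the inclusion $I_u\subseteq I_v$ in the second paragraph to be the main obstacle: there the pointwise strictness-preservation in the definition of ordinal convexity contributes little, and one must instead lean on the geometry of concave envelopes (openness of $\{h>v\}$ and the rigidity of a concave function that meets an affine minorant from above at an interior point); the remaining ingredients---the characterisation of optimal distributions, the form of $F^\ast$, and the three-point comparison of integrated CDFs---are routine.
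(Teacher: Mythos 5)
Your proof is correct and takes essentially the same route as the paper's: both reduce to the concave-envelope characterization of optimal distributions, show that the affine interval of the concave envelope containing the prior mean is nested ($I_u \subseteq I_v$, the paper's $[x,w]\subseteq[x',w']$), and compare the two-point distribution on the interior contact points (your $\{s,t\}$, the paper's $\{y,z\}$) against the optimal sets, with the strictness clause of ordinal convexity doing the work for that lower comparison (the paper packages it as the claim $\mathcal V\cap[x,w]\subseteq\mathcal U$, you argue it directly via $(s,t)\cap C_v=\varnothing$). Your converse matches the construction in the paper's appendix B.1.
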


Thus restricting attention to binary priors obviates the need for the crater property, or indeed for any condition at all on $u$.
The proof (\cref{app:pf_thm_nondecr_binary}) is straightforward: the first part follows from inspection of the concave envelopes of $u$ and $v$ \parencite[à la][]{KamenicaGentzkow2011}, while the second (converse) part follows from a simple construction.

\subsection{Special cases}
\label{sec:mcs_incr:s}

We now relate our comparative-statics results to those of \textcite{KolotilinMylovanovZapechelnyuk2022,Yoder2022,GitmezMolavi2023}. The first paper's Proposition~1 assumes that $u$ and $v$ are S-shaped and that $u$ is less convex than $v$ in the conventional sense ($v = \phi \circ u$ for some convex and strictly increasing $\phi : \R \to \R \union \{\infty\}$).%
	\footnote{The authors' \emph{proof} (p.~580) identifies and then verifies a sufficient condition for comparative statics in the special case of S-shaped $u$ and $v$.
	This condition can be shown to be equivalent, in (only) that S-shaped special case, to $u$ being coarsely less convex than $v$.}
\Cref{theorem:incr} shows that S-shapedness of $v$ is superfluous, that S-shapedness of $u$ can be weakened to W-shapedness (or, more generally, the crater property), and that $u$ need only be \emph{coarsely} less convex than $v$, which admits e.g. convexity of $v-u$ as an alternative sufficient condition.
Similarly for these authors' Proposition~2.


Suppose that $u$ and $v$ are regular, that $u$ is S-shaped, and that $u'$ is more convex than $v'$ in the conventional sense: $u' = \phi \circ v'$ for some convex and strictly increasing $\phi : \R \to \R \union \{\infty\}$.
A slight extension of Gitmez and Molavi's (\citeyear{GitmezMolavi2023}, appendix~A) core argument shows that under these hypotheses, $u$ is coarsely less convex than $v$.%
	\footnote{Clearly $v$ is also S-shaped, with the same inflection point $\bar{x}$.
	For any $x \in [0,1)$, define $R_u^x : [x,1] \to \R$ by $R_u^x(y) \coloneqq [u(y)-u(x)]/(y-x)$ for each $y \in (x,1]$ and $R_u^x(x) \coloneqq \lim_{y \downarrow x} R_u^x(y)$.
	For any $y \in (x,1]$, since $u$ is S-shaped,
	$R_u^x$ is increasing on $[x,y]$ iff $R_u^x$ is strictly increasing on $[x,y]$ iff $u(\alpha x + (1-\alpha) y) \leq \alpha u(x) + (1-\alpha) u(y)$ for every $\alpha \in [0,1]$ iff $u(\alpha x + (1-\alpha) y) < \alpha u(x) + (1-\alpha) u(y)$ for every $\alpha \in (0,1)$.
	The same applies to $R_v^x$, analogously defined.
	What must be shown is therefore that for any $x<y$ in $[0,1]$, if $R_u^x$ is increasing on $[x,y]$, then so is $R_v^x$.
	So fix any $x<y$ in $[0,1]$.
	Since $R_u^x$ and $R_v^x$ are strictly quasi-concave, it suffices to show that their respective maximisers $z$ and $w$ satisfy $z \leq w$.
	This is immediate if $w=1$, so assume that $w<1$.
	The first-order conditions are $R_u^x(z) \leq u'(z)$, with equality if $z < 1$, and $R_v^x(w) = v'(w)$.
	Thus since $z,w \in [\bar{x},1]$, $z \leq w$ holds iff $R_u^x(w) \geq u'(w)$. And indeed
	$R_u^x(w)
	= (w-x)^{-1} \int_x^w \phi \circ v'
	\geq \phi( (w-x)^{-1} \int_x^w v' )
	= \phi( R_v^x(w) )
	= \phi( v'(w) )
	= u'(w)$
	by Jensen's inequality, since $\phi$ is convex.}
Thus by \Cref{theorem:incr}, less information is provided under $u$ than under $v$, whatever the prior.
By symmetry, the same is true if $u$ is \emph{reverse-}S-shaped and $u'$ is \emph{less} convex than $v'$.
These findings generalise the main result of \textcite{GitmezMolavi2023}, which draws the same conclusion under the additional assumption that the prior is \emph{binary.}

Finally, \textcite{Yoder2022} likewise restricts attention to binary priors, and assumes that $v-u$ is convex. This is a special case of \Cref{proposition:ido_binary}.

\section{Applications}
\label{sec:appl1}

In this section, we apply our theorems to various economic environments.

In most applications, the sender's interim payoff $u$ arises from a \emph{receiver} choosing an action at the interim stage, informed by the realisation of the signal chosen by the sender.
The shape of the reduced-form interim payoff $u$ is then determined by the nature of the conflict of interest between the sender and receiver. Motivated by this, we begin (in the next section) by identifying when a closer alignment of interests makes $u$ coarsely more convex.

In the remainder, we apply our results to derive comparative statics for the problems of persuading a privately informed receiver \parencite{KolotilinEtal2017}, persuading voters \parencite[à la][]{AlonsoCamara2016aer}, and discretionary delegation \parencite[e.g.][]{Xu2024}. Further applications are deferred to §\ref{sec:appl2} below.

\subsection{Alignment and coarse convexity}
\label{sec:appl1:align}

In this section, we ask whether and when an increased alignment of interests between the sender and receiver translates into coarse-convexity shifts of the sender's reduced-form interim payoff $u$.

Recall the sender--receiver interpretation from §\ref{sec:model:interpretation}.
There is a non-empty set $\mathcal{A}$ of actions, and the sender's and receiver's interim payoffs $U_S(a,m)$ and $U_R(a,m)$ depend on the chosen action $a \in \mathcal{A}$ and on the mean $m \in [0,1]$ of their (posterior) belief about the state.
When the posterior mean is $m \in [0,1]$, the receiver chooses action $A(m) \in \mathcal{A}$, so the sender's reduced-form interim payoff is $u(m) \coloneqq U_S( A(m), m )$.
We assume that $A : [0,1] \to \mathcal{A}$ is \emph{$U_R$-optimal,} i.e. a selection from the correspondence $m \mapsto \argmax_{a \in \mathcal{A}} U_R(a,m)$.

We consider shifts of the sender's interim payoff from $(a,m) \mapsto U_S(a,m)$ to $(a,m) \mapsto \Phi\bigl( U_S(a,m), U_R(a,m), m \bigr)$, where $\Phi : \R^2 \times [0,1] \to \R$ is strictly increasing in its first argument---that is, $\Phi$ is a \emph{utility transformation.}
We are interested in \emph{alignment-increasing} utility transformations $\Phi$, meaning those that are increasing in their second argument (the receiver's payoff).

\begin{proposition}
	\label{proposition:alignment}
	Let $\Phi : \R^2 \times [0,1] \rightarrow \R$ be convex with $\Phi(\cdot,\ell,x)$ strictly increasing and $\Phi(k,\cdot,x)$ increasing for all $k,\ell \in \R$ and $x \in [0,1]$.
	Then for any action set $\mathcal{A}$, any sender's and receiver's payoffs $U_S,U_R : \mathcal{A} \times [0,1] \rightarrow \R$, and any $U_R$-optimal $A : [0,1] \to \R$,
	the map $x \mapsto U_S(A(x),x)$
	is coarsely less convex than
	the map $x \mapsto \Phi(U_S(A(x),x),U_R(A(x),x),x)$.
\end{proposition}

In words, applying a \emph{convex} alignment-increasing utility transformation $\Phi$ to the sender's payoff $U_S$ always makes her reduced-form interim payoff $u$ coarsely more convex.
Convexity is satisfied by many natural alignment-increasing utility transformations, such as $(k,\ell,x) \mapsto k + \rho \ell$ for $\rho \geq 0$.

The proof is in \cref{app:pf_alignment}.
The convexity-of-$\Phi$ hypothesis is essential, indeed nearly necessary: \Cref{proposition:alignment} has a partial converse similar to that of \Cref{lemma:CLC_suff} (see \cref{app:CLC_suff_tight}).
It is therefore not \emph{generally} true that increased alignment of interests leads to coarse-convexity shifts.

\begin{example}
	\label{example:align_concave}
	Consider the alignment-increasing utility transformation $\Phi$ defined by $\Phi(k,\ell,x) \coloneqq k + \phi(\ell)$ for all $k,\ell \in \R$ and $x \in [0,1]$, where $\phi : \R \to \R$ is strictly increasing.
	It is natural for $\phi$ to be concave, as this captures inequality-aversion in the sender's evaluation of (distributions of interim) receiver welfare.
	But when $\phi$ is concave and not convex,
	$x \mapsto U_S(A(x),x)$
	fails to be coarsely less convex than
	$x \mapsto \Phi(U_S(A(x),x),U_R(A(x),x),x)$
	for some $U_S,U_R : \mathcal{A} \times [0,1] \to \R$ and some $U_R$-optimal $A : [0,1] \to \mathcal{A}$.%
		\footnote{In particular, for any $U_R$ and $U_R$-optimal $A$ such that the (convex) function $x \mapsto U_R(A(x),x)$ is less convex than and not more convex than $\phi^{-1}$ in the conventional sense, the map $x \mapsto U_S(A(x),x) - \Phi(U_S(A(x),x),U_R(A(x),x),x) = -\phi(U_R(A(x),x))$ is convex and not concave, so by \Cref{corollary:CLC_suffsuff} (\cpageref{corollary:CLC_suffsuff}), $x \mapsto U_S(A(x),x)$ is coarsely more convex than and not coarsely less convex than $x \mapsto \Phi(U_S(A(x),x),U_R(A(x),x),x)$.}
\end{example}

\subsection{Persuading a privately informed receiver}
\label{sec:appl1:coaxing}

In the model of \textcite{KolotilinEtal2017}, the receiver chooses whether to participate ($a=1$) or not ($a=0$). Participation may mean purchasing a good (at a fixed price), for example.

The receiver's inside option (i.e. her payoff from participating) is uncertain, with distribution $F_0$. Her outside option is privately known to her; from the sender's perspective, it is a random variable that is statistically independent of the inside option, with a distribution denoted by $G$.
The sender values participation: her payoff is $1$ if the receiver participates, and $0$ otherwise.

The sender chooses a signal.
No generality is lost by ruling out screening mechanisms that offer a \emph{menu} of signals, even though the receiver has private information \parencite[Theorem~1]{KolotilinEtal2017}.

At the interim stage, the receiver participates iff $r \leq m$, where $r$ is her outside option and $m \in [0,1]$ is the mean of her posterior belief about the inside option. The sender's interim expected payoff is thus $u(m) \coloneqq G(m)$ when the posterior mean is $m \in [0,1]$.
The function $u : [0,1] \to \R$ is S-shaped if the outside-option distribution $G$ is unimodal.

Since a monotone-likelihood-ratio-higher distribution is exactly one that is more convex, \Cref{theorem:incr} implies that the sender optimally provides more information whenever the outside-option distribution shifts from a unimodal $G$ to a monotone-likelihood-ratio-higher distribution $H$.
This result, due to \textcite[][§4.2]{KolotilinMylovanovZapechelnyuk2022}, may be refined using our theorems.
The shift from $G$ to $H$ can be more general: assuming for simplicity that $G,H$ admit densities $g,h$,
it suffices e.g.
for $h-g$ to be increasing (by \Cref{corollary:CLC_suffsuff}, \cpageref{corollary:CLC_suffsuff}) or
for $G$ to be less diffuse than $H$ in the sense of having a more convex density (see §\ref{sec:mcs_incr:s} above). Furthermore, unimodality may be weakened to W-shapedness.

Applying \Cref{proposition:alignment} and \Cref{theorem:incr} yields that given unimodality, any \emph{convex} increase of alignment leads the sender to provide more information. An example is when the sender's interim payoff shifts from $G$ to $m \mapsto G(m) + \phi( W(m) )$, where $\phi : \R \to \R$ is increasing and convex, and $W(m) \coloneqq \int \max\{r,m\} G(\dd r)$ denotes the receiver's interim expected payoff (not conditioned on her realised outside option). This example nests Kolotilin, Mylovanov and Zapechelnyuk's (\citeyear{KolotilinMylovanovZapechelnyuk2022}) Proposition~3(i), in which $\phi$ is assumed affine.
Increases of alignment that are \emph{not} convex may not produce comparative statics: if $\phi$ is concave and not convex, then increased alignment may lead to \emph{strictly less} information-provision, by \Cref{example:align_concave} and \Cref{theorem:nondecr}.

We may alternatively interpret this model as having a \emph{population} of receivers whose outside options are heterogeneous, with cross-sectional distribution $G$. In this case, alignment should be defined in terms of individual receivers' payoffs $\max\{r,m\}$ rather than the average payoff $W(m)$. When alignment increases in the sense that the sender's payoff from (non-)participation changes from $1$ ($0$) to $1 + \phi(\max\{r,m\})$ ($0 + \phi(\max\{r,m\})$), where $\phi : \R \to \R$ is increasing, the sender's interim payoff shifts from $u=G$ to $v=G+\psi$, where $\psi(x) \coloneqq \int \phi(\max\{r,x\}) G(\dd r)$ for each $x \in [0,1]$. 
This is a coarse-convexity shift by \Cref{corollary:CLC_suffsuff} provided $\phi$ is `not too concave' in the sense that $\phi''/\phi' \geq -g/G$, since then $\psi$ is convex.%
	\footnote{This is an instance of Proposition~1 in \textcite{CurelloSinanderWhitmeyer2025}.}
Then given unimodality of $G$, the sender optimally provides more information by \Cref{theorem:incr}.

\subsection{Persuading voters}
\label{sec:appl1:voters}

Consider a generalisation of the previous section's model featuring $n \in \N$ receivers, who each cast a vote (`yes' or `no'). The receivers (collectively) participate iff at least $k \in \N$ of them voted `yes', where $k \leq n$. The inside option is the same for all receivers, but outside options differ: from the sender's perspective, they are independent draws from a distribution $G$.

We restrict the sender to choosing a \emph{public} signal, so that all receivers are symmetrically informed ex interim. It remains weakly dominant for each receiver to vote for participation whenever her outside option is less than the mean $m$ of her posterior belief about the inside option.
The sender's interim payoff at posterior mean $m \in [0,1]$ is therefore $u(m) \coloneqq G^{k:n}(m)$, where $G^{k:n}$ denotes the distribution of the $k^\text{th}$-lowest of $n$ independent draws from $G$.

This model is like that of \textcite{AlonsoCamara2016aer}, except that voters' preferences are not observed by the sender, and depend only on the mean. \textcite{SunSchramSloof2024} study a slight generalisation of this model.

If $G$ admits a strictly log-concave and differentiable density,
then the sender optimally provides more information
(i)~when the outside-option distribution improves in the monotone-likelihood-ratio sense,
(ii)~when the voting threshold $k$ rises,
(iii)~when the size $n$ of the electorate falls,
and (iv)~when both $n$ and $k$ increase by an equal amount.
To see why, observe that in each of these cases,
$G^{k:n}$ improves in the monotone-likelihood-ratio sense,%
	\footnote{By Corollary~1.C.34 and Theorem~1.C.31 in \textcite{ShakedShanthikumar2007}.}
which by \Cref{corollary:CLC_suffsuff} (\cpageref{corollary:CLC_suffsuff}) implies that the sender's interim payoff $u$ becomes coarsely more convex.
Furthermore, $G^{k:n}$ admits a strictly log-concave density since $G$ does; hence $G^{k:n}$ is unimodal, so $u$ is S-shaped and thus satisfies the crater property.
\Cref{theorem:incr} is therefore applicable.

These findings may be summarised in terms of two forces. First, the sender's incentive to provide information sharpens when securing participation becomes harder, whether because (i)~voters' outside options become (likely to be) more attractive or because (ii)~more `yes' votes are required. Second, the sender's incentive to inform is sharper when (iii)~faced with a smaller electorate. Finding~(iv) gauges the relative strength of these two forces, showing that on the margin, the number $k$ of required `yes' votes matters more than the size $n$ of the electorate.

These results may be generalised to allow ex-ante heterogeneity, so long as the receivers $i \in \{1,\dots,n\}$ are ordered: for all $i<j$, $i$'s outside-option distribution $G_i$ is worse in the monotone-likelihood-ratio sense than $j$'s distribution $G_j$.
The exact same argument applies.

\subsection{Discretionary delegation}
\label{sec:appl1:convexcav}

Decision rights are often not set in stone, but instead granted or withdrawn as circumstances dictate. Delegating decision-making to an agent is principal-optimal only when its efficiency benefit (the agent has additional decision-relevant information, or a lower cost of action) outweighs its agency cost (the agent's preferences over actions are imperfectly aligned with the principal's), and this balance depends on the available information.

To study this trade-off, consider a simple reduced-form model.%
	\footnote{We use the notation of the sender--receiver interpretation from §\ref{sec:model:interpretation} (and §\ref{sec:appl1:align}).}
After the realisation of the principal's (sender's) chosen signal is publicly observed, inducing some posterior mean $m \in [0,1]$, the principal chooses whether or not to delegate. Her interim payoff is $f(m) \coloneqq \sup_{a \in \mathcal{A}} U_S(a,m)$ if she does not delegate and $g(m) \coloneqq B(m) + U_S(A(m),m)$ if she delegates, where $A$ is $U_R$-optimal and $B \geq 0$ captures the efficiency benefit of delegating, arising e.g. from a cost saving or from information available only to the agent.
We assume for simplicity that preferences are sufficiently misaligned that $g$ is concave, and that the action set $\mathcal{A}$ is rich enough that $f$ is \emph{strictly} convex. The principal's interim payoff is $u \coloneqq \max\{f,g\}$, depicted in \Cref{fig:convexcav:u}.

\begin{figure}
	\centering
	\begin{subfigure}{0.48\textwidth}
		\centering
		\interpolfalse







\begin{tikzpicture}[scale=1, line cap=round]

	\pgfmathsetmacro{\ticklength}{1/14};	
	\pgfmathsetmacro{\samples}{50};		
	\pgfmathsetmacro{\radius}{1.9};			

	\pgfmathsetmacro{\xmax}{5};
	\pgfmathsetmacro{\ymax}{5};
	\pgfmathsetmacro{\axisbuff}{0};
	\pgfmathsetmacro{\primegap}{0};

	\pgfmathsetmacro{\Avex}{0.5};
	\pgfmathsetmacro{\Bvex}{-3};
	\pgfmathsetmacro{\Cvex}{5};
	\pgfmathsetmacro{\Acav}{-0.6};
	\pgfmathsetmacro{\Bcav}{5*(-\Acav)};
	\pgfmathsetmacro{\Ccav}{0};
	\pgfmathsetmacro{\WINDOW}{0.5};
	\pgfmathsetmacro{\Adel}{\Avex-\Acav};
	\pgfmathsetmacro{\Bdel}{\Bvex-\Bcav};
	\pgfmathsetmacro{\Cdel}{\Cvex-\Ccav};
	\pgfmathsetmacro{\cutone}{ (- \Bdel - sqrt( abs(\Bdel)^2 - 4*\Adel*\Cdel ) ) / (2*\Adel) }
	\pgfmathsetmacro{\cuttwo}{ (- \Bdel + sqrt( abs(\Bdel)^2 - 4*\Adel*\Cdel ) ) / (2*\Adel) }
	\ifinterpol
		\pgfmathsetmacro{\xleftone}{\cutone-\WINDOW/2};
		\pgfmathsetmacro{\xrightone}{\cutone+\WINDOW/2};
		\pgfmathsetmacro{\xlefttwo}{\cuttwo-\WINDOW/2};
		\pgfmathsetmacro{\xrighttwo}{\cuttwo+\WINDOW/2};
	\else
		\pgfmathsetmacro{\xleftone}{\cutone};
		\pgfmathsetmacro{\xrightone}{\cutone};
		\pgfmathsetmacro{\xlefttwo}{\cuttwo};
		\pgfmathsetmacro{\xrighttwo}{\cuttwo};
	\fi

	\pgfmathsetmacro{\xmult}{1};
	\pgfmathsetmacro{\xplus}{0};
	\pgfmathsetmacro{\ymult}{1};
	\pgfmathsetmacro{\yplus}{0};

	\ifinterpol
		\draw
			[domain={0}:{\xmax}, variable=\x,
			samples=\samples, very thick, lightgray]
			plot ( { \xplus + \xmult * \x },
			{ \yplus + \ymult * ( \Avex * \x^2 + \Bvex * \x + \Cvex ) } );
		\draw
			[domain={0}:{\xmax}, variable=\x,
			samples=\samples, very thick, lightgray]
			plot ( { \xplus + \xmult * \x },
			{ \yplus + \ymult * ( \Acav * \x^2 + \Bcav * \x + \Ccav ) } );
	\else
		\draw
			[domain={0}:{\xmax}, variable=\x,
			samples=\samples, very thick, lightgray]
			plot ( { \xplus + \xmult * \x },
			{ \yplus + \ymult * ( \Avex * \x^2 + \Bvex * \x + \Cvex ) } );
		\draw
			[domain={0}:{\xrightone}, variable=\x,
			samples=\samples, very thick, lightgray, dotted]
			plot ( { \xplus + \xmult * \x },
			{ \yplus + \ymult * ( \Acav * \x^2 + \Bcav * \x + \Ccav ) } );
		\draw
			[domain={\xrightone}:{\xlefttwo}, variable=\x,
			samples=\samples, very thick, lightgray, dotted]
			plot ( { \xplus + \xmult * \x },
			{ \yplus + \ymult * ( \Acav * \x^2 + \Bcav * \x + \Ccav ) } );
		\draw
			[domain={\xlefttwo}:{\xmax}, variable=\x,
			samples=\samples, very thick, lightgray, dotted]
			plot ( { \xplus + \xmult * \x },
			{ \yplus + \ymult * ( \Acav * \x^2 + \Bcav * \x + \Ccav ) } );
	\fi

	\draw
		[domain={0}:{\xleftone}, variable=\x,
		samples=\samples, very thick]
		plot ( { \xplus + \xmult * \x },
		{ \yplus + \ymult * ( \Avex * \x^2 + \Bvex * \x + \Cvex ) } );
	\draw
		[domain={\xrighttwo}:{\xmax}, variable=\x,
		samples=\samples, very thick]
		plot ( { \xplus + \xmult * \x },
		{ \yplus + \ymult * ( \Avex * \x^2 + \Bvex * \x + \Cvex ) } );
	\ifinterpol
		\draw
			[domain={\xrightone}:{\xlefttwo}, variable=\x,
			samples=\samples, very thick]
			plot ( { \xplus + \xmult * \x },
			{ \yplus + \ymult * ( \Acav * \x^2 + \Bcav * \x + \Ccav ) } );
	\else
		\draw
			[domain={\xrightone}:{\xlefttwo}, variable=\x,
			samples=\samples, very thick, dotted]
			plot ( { \xplus + \xmult * \x },
			{ \yplus + \ymult * ( \Acav * \x^2 + \Bcav * \x + \Ccav ) } );
	\fi

	\ifinterpol
		\pgfmathsetmacro{\leftonesloperaw}{(\ymult/\xmult) * ( 2*\Avex*\xleftone + \Bvex )};
		\pgfmathsetmacro{\rightonesloperaw}{(\ymult/\xmult) * ( 2*\Acav*\xrightone + \Bcav )};
		\pgfmathsetmacro{\lefttwosloperaw}{(\ymult/\xmult) * ( 2*\Acav*\xlefttwo + \Bcav )};
		\pgfmathsetmacro{\righttwosloperaw}{(\ymult/\xmult) * ( 2*\Avex*\xrighttwo + \Bvex )};
		\pgfmathsetmacro{\leftoneslope}{min(abs(\leftonesloperaw),1/abs(\leftonesloperaw)};
		\pgfmathsetmacro{\rightoneslope}{min(abs(\rightonesloperaw),1/abs(\rightonesloperaw))};
		\pgfmathsetmacro{\lefttwoslope}{min(abs(\leftonesloperaw),1/abs(\lefttwosloperaw)};
		\pgfmathsetmacro{\righttwoslope}{min(abs(\rightonesloperaw),1/abs(\righttwosloperaw))};
		\draw [very thick]
			({\xplus+\xmult*\xleftone},
			{ \yplus + \ymult * ( \Avex * \xleftone^2 + \Bvex * \xleftone + \Cvex ) } )
			to[out={-90+atan(\leftoneslope)},in={270-atan(\rightoneslope)}]
			({\xplus+\xmult*\xrightone},
			{ \yplus + \ymult * ( \Acav * \xrightone^2 + \Bcav * \xrightone + \Ccav ) } );
		\draw [very thick]
			({\xplus+\xmult*\xlefttwo},
			{ \yplus + \ymult * ( \Acav * \xlefttwo^2 + \Bcav * \xlefttwo + \Ccav ) } )
			to[out={-90+atan(\leftoneslope)},in={270-atan(\righttwoslope)}]
			({\xplus+\xmult*\xrighttwo},
			{ \yplus + \ymult * ( \Avex * \xrighttwo^2 + \Bvex * \xrighttwo + \Cvex ) } );
	\fi

	\draw[-] ( {-(\axisbuff+\primegap)}, 0 )
		-- ( {(\xplus+\xmult*(\xmax))+(\axisbuff+\primegap)}, 0 );
	\draw[->]
		( {-(\axisbuff+\primegap)}, {(\yplus+\ymult*0)+\axisbuff} )
		-- ( {-(\axisbuff+\primegap)}, {(\yplus+\ymult*(1.05*\ymax))+\axisbuff} );
	\draw[-,opacity=0]
		( {-(\axisbuff+\primegap+0.065)}, {-0.065} )
		-- ( {-(\axisbuff+\primegap+0.065)}, {-0.065} );

	\draw[-] ( { \xplus + \xmult * \xmax }, - \ticklength )
		-- ( { \xplus + \xmult * \xmax }, \ticklength );
	\draw ( { \xplus + \xmult * \xmax }, 0 )
		node[anchor=north] {$\strut 1$};
		
	\ifinterpol
		\draw[dotted] ( { \xplus + \xmult * (0.65*\xmax) },
			{ \yplus + \ymult * ( \Acav * (0.65*\xmax)^2 + \Bcav * (0.65*\xmax) + \Ccav ) } )
			node[anchor=south west] {$\strut \boldsymbol{\widetilde{u}}$};
	\else
		\draw[dotted] ( { \xplus + \xmult * (0.65*\xmax) },
			{ \yplus + \ymult * ( \Acav * (0.65*\xmax)^2 + \Bcav * (0.65*\xmax) + \Ccav ) } )
			node[anchor=south west] {$\strut \boldsymbol{u}$};
	\fi
	\draw[lightgray] ( { \xplus + \xmult * (0.4*\xmax) },
		{ \yplus + \ymult * ( \Avex * (0.4*\xmax)^2 + \Bvex * (0.4*\xmax) + \Cvex ) } )
		node[anchor=north east] {$\strut \boldsymbol{f}$};
	\draw[lightgray] ( { \xplus + \xmult * (0.1*\xmax) },
		{ \yplus + \ymult * ( \Acav * (0.1*\xmax)^2 + \Bcav * (0.1*\xmax) + \Ccav ) } )
		node[anchor=south east] {$\strut \boldsymbol{g}$};

\end{tikzpicture}

		\caption{Interim payoff $u$.}
		\label{fig:convexcav:u}
	\end{subfigure}
	\begin{subfigure}{0.48\textwidth}
		\centering
		\interpoltrue







\begin{tikzpicture}[scale=1, line cap=round]

	\pgfmathsetmacro{\ticklength}{1/14};	
	\pgfmathsetmacro{\samples}{50};		
	\pgfmathsetmacro{\radius}{1.9};			

	\pgfmathsetmacro{\xmax}{5};
	\pgfmathsetmacro{\ymax}{5};
	\pgfmathsetmacro{\axisbuff}{0};
	\pgfmathsetmacro{\primegap}{0};

	\pgfmathsetmacro{\Avex}{0.5};
	\pgfmathsetmacro{\Bvex}{-3};
	\pgfmathsetmacro{\Cvex}{5};
	\pgfmathsetmacro{\Acav}{-0.6};
	\pgfmathsetmacro{\Bcav}{5*(-\Acav)};
	\pgfmathsetmacro{\Ccav}{0};
	\pgfmathsetmacro{\WINDOW}{0.5};
	\pgfmathsetmacro{\Adel}{\Avex-\Acav};
	\pgfmathsetmacro{\Bdel}{\Bvex-\Bcav};
	\pgfmathsetmacro{\Cdel}{\Cvex-\Ccav};
	\pgfmathsetmacro{\cutone}{ (- \Bdel - sqrt( abs(\Bdel)^2 - 4*\Adel*\Cdel ) ) / (2*\Adel) }
	\pgfmathsetmacro{\cuttwo}{ (- \Bdel + sqrt( abs(\Bdel)^2 - 4*\Adel*\Cdel ) ) / (2*\Adel) }
	\ifinterpol
		\pgfmathsetmacro{\xleftone}{\cutone-\WINDOW/2};
		\pgfmathsetmacro{\xrightone}{\cutone+\WINDOW/2};
		\pgfmathsetmacro{\xlefttwo}{\cuttwo-\WINDOW/2};
		\pgfmathsetmacro{\xrighttwo}{\cuttwo+\WINDOW/2};
	\else
		\pgfmathsetmacro{\xleftone}{\cutone};
		\pgfmathsetmacro{\xrightone}{\cutone};
		\pgfmathsetmacro{\xlefttwo}{\cuttwo};
		\pgfmathsetmacro{\xrighttwo}{\cuttwo};
	\fi

	\pgfmathsetmacro{\xmult}{1};
	\pgfmathsetmacro{\xplus}{0};
	\pgfmathsetmacro{\ymult}{1};
	\pgfmathsetmacro{\yplus}{0};

	\ifinterpol
		\draw
			[domain={0}:{\xmax}, variable=\x,
			samples=\samples, very thick, lightgray]
			plot ( { \xplus + \xmult * \x },
			{ \yplus + \ymult * ( \Avex * \x^2 + \Bvex * \x + \Cvex ) } );
		\draw
			[domain={0}:{\xmax}, variable=\x,
			samples=\samples, very thick, lightgray]
			plot ( { \xplus + \xmult * \x },
			{ \yplus + \ymult * ( \Acav * \x^2 + \Bcav * \x + \Ccav ) } );
	\else
		\draw
			[domain={0}:{\xmax}, variable=\x,
			samples=\samples, very thick, lightgray]
			plot ( { \xplus + \xmult * \x },
			{ \yplus + \ymult * ( \Avex * \x^2 + \Bvex * \x + \Cvex ) } );
		\draw
			[domain={0}:{\xrightone}, variable=\x,
			samples=\samples, very thick, lightgray, dotted]
			plot ( { \xplus + \xmult * \x },
			{ \yplus + \ymult * ( \Acav * \x^2 + \Bcav * \x + \Ccav ) } );
		\draw
			[domain={\xrightone}:{\xlefttwo}, variable=\x,
			samples=\samples, very thick, lightgray, dotted]
			plot ( { \xplus + \xmult * \x },
			{ \yplus + \ymult * ( \Acav * \x^2 + \Bcav * \x + \Ccav ) } );
		\draw
			[domain={\xlefttwo}:{\xmax}, variable=\x,
			samples=\samples, very thick, lightgray, dotted]
			plot ( { \xplus + \xmult * \x },
			{ \yplus + \ymult * ( \Acav * \x^2 + \Bcav * \x + \Ccav ) } );
	\fi

	\draw
		[domain={0}:{\xleftone}, variable=\x,
		samples=\samples, very thick]
		plot ( { \xplus + \xmult * \x },
		{ \yplus + \ymult * ( \Avex * \x^2 + \Bvex * \x + \Cvex ) } );
	\draw
		[domain={\xrighttwo}:{\xmax}, variable=\x,
		samples=\samples, very thick]
		plot ( { \xplus + \xmult * \x },
		{ \yplus + \ymult * ( \Avex * \x^2 + \Bvex * \x + \Cvex ) } );
	\ifinterpol
		\draw
			[domain={\xrightone}:{\xlefttwo}, variable=\x,
			samples=\samples, very thick]
			plot ( { \xplus + \xmult * \x },
			{ \yplus + \ymult * ( \Acav * \x^2 + \Bcav * \x + \Ccav ) } );
	\else
		\draw
			[domain={\xrightone}:{\xlefttwo}, variable=\x,
			samples=\samples, very thick, dotted]
			plot ( { \xplus + \xmult * \x },
			{ \yplus + \ymult * ( \Acav * \x^2 + \Bcav * \x + \Ccav ) } );
	\fi

	\ifinterpol
		\pgfmathsetmacro{\leftonesloperaw}{(\ymult/\xmult) * ( 2*\Avex*\xleftone + \Bvex )};
		\pgfmathsetmacro{\rightonesloperaw}{(\ymult/\xmult) * ( 2*\Acav*\xrightone + \Bcav )};
		\pgfmathsetmacro{\lefttwosloperaw}{(\ymult/\xmult) * ( 2*\Acav*\xlefttwo + \Bcav )};
		\pgfmathsetmacro{\righttwosloperaw}{(\ymult/\xmult) * ( 2*\Avex*\xrighttwo + \Bvex )};
		\pgfmathsetmacro{\leftoneslope}{min(abs(\leftonesloperaw),1/abs(\leftonesloperaw)};
		\pgfmathsetmacro{\rightoneslope}{min(abs(\rightonesloperaw),1/abs(\rightonesloperaw))};
		\pgfmathsetmacro{\lefttwoslope}{min(abs(\leftonesloperaw),1/abs(\lefttwosloperaw)};
		\pgfmathsetmacro{\righttwoslope}{min(abs(\rightonesloperaw),1/abs(\righttwosloperaw))};
		\draw [very thick]
			({\xplus+\xmult*\xleftone},
			{ \yplus + \ymult * ( \Avex * \xleftone^2 + \Bvex * \xleftone + \Cvex ) } )
			to[out={-90+atan(\leftoneslope)},in={270-atan(\rightoneslope)}]
			({\xplus+\xmult*\xrightone},
			{ \yplus + \ymult * ( \Acav * \xrightone^2 + \Bcav * \xrightone + \Ccav ) } );
		\draw [very thick]
			({\xplus+\xmult*\xlefttwo},
			{ \yplus + \ymult * ( \Acav * \xlefttwo^2 + \Bcav * \xlefttwo + \Ccav ) } )
			to[out={-90+atan(\leftoneslope)},in={270-atan(\righttwoslope)}]
			({\xplus+\xmult*\xrighttwo},
			{ \yplus + \ymult * ( \Avex * \xrighttwo^2 + \Bvex * \xrighttwo + \Cvex ) } );
	\fi

	\draw[-] ( {-(\axisbuff+\primegap)}, 0 )
		-- ( {(\xplus+\xmult*(\xmax))+(\axisbuff+\primegap)}, 0 );
	\draw[->]
		( {-(\axisbuff+\primegap)}, {(\yplus+\ymult*0)+\axisbuff} )
		-- ( {-(\axisbuff+\primegap)}, {(\yplus+\ymult*(1.05*\ymax))+\axisbuff} );
	\draw[-,opacity=0]
		( {-(\axisbuff+\primegap+0.065)}, {-0.065} )
		-- ( {-(\axisbuff+\primegap+0.065)}, {-0.065} );

	\draw[-] ( { \xplus + \xmult * \xmax }, - \ticklength )
		-- ( { \xplus + \xmult * \xmax }, \ticklength );
	\draw ( { \xplus + \xmult * \xmax }, 0 )
		node[anchor=north] {$\strut 1$};
		
	\ifinterpol
		\draw[dotted] ( { \xplus + \xmult * (0.65*\xmax) },
			{ \yplus + \ymult * ( \Acav * (0.65*\xmax)^2 + \Bcav * (0.65*\xmax) + \Ccav ) } )
			node[anchor=south west] {$\strut \boldsymbol{\widetilde{u}}$};
	\else
		\draw[dotted] ( { \xplus + \xmult * (0.65*\xmax) },
			{ \yplus + \ymult * ( \Acav * (0.65*\xmax)^2 + \Bcav * (0.65*\xmax) + \Ccav ) } )
			node[anchor=south west] {$\strut \boldsymbol{u}$};
	\fi
	\draw[lightgray] ( { \xplus + \xmult * (0.4*\xmax) },
		{ \yplus + \ymult * ( \Avex * (0.4*\xmax)^2 + \Bvex * (0.4*\xmax) + \Cvex ) } )
		node[anchor=north east] {$\strut \boldsymbol{f}$};
	\draw[lightgray] ( { \xplus + \xmult * (0.1*\xmax) },
		{ \yplus + \ymult * ( \Acav * (0.1*\xmax)^2 + \Bcav * (0.1*\xmax) + \Ccav ) } )
		node[anchor=south east] {$\strut \boldsymbol{g}$};

\end{tikzpicture}

		\caption{W-shaped regular approximation $\widetilde{u}$.}
		\label{fig:convexcav:reg}
	\end{subfigure}
	\caption{Application to discretionary delegation.}
	\label{fig:convexcav}
\end{figure}

\textcite{Xu2024} studies the same trade-off using a different model, motivated by the problem of algorithm-assisted decision-making. One difference is that she gives an explicit micro-foundation for the efficiency benefit $B$ of delegation; another is that she focusses on the binary-prior binary-action case.

The interim payoff $u$ may be approximated as in \Cref{fig:convexcav:reg} by a regular W-shaped function $\widetilde{u}$ without affecting the set of optimal posterior-mean distributions. Hence the crater property is satisfied, so \Cref{theorem:incr} is applicable.

When the efficiency benefit of delegation falls from $B$ to $B-k$ where $k \geq 0$, the principal optimally acquires more information. To see why, observe that the sender's interim payoff after such a shift is $v = \max\{u,f+k\}-k$. The map $(a,x) \mapsto \max\{a,f+k\} - k$ does not quite satisfy the hypotheses of \Cref{lemma:CLC_suff} (\cpageref{lemma:CLC_suff}),%
	\footnote{It is convex, and it is increasing in its first argument, but not strictly so.}
but it does satisfy those of its refinement \hyperref[proposition:CLC_charac]{\Cref*{lemma:CLC_suff}$^*$} in \cref{app:CLC_suff_tight}.
Hence $u$ is coarsely less convex than $v$, so \Cref{theorem:incr} applies.

\section{The limits of comparative statics}
\label{sec:limits}

Our main result, \Cref{theorem:incr}, shows that the crater property is necessary and sufficient for every coarse-convexity shift to lead the sender optimally to provide more information, whatever the prior. In this section, we explore the limits of this result.

Recall that if only binary priors are considered, then the crater property can be dropped (\Cref{proposition:ido_binary}). We begin (§\ref{sec:mcs_incr:combined}) by showing that this result is tight (so \Cref{theorem:incr} is robust): on any restricted domain of priors that contains at least one non-binary prior, `increasing' comparative statics conclusions cannot generally be drawn without a crater-property-type shape restriction on the interim payoff $u$. Similarly, the crater property remains non-dispensable when only more specific shifts of $u$ are considered, and when the sender is subject to constraints (see \cref{app:specific,app:constraints}).

Next (§\ref{sec:mcs_incr:down}), we ask the mirror image of the question answered by \Cref{theorem:incr}: what condition on an interim payoff $v$ is necessary and sufficient for `decreasing' comparative statics to hold under any prior $F_0$ between $v$ and any coarsely \emph{less} convex $u$?
The answer (\Cref{proposition:W_down}) is that $v$ must be trivial: either concave or convex.
This finding reinforces the message that comparative statics are prior-sensitive in the persuasion model.

We next (§\ref{sec:limits:prior}) ask whether comparative-statics conclusions can be drawn when what shifts is not the sender's interim payoff $u$, but rather the prior $F_0$. We prove a negative result (\Cref{proposition:priorshift}): no shift of the prior leads the sender optimally to provide more information whatever the interim payoff $u$, not even if attention is restricted to a small and tractable class of interim payoffs (namely, those that are regular and S-shaped).

Finally (§\ref{sec:limits:multi}), we drop the `single-moment' assumption, allowing the interim payoff $u$ to depend in an arbitrary way on the posterior belief. It remains true that coarse-convexity shifts characterise `non-decreasing' comparative statics (\hyperref[theorem:nondecr_general]{\Cref*{theorem:nondecr}$'$}). For `increasing' comparative statics, we prove an impossibility result (\hyperref[theorem:incr_general]{\Cref*{theorem:incr}$'$}): when $u$ depends on more than a single moment, coarse-convexity shifts lead more information to be provided whatever the prior only in the trivial cases in which $u$ is concave or convex.

\subsection{Robustness and tightness}
\label{sec:mcs_incr:combined}

In this section, we show that \Cref{theorem:incr} is robust (so \Cref{proposition:ido_binary} is tight): binary priors are the \emph{only} priors under which `increasing' comparative-statics conclusions can be drawn generally without a crater-like restriction.

Call a function $u : [0,1] \rightarrow \R$ \emph{M-shaped} iff it is continuous and, for some $x \leq y$ in $[0,1]$, concave on $[0,x]$ and on $[y,1]$ and convex on $[x,y]$. Unlike S and W shapes, M-shaped functions can violate the crater property.

\begin{proposition}
	\label{proposition:tightness}
	For any distribution $F_0$ that is not binary,
	there are regular $u,v : [0,1] \to \R$
	such that $u$ is coarsely less convex than $v$, and yet \eqref{eq:mcs_incr} fails.
	These $u$ and $v$ may be chosen to be M- and S-shaped, respectively.
\end{proposition}

In other words, binary distributions $F_0$ are the \emph{only} ones for which \eqref{eq:mcs_incr} holds between any $u$ and any coarsely more convex $v$, even if attention is restricted to very well-behaved $u,v : [0,1] \to \R$ (in particular regular and, respectively, M- and S-shaped).
`Increasing' comparative statics can thus be guaranteed \emph{only} by either restricting the domain of interim payoffs $u$ (as in \Cref{theorem:incr}) or by focussing on binary priors $F_0$ (as in \Cref{proposition:ido_binary}).

The proof of \Cref{proposition:tightness} is in \cref{app:pf_tightness}. The logic is close to that of the proof of the necessity part of \Cref{theorem:incr} (§\ref{sec:mcs_incr:w} above).

\subsection{\texorpdfstring{`}{‘}Decreasing\texorpdfstring{'}{’} comparative statics}
\label{sec:mcs_incr:down}

The question answered by \Cref{theorem:incr} has a symmetric counterpart: what is the necessary and sufficient condition on an interim payoff $v$ for every coarse-convexity \emph{decrease} (to some $u$) to yield a \emph{decrease} of informativeness, regardless of the prior distribution $F_0$? The answer is as follows.

\begin{proposition}
	\label{proposition:W_down}
	Let $v : [0,1] \to \R$ be regular.
	If $v$ is either concave or convex,
	then for every regular $u : [0,1] \to \R$ that is coarsely less convex than $v$
	and every distribution $F_0$,
	\begin{equation}
		\argmax_{\text{$F$ feasible given $F_0$}} \int u \dd F
		\lowerthan
		\argmax_{\text{$F$ feasible given $F_0$}} \int v \dd F .
		\label{eq:mcs_incr_down}
		\tag{$\star\star$}
	\end{equation}
	Conversely, if \eqref{eq:mcs_incr_down} holds
	for every regular $u$ that is coarsely less convex than $v$
	and every distribution $F_0$,
	then $v$ is either concave or convex.
\end{proposition}

In other words, `decreasing' comparative statics are highly prior-sensitive: a coarse-convexity decrease from $v$ yields decreased informativeness whatever the prior $F_0$ only in the trivial cases of a concave $v$ (when full pooling is optimal) or a convex $v$ (when full revelation is optimal).

The proof is in \cref{app:pf_thm_incr_down}. The first half is close to obvious. For (the contra-positive of) the second half, the key observation is that if $v$ is neither concave nor convex, then it must be S- or reverse-S-shaped on some interval, in which case we may find a regular, M-shaped and coarsely less convex $u : [0,1] \to \R$ as in \Cref{fig:Wproof} (\cpageref{fig:Wproof}), so that \eqref{eq:mcs_incr_down} fails by the logic of the proof of the necessity part of \Cref{theorem:incr} (§\ref{sec:mcs_incr:w} above).

The second half of \Cref{proposition:W_down} remains true if attention is restricted prior distributions $F_0$ that are atomless and have convex support. This follows directly from the proof (sketched above).

\subsection{Shifts of the prior distribution}
\label{sec:limits:prior}

Our main results concerned comparative statics with respect to shifts of the sender's interim payoff $u$.
In this section, we consider shifts of the other primitive of the persuasion model: the distribution $F_0$ of the state.

Shifts of $F_0$ may be interpreted as changes in the information available to the sender.
In particular, if the sender secures better access to information about the latent state of the world (whose distribution is fixed), this manifests precisely as increased informativeness of $F_0$.

\begin{proposition}
	\label{proposition:priorshift}
	There are no atomless distributions $F_0 \neq G_0$ such that
	\begin{equation}
		\argmax_{\text{$F$ feasible given $F_0$}} \int u \dd F
		\lowerthan
		\argmax_{\text{$F$ feasible given $G_0$}} \int u \dd F 
		\label{eq:mcs_incr_prior}
		\tag{$\dag$}
	\end{equation}
	holds for every regular and S-shaped $u : [0,1] \to \R$.
\end{proposition}

In other words, the effect on optimal information-provision of a shift of the prior distribution $F_0$ depends finely on the interim payoff $u$: there are no shifts which deliver `increasing' comparative statics robustly across all possible interim payoffs, not even if attention is restricted to the (small and well-behaved) class of regular and S-shaped interim payoffs. 

The proof of \Cref{proposition:priorshift} is in \cref{app:pf_priorshift}. In the same appendix, we explain how the atomlessness hypothesis may be dropped.

\subsection{Beyond the \texorpdfstring{`}{‘}single-moment\texorpdfstring{'}{’} case}
\label{sec:limits:multi}

Our analysis has focussed on the salient case in which interim payoffs depend on only a single moment of the posterior belief---without loss, the mean. In this section, we extend our theorems to the general case. We find that whereas \Cref{theorem:nondecr} extends directly, yielding `non-decreasing' comparative statics, the analogue of \Cref{theorem:incr} is a negative result stating that there is no hope of `increasing' comparative statics beyond the `single-moment' case.

\subsubsection{The general persuasion model}
\label{sec:limits:multi:model}

In the general `multi-moment' persuasion model \parencite[e.g.][§4]{DworczakKolotilin2023}, the uncertain state of the world is a random \emph{vector} drawn from a non-empty, compact and convex set $E \subseteq \R^n$, where $n \in \N$. By \emph{distribution,} we shall mean a CDF $\R^n \to [0,1]$ concentrated on $E$. The distribution of the state (`the prior') is denoted by $F_0$. 
For distributions $F$ and $G$,
we call $F$ \emph{less informative} than $G$
iff $\int \psi \dd F \leq \int \psi \dd G$
for every convex $\psi : E \to \R$.

A sender chooses a signal.
Given a signal, each signal realisation induces a posterior belief via Bayes's rule, whose expectation (a vector) we call the \emph{posterior mean.}
Each signal thus induces a random posterior mean, with some distribution.
Call a distribution \emph{feasible (given $F_0$)} iff it is the posterior-mean distribution induced by some signal.
The feasible distributions are exactly those that are less informative than the prior $F_0$ \parencite[e.g.][p.~94]{Phelps2001}.

The sender's (interim) payoff at a given realised posterior belief
is assumed to depend only on its mean:
her payoff at posterior mean $m \in E$ is $u(m)$,
where $u : E \to \R$ is upper semi-continuous.
Her problem is to choose among the feasible distributions $F$ to maximise her expected payoff $\int u \dd F$.

\begin{remark}
	\label{remark:nest}
	The special case $E \subseteq \R$ is the one studied in the rest of this paper. The persuasion model of \textcite{KamenicaGentzkow2011} is the special case in which $E$ is a simplex, i.e. the convex hull of an affinely independent set $\Omega \subseteq \R^n$,%
		\footnote{A set $S \subseteq \R^n$ is called \emph{affinely independent} iff it is finite and for any $\alpha : S \to \R$ such that $\sum_{x \in S} \alpha(x) = \sum_{x \in S} \alpha(x)x = 0$, we have $\alpha(x) = 0$ for each $x \in S$.}
	and the prior $F_0$ is concentrated on the vertices $\Omega$. The interpretation is that $\Omega = \supp(F_0)$ is the set of states of the world, the simplex $\Delta(\Omega) = E$ is the set of all possible beliefs about the state, and the interim payoff $u$ depends in an arbitrary way on the posterior belief.
\end{remark}

\subsubsection{Comparative statics}
\label{sec:limits:multi:mcs}

For any non-empty and finite set $S \subseteq E$, let $\Delta(S)$ denote the set of all maps $\alpha : S \to [0,1]$ such that $\sum_{x \in S} \alpha(x) = 1$.

\begin{definition}
	\label{definition:lessconvex_coarse_general}
	For functions $u,v : [0,1] \to \R$,
	we say that $u$ is \emph{coarsely less convex} than $v$
	if and only if for any affinely independent $S\subseteq E$ such that
	$u\left( \sum_{x \in S} \alpha(x) x \right)
	\leq \sum_{x \in S} \alpha(x) u(x)$
	holds for every $\alpha \in \Delta(S)$, we also have
	$v\left( \sum_{x \in S} \alpha(x) x \right)
	\leq \sum_{x \in S} \alpha(x) v(x)$
	for every $\alpha \in \Delta(S)$,
	and furthermore any $\alpha \in \Delta(S)$ at which the former inequality is strict is also one at which the latter inequality is strict.
\end{definition}

`Coarsely less convex than' admits the same sufficient conditions as in the `single-moment' case: \Cref{lemma:CLC_suff} and \Cref{corollary:CLC_suffsuff} (\cpageref{lemma:CLC_suff,corollary:CLC_suffsuff}) remain true as stated, except with `$[0,1]$' and `$(0,1)$' replaced by `$E$'.

Our `non-decreasing' comparative-statics result, \Cref{theorem:nondecr}, remains true exactly as stated, except with `$[0,1]$' replaced by `$E$':

\begin{namedthm}[\Cref*{theorem:nondecr}$\boldsymbol{'}$.]
	\label{theorem:nondecr_general}
	Let $u,v : E \to \R$ be upper semi-continuous.
	If $u$ is coarsely less convex than $v$,
	then for any distribution $F_0$,
	\begin{equation}
		\argmax_{\text{$F$ feasible given $F_0$}} \int u \dd F
		\notstrictlyhigherthan
		\argmax_{\text{$F$ feasible given $F_0$}} \int v \dd F .
		\label{eq:mcs_general}
		\tag{$\star$}
	\end{equation}
	Conversely, if \eqref{eq:mcs_general} holds for every distribution $F_0$,
	then $u$ must be coarsely less convex than $v$.
\end{namedthm}

The exact same proof (\cref{app:pf_thm_nondecr}) applies, except with `$[0,1]$' replaced by `$E$' and binary distributions replaced by distributions with affinely independent support, plus a few smaller changes (e.g. replacing `$\R$' by `$\R^n$').

Recall (\cpageref{definition:regular}) our definition of regularity for functions $u : [0,1] \to \R$. We call a function $u : E \to \R$ \emph{strongly regular} iff it is twice continuously differentiable with bounded derivatives and, for all distinct $x,y \in E$, the map $[0,1] \to \R$ given by $\alpha \mapsto u(\alpha x + (1-\alpha)y)$ is regular. (We insist on second derivatives merely in order to to rule out uninteresting complications.)

\begin{namedthm}[\Cref*{theorem:incr}$\boldsymbol{'}$.]
	\label{theorem:incr_general}
	Suppose that $E$ is not one-dimensional,%
		\footnote{The \emph{dimension} of a convex set $E \subseteq \R^n$ is $\max\lvert\{ S \subseteq E : \text{$S$ affinely independent} \}\rvert - 1$.}
	and let $u : E \to \R$ be strongly regular.
	If $u$ is either concave or convex,
	then for every strongly regular $v : E \to \R$ that is coarsely more convex than $u$
	and every distribution $F_0$,
	\begin{equation}
		\argmax_{\text{$F$ feasible given $F_0$}} \int u \dd F
		\lowerthan
		\argmax_{\text{$F$ feasible given $F_0$}} \int v \dd F .
		\label{eq:mcs_incr_general}
		\tag{$\star\star$}
	\end{equation}
	Conversely, if \eqref{eq:mcs_incr_general} holds
	for every strongly regular $v$ that is coarsely more convex than $u$
	and every distribution $F_0$,
	then $u$ is either concave or convex.
\end{namedthm}

In other words, comparative statics are highly prior-sensitive outside of the `single-moment' case: a coarse-convexity increase from $u$ yields increased informativeness whatever the prior $F_0$ only in the trivial cases of a concave $u$ (when full pooling is optimal) or a convex $u$ (when full revelation is optimal). The proof (\cref{app:pf_theorem:incr_general}) shows that this remains true even if only atomless convex-support priors $F_0$ are considered. The argument uses Dworczak and Kolotilin's (\citeyear{DworczakKolotilin2023}) duality techniques.

\Cref{proposition:ido_binary} (\cpageref{proposition:ido_binary}) similarly fails outside of the `single-moment' case: there exist priors $F_0$ with affinely independent support (the \textcite{KamenicaGentzkow2011} special case) such that \eqref{eq:mcs_incr_general} fails for some strongly regular $u,v : E \to \R$ with $u$ coarsely less convex than $v$.

\section{Further applications}
\label{sec:appl2}

In this section, we apply our theorems to three further economic problems: designing (health) risk warnings \parencite{MariottiSchweizerSzechVonwangenheim2023}, costly information acquisition \parencite[e.g.][]{RavidRoeslerSzentes2022}, and persuasion with choice.

\subsection{(Health) risk warnings}
\label{sec:appl2:control}

\textcite{MariottiSchweizerSzechVonwangenheim2023} study welfare-maximising information-provision to present-biased consumers about the long-term risks of consuming products like tobacco, sugary drinks or alcohol. The authors describe optimal signals, but obtain no comparative-statics results about their informativeness; our theorems deliver such results.

In their model, a consumer (the receiver) chooses in each of two periods $t \in \{0,1\}$ whether to consume ($a_t=1$) or abstain ($a_t=0$). If she consumes, she earns utility $1$ immediately, but may suffer harm $C>0$ two periods later.

The consumer is present-biased: if she believes consumption to be harmful with probability $m \in [0,1]$, her period-0 and period-1 selves' payoffs are
\begin{align*}
	a_0 + \beta \delta a_1
	&- \beta \delta^2 Cm a_0
	- \beta \delta^3 Cm a_1 \mathpunct{\phantom{,}}
	\quad \text{and}
	\\
	a_0 + \phantom{\beta} \delta a_1
	&- \beta \delta^2 Cm a_0
	- \beta \delta^3 Cm a_1 ,
	\quad \text{respectively,}
\end{align*}
where $\beta,\delta \in (0,1]$ are parameters. A lower value of $\beta$ (of $\delta$) captures greater present bias (impatience). Assume $\beta \delta^2 C > 1$ (abstaining is optimal if $m=1$).

The consumer cannot commit: $a_t$ is chosen by her period-$t$ self, who (by inspection) consumes iff $m \leq \bar{x} \coloneqq 1 / \beta \delta^2 C$. Hence welfare, judged from the period-0 perspective, is
\begin{equation*}
	u(m) =
	\begin{cases}
		1 + \beta \delta
		- (1+\delta) \beta \delta^2 C m
		&\text{if $m < \bar{x}$}
		\\
		0
		&\text{if $m \geq \bar{x}$.}
	\end{cases}
\end{equation*}
This is depicted in \Cref{fig:control:welfare}.
Note that present bias ($\beta<1$) engenders time-inconsistency: the period-0 self desires consumption in period 1 iff $m \leq \underline{x} \coloneqq (1+\beta\delta) \bar{x} / (1+\delta)$, so whenever $m \in \left( \underline{x}, \bar{x} \right)$, the consumer suffers ($u(m)<0$) from her inability to commit today to abstain tomorrow.

\begin{figure}
	\centering
	\begin{subfigure}{0.48\textwidth}
		\centering
		\affinetrue
		\sshapefalse
		\concavfalse
		\interpolfalse







\begin{tikzpicture}[scale=1, line cap=round]

	\pgfmathsetmacro{\ticklength}{1/14};	
	\pgfmathsetmacro{\samples}{30};		
	\pgfmathsetmacro{\radius}{1.9};			

	\pgfmathsetmacro{\xmax}{1};
	\pgfmathsetmacro{\axisbuff}{0};
	\pgfmathsetmacro{\primegap}{0};
	\pgfmathsetmacro{\ybuff}{0.2};

	\pgfmathsetmacro{\BETA}{0.1};
	\pgfmathsetmacro{\DELTA}{0.9};
	\pgfmathsetmacro{\xbar}{0.7};
	\pgfmathsetmacro{\xunder}{(1+\BETA*\DELTA)*\xbar/(1+\DELTA)};
	\ifaffine
		\pgfmathsetmacro{\EPS}{0};
	\else
		\pgfmathsetmacro{\EPS}{0.9};
	\fi
	\ifsshape
		\pgfmathsetmacro{\EPSleft}{\EPS};
		\pgfmathsetmacro{\EPSright}{-\EPS};
	\else
		\pgfmathsetmacro{\EPSleft}{-\EPS};
		\pgfmathsetmacro{\EPSright}{-\EPS};
	\fi
	\pgfmathsetmacro{\WINDOW}{0.15};
	\pgfmathsetmacro{\xleft}{\xbar-\WINDOW/2};
	\pgfmathsetmacro{\xright}{\xbar+\WINDOW/2};

	\pgfmathsetmacro{\ymax}{0.1 + 1+\BETA*\DELTA+max(0,\EPS)*(-\xbar)^2};
	\pgfmathsetmacro{\ymin}{\BETA*\DELTA-\DELTA};

	\pgfmathsetmacro{\xmult}{5};
	\pgfmathsetmacro{\xplus}{0};
	\pgfmathsetmacro{\ymult}{2};
	\pgfmathsetmacro{\yplus}{0};

	\pgfmathsetmacro{\EPSZERO}{0};
	\draw
		[domain={0}:{\xbar}, variable=\x,
		samples=\samples, very thick, lightgray]
		plot ( { \xplus + \xmult * \x },
		{ \yplus + \ymult * ( 1 + \BETA*\DELTA - (1+\DELTA)*\x/\xbar + \EPSZERO * ( \x - \xbar )^2 ) } );
	\draw
		[domain={\xbar}:{\xmax}, variable=\x,
		samples=\samples, very thick, lightgray]
		plot ( { \xplus + \xmult * \x },
		{ \yplus + \ymult * (\EPSZERO * ( \x - \xbar )^2) } );

	\ifinterpol
		\pgfmathsetmacro{\xLEFT}{\xleft};
		\pgfmathsetmacro{\xRIGHT}{\xright};
	\else
		\pgfmathsetmacro{\xLEFT}{\xbar};
		\pgfmathsetmacro{\xRIGHT}{\xbar};
	\fi
	\draw
		[domain={0}:{\xLEFT}, variable=\x,
		samples=\samples, very thick]
		plot ( { \xplus + \xmult * \x },
		{ \yplus + \ymult * ( 1 + \BETA*\DELTA - (1+\DELTA)*\x/\xbar + \EPSleft * ( \x - \xbar )^2 ) } );
	\draw
		[domain={\xRIGHT}:{\xmax}, variable=\x,
		samples=\samples, very thick]
		plot ( { \xplus + \xmult * \x },
		{ \yplus + \ymult * (\EPSright * ( \x - \xbar )^2) } );

	\ifinterpol
		\pgfmathsetmacro{\leftsloperaw}{(\ymult/\xmult) * (-(1+\DELTA)/\xbar + 2*\EPSleft*(\xleft-\xbar))};
		\pgfmathsetmacro{\leftslope}{min(abs(\leftsloperaw),1/abs(\leftsloperaw)};
		\pgfmathsetmacro{\rightsloperaw}{(\ymult/\xmult) * (2*\EPSright*(\xright-\xbar))};
		\ifaffine
			\pgfmathsetmacro{\rightslope}{0}
		\else
			\pgfmathsetmacro{\rightslope}{min(abs(\rightsloperaw),1/abs(\rightsloperaw))};
		\fi
		\draw [very thick]
			({\xplus+\xmult*\xleft},
			{ \yplus + \ymult * ( 1 + \BETA*\DELTA - (1+\DELTA)*\xleft/\xbar + \EPSleft * ( \xleft - \xbar )^2 ) })
			to[out={-90+atan(\leftslope)},in={180-atan(\rightslope)}]
			({\xplus+\xmult*\xright},
			{ \yplus + \ymult * (\EPSright * ( \xright - \xbar )^2) });
	\fi

	\draw[-] ( {-(\axisbuff+\primegap)}, 0 )
		-- ( {(\xplus+\xmult*\xmax)+(\axisbuff+\primegap)}, 0 );
	\draw[<->]
		( {-(\axisbuff+\primegap)}, {(\yplus+\ymult*\ymin)+\axisbuff} )
		-- ( {-(\axisbuff+\primegap)}, {(\yplus+\ymult*\ymax)+\axisbuff} );
	\draw[-,opacity=0]
		( {-(\axisbuff+\primegap+0.065)}, {0} )
		-- ( {-(\axisbuff+\primegap+0.065)}, {0} );

	\draw[-] ( { \xplus + \xmult * \xmax }, - \ticklength )
		-- ( { \xplus + \xmult * \xmax }, \ticklength );
	\ifaffine
		\draw ( { \xplus + \xmult * \xmax }, 0 )
			node[anchor=north] {$\strut 1$};
	\else
		\draw ( { \xplus + \xmult * \xmax }, 0 )
			node[anchor=south] {$\strut 1$};
	\fi
	\draw[-] ( { \xplus + \xmult * \xbar }, - \ticklength )
		-- ( { \xplus + \xmult * \xbar }, \ticklength );
	\draw ( { \xplus + \xmult * \xbar }, 0 )
		node[anchor=south] {$\strut \bar{x}$};
	\draw[-] ( { \xplus + \xmult * \xunder }, - \ticklength )
		-- ( { \xplus + \xmult * \xunder }, \ticklength );
	\ifconcav
		\draw ( { \xplus + \xmult * \xunder }, 0 )
			node[anchor=south] {$\strut \underline{x}$};
	\else
		\draw ( { \xplus + \xmult * \xunder }, 0 )
			node[anchor=north] {$\strut \underline{x}$};
	\fi

	\pgfmathsetmacro{\EPSZERO}{0};
	\ifaffine
		\ifinterpol
			\draw ( { \xplus + \xmult * (\xunder/3) },
				{ \yplus + \ymult * ( 1 + \BETA*\DELTA - (1+\DELTA)*(\xunder/3)/\xbar + \EPSleft * ( (\xunder/3) - \xbar )^2 ) } )
				node[anchor=south] {$\strut \boldsymbol{\widetilde{u}}$};
			\draw ( { \xplus + \xmult * \xbar },
				{ \yplus + \ymult * ( 1 + \BETA*\DELTA - (1+\DELTA)*\xbar/\xbar + \EPSZERO * ( \xbar - \xbar )^2 ) } )
				node[anchor=south west] {\color{lightgray}$\boldsymbol{u}$};
		\else
			\draw ( { \xplus + \xmult * (\xunder/3) },
				{ \yplus + \ymult * ( 1 + \BETA*\DELTA - (1+\DELTA)*(\xunder/3)/\xbar + \EPSleft * ( (\xunder/3) - \xbar )^2 ) } )
				node[anchor=south] {$\strut \boldsymbol{u}$};
		\fi
	\fi
	\ifsshape
		\draw ( { \xplus + \xmult * (\xunder/3) },
			{ \yplus + \ymult * ( 1 + \BETA*\DELTA - (1+\DELTA)*(\xunder/3)/\xbar + \EPSleft * ( (\xunder/3) - \xbar )^2 ) } )
			node[anchor=south] {$\strut \boldsymbol{\widetilde{u}}$};
		\draw ( { \xplus + \xmult * (\xunder/3) },
			{ \yplus + \ymult * ( 1 + \BETA*\DELTA - (1+\DELTA)*(\xunder/3)/\xbar + \EPSZERO * ( (\xunder/3) - \xbar )^2 ) } )
			node[anchor=north] {\color{lightgray}$\strut \boldsymbol{u}$};
	\fi
	\ifconcav
		\draw ( { \xplus + \xmult * (\xunder/3) },
			{ \yplus + \ymult * ( 1 + \BETA*\DELTA - (1+\DELTA)*(\xunder/3)/\xbar + \EPSleft * ( (\xunder/3) - \xbar )^2 ) } )
			node[anchor=north] {$\strut \boldsymbol{\widetilde{u}}$};
		\draw ( { \xplus + \xmult * (\xunder/3) },
			{ \yplus + \ymult * ( 1 + \BETA*\DELTA - (1+\DELTA)*(\xunder/3)/\xbar + \EPSZERO * ( (\xunder/3) - \xbar )^2 ) } )
			node[anchor=south] {\color{lightgray}$\strut \boldsymbol{u}$};
	\fi

\end{tikzpicture}

		\caption{Welfare $u$.}
		\label{fig:control:welfare}
	\end{subfigure}
	\begin{subfigure}{0.48\textwidth}
		\centering
		\affinetrue
		\sshapefalse
		\concavfalse
		\interpoltrue







\begin{tikzpicture}[scale=1, line cap=round]

	\pgfmathsetmacro{\ticklength}{1/14};	
	\pgfmathsetmacro{\samples}{30};		
	\pgfmathsetmacro{\radius}{1.9};			

	\pgfmathsetmacro{\xmax}{1};
	\pgfmathsetmacro{\axisbuff}{0};
	\pgfmathsetmacro{\primegap}{0};
	\pgfmathsetmacro{\ybuff}{0.2};

	\pgfmathsetmacro{\BETA}{0.1};
	\pgfmathsetmacro{\DELTA}{0.9};
	\pgfmathsetmacro{\xbar}{0.7};
	\pgfmathsetmacro{\xunder}{(1+\BETA*\DELTA)*\xbar/(1+\DELTA)};
	\ifaffine
		\pgfmathsetmacro{\EPS}{0};
	\else
		\pgfmathsetmacro{\EPS}{0.9};
	\fi
	\ifsshape
		\pgfmathsetmacro{\EPSleft}{\EPS};
		\pgfmathsetmacro{\EPSright}{-\EPS};
	\else
		\pgfmathsetmacro{\EPSleft}{-\EPS};
		\pgfmathsetmacro{\EPSright}{-\EPS};
	\fi
	\pgfmathsetmacro{\WINDOW}{0.15};
	\pgfmathsetmacro{\xleft}{\xbar-\WINDOW/2};
	\pgfmathsetmacro{\xright}{\xbar+\WINDOW/2};

	\pgfmathsetmacro{\ymax}{0.1 + 1+\BETA*\DELTA+max(0,\EPS)*(-\xbar)^2};
	\pgfmathsetmacro{\ymin}{\BETA*\DELTA-\DELTA};

	\pgfmathsetmacro{\xmult}{5};
	\pgfmathsetmacro{\xplus}{0};
	\pgfmathsetmacro{\ymult}{2};
	\pgfmathsetmacro{\yplus}{0};

	\pgfmathsetmacro{\EPSZERO}{0};
	\draw
		[domain={0}:{\xbar}, variable=\x,
		samples=\samples, very thick, lightgray]
		plot ( { \xplus + \xmult * \x },
		{ \yplus + \ymult * ( 1 + \BETA*\DELTA - (1+\DELTA)*\x/\xbar + \EPSZERO * ( \x - \xbar )^2 ) } );
	\draw
		[domain={\xbar}:{\xmax}, variable=\x,
		samples=\samples, very thick, lightgray]
		plot ( { \xplus + \xmult * \x },
		{ \yplus + \ymult * (\EPSZERO * ( \x - \xbar )^2) } );

	\ifinterpol
		\pgfmathsetmacro{\xLEFT}{\xleft};
		\pgfmathsetmacro{\xRIGHT}{\xright};
	\else
		\pgfmathsetmacro{\xLEFT}{\xbar};
		\pgfmathsetmacro{\xRIGHT}{\xbar};
	\fi
	\draw
		[domain={0}:{\xLEFT}, variable=\x,
		samples=\samples, very thick]
		plot ( { \xplus + \xmult * \x },
		{ \yplus + \ymult * ( 1 + \BETA*\DELTA - (1+\DELTA)*\x/\xbar + \EPSleft * ( \x - \xbar )^2 ) } );
	\draw
		[domain={\xRIGHT}:{\xmax}, variable=\x,
		samples=\samples, very thick]
		plot ( { \xplus + \xmult * \x },
		{ \yplus + \ymult * (\EPSright * ( \x - \xbar )^2) } );

	\ifinterpol
		\pgfmathsetmacro{\leftsloperaw}{(\ymult/\xmult) * (-(1+\DELTA)/\xbar + 2*\EPSleft*(\xleft-\xbar))};
		\pgfmathsetmacro{\leftslope}{min(abs(\leftsloperaw),1/abs(\leftsloperaw)};
		\pgfmathsetmacro{\rightsloperaw}{(\ymult/\xmult) * (2*\EPSright*(\xright-\xbar))};
		\ifaffine
			\pgfmathsetmacro{\rightslope}{0}
		\else
			\pgfmathsetmacro{\rightslope}{min(abs(\rightsloperaw),1/abs(\rightsloperaw))};
		\fi
		\draw [very thick]
			({\xplus+\xmult*\xleft},
			{ \yplus + \ymult * ( 1 + \BETA*\DELTA - (1+\DELTA)*\xleft/\xbar + \EPSleft * ( \xleft - \xbar )^2 ) })
			to[out={-90+atan(\leftslope)},in={180-atan(\rightslope)}]
			({\xplus+\xmult*\xright},
			{ \yplus + \ymult * (\EPSright * ( \xright - \xbar )^2) });
	\fi

	\draw[-] ( {-(\axisbuff+\primegap)}, 0 )
		-- ( {(\xplus+\xmult*\xmax)+(\axisbuff+\primegap)}, 0 );
	\draw[<->]
		( {-(\axisbuff+\primegap)}, {(\yplus+\ymult*\ymin)+\axisbuff} )
		-- ( {-(\axisbuff+\primegap)}, {(\yplus+\ymult*\ymax)+\axisbuff} );
	\draw[-,opacity=0]
		( {-(\axisbuff+\primegap+0.065)}, {0} )
		-- ( {-(\axisbuff+\primegap+0.065)}, {0} );

	\draw[-] ( { \xplus + \xmult * \xmax }, - \ticklength )
		-- ( { \xplus + \xmult * \xmax }, \ticklength );
	\ifaffine
		\draw ( { \xplus + \xmult * \xmax }, 0 )
			node[anchor=north] {$\strut 1$};
	\else
		\draw ( { \xplus + \xmult * \xmax }, 0 )
			node[anchor=south] {$\strut 1$};
	\fi
	\draw[-] ( { \xplus + \xmult * \xbar }, - \ticklength )
		-- ( { \xplus + \xmult * \xbar }, \ticklength );
	\draw ( { \xplus + \xmult * \xbar }, 0 )
		node[anchor=south] {$\strut \bar{x}$};
	\draw[-] ( { \xplus + \xmult * \xunder }, - \ticklength )
		-- ( { \xplus + \xmult * \xunder }, \ticklength );
	\ifconcav
		\draw ( { \xplus + \xmult * \xunder }, 0 )
			node[anchor=south] {$\strut \underline{x}$};
	\else
		\draw ( { \xplus + \xmult * \xunder }, 0 )
			node[anchor=north] {$\strut \underline{x}$};
	\fi

	\pgfmathsetmacro{\EPSZERO}{0};
	\ifaffine
		\ifinterpol
			\draw ( { \xplus + \xmult * (\xunder/3) },
				{ \yplus + \ymult * ( 1 + \BETA*\DELTA - (1+\DELTA)*(\xunder/3)/\xbar + \EPSleft * ( (\xunder/3) - \xbar )^2 ) } )
				node[anchor=south] {$\strut \boldsymbol{\widetilde{u}}$};
			\draw ( { \xplus + \xmult * \xbar },
				{ \yplus + \ymult * ( 1 + \BETA*\DELTA - (1+\DELTA)*\xbar/\xbar + \EPSZERO * ( \xbar - \xbar )^2 ) } )
				node[anchor=south west] {\color{lightgray}$\boldsymbol{u}$};
		\else
			\draw ( { \xplus + \xmult * (\xunder/3) },
				{ \yplus + \ymult * ( 1 + \BETA*\DELTA - (1+\DELTA)*(\xunder/3)/\xbar + \EPSleft * ( (\xunder/3) - \xbar )^2 ) } )
				node[anchor=south] {$\strut \boldsymbol{u}$};
		\fi
	\fi
	\ifsshape
		\draw ( { \xplus + \xmult * (\xunder/3) },
			{ \yplus + \ymult * ( 1 + \BETA*\DELTA - (1+\DELTA)*(\xunder/3)/\xbar + \EPSleft * ( (\xunder/3) - \xbar )^2 ) } )
			node[anchor=south] {$\strut \boldsymbol{\widetilde{u}}$};
		\draw ( { \xplus + \xmult * (\xunder/3) },
			{ \yplus + \ymult * ( 1 + \BETA*\DELTA - (1+\DELTA)*(\xunder/3)/\xbar + \EPSZERO * ( (\xunder/3) - \xbar )^2 ) } )
			node[anchor=north] {\color{lightgray}$\strut \boldsymbol{u}$};
	\fi
	\ifconcav
		\draw ( { \xplus + \xmult * (\xunder/3) },
			{ \yplus + \ymult * ( 1 + \BETA*\DELTA - (1+\DELTA)*(\xunder/3)/\xbar + \EPSleft * ( (\xunder/3) - \xbar )^2 ) } )
			node[anchor=north] {$\strut \boldsymbol{\widetilde{u}}$};
		\draw ( { \xplus + \xmult * (\xunder/3) },
			{ \yplus + \ymult * ( 1 + \BETA*\DELTA - (1+\DELTA)*(\xunder/3)/\xbar + \EPSZERO * ( (\xunder/3) - \xbar )^2 ) } )
			node[anchor=south] {\color{lightgray}$\strut \boldsymbol{u}$};
	\fi

\end{tikzpicture}

		\caption{Smooth approximation $\widetilde{u}$.}
		\label{fig:control:reg}
	\end{subfigure}
	\\
	\begin{subfigure}{0.48\textwidth}
		\centering
		\affinefalse
		\sshapefalse
		\concavtrue
		\interpoltrue







\begin{tikzpicture}[scale=1, line cap=round]

	\pgfmathsetmacro{\ticklength}{1/14};	
	\pgfmathsetmacro{\samples}{30};		
	\pgfmathsetmacro{\radius}{1.9};			

	\pgfmathsetmacro{\xmax}{1};
	\pgfmathsetmacro{\axisbuff}{0};
	\pgfmathsetmacro{\primegap}{0};
	\pgfmathsetmacro{\ybuff}{0.2};

	\pgfmathsetmacro{\BETA}{0.1};
	\pgfmathsetmacro{\DELTA}{0.9};
	\pgfmathsetmacro{\xbar}{0.7};
	\pgfmathsetmacro{\xunder}{(1+\BETA*\DELTA)*\xbar/(1+\DELTA)};
	\ifaffine
		\pgfmathsetmacro{\EPS}{0};
	\else
		\pgfmathsetmacro{\EPS}{0.9};
	\fi
	\ifsshape
		\pgfmathsetmacro{\EPSleft}{\EPS};
		\pgfmathsetmacro{\EPSright}{-\EPS};
	\else
		\pgfmathsetmacro{\EPSleft}{-\EPS};
		\pgfmathsetmacro{\EPSright}{-\EPS};
	\fi
	\pgfmathsetmacro{\WINDOW}{0.15};
	\pgfmathsetmacro{\xleft}{\xbar-\WINDOW/2};
	\pgfmathsetmacro{\xright}{\xbar+\WINDOW/2};

	\pgfmathsetmacro{\ymax}{0.1 + 1+\BETA*\DELTA+max(0,\EPS)*(-\xbar)^2};
	\pgfmathsetmacro{\ymin}{\BETA*\DELTA-\DELTA};

	\pgfmathsetmacro{\xmult}{5};
	\pgfmathsetmacro{\xplus}{0};
	\pgfmathsetmacro{\ymult}{2};
	\pgfmathsetmacro{\yplus}{0};

	\pgfmathsetmacro{\EPSZERO}{0};
	\draw
		[domain={0}:{\xbar}, variable=\x,
		samples=\samples, very thick, lightgray]
		plot ( { \xplus + \xmult * \x },
		{ \yplus + \ymult * ( 1 + \BETA*\DELTA - (1+\DELTA)*\x/\xbar + \EPSZERO * ( \x - \xbar )^2 ) } );
	\draw
		[domain={\xbar}:{\xmax}, variable=\x,
		samples=\samples, very thick, lightgray]
		plot ( { \xplus + \xmult * \x },
		{ \yplus + \ymult * (\EPSZERO * ( \x - \xbar )^2) } );

	\ifinterpol
		\pgfmathsetmacro{\xLEFT}{\xleft};
		\pgfmathsetmacro{\xRIGHT}{\xright};
	\else
		\pgfmathsetmacro{\xLEFT}{\xbar};
		\pgfmathsetmacro{\xRIGHT}{\xbar};
	\fi
	\draw
		[domain={0}:{\xLEFT}, variable=\x,
		samples=\samples, very thick]
		plot ( { \xplus + \xmult * \x },
		{ \yplus + \ymult * ( 1 + \BETA*\DELTA - (1+\DELTA)*\x/\xbar + \EPSleft * ( \x - \xbar )^2 ) } );
	\draw
		[domain={\xRIGHT}:{\xmax}, variable=\x,
		samples=\samples, very thick]
		plot ( { \xplus + \xmult * \x },
		{ \yplus + \ymult * (\EPSright * ( \x - \xbar )^2) } );

	\ifinterpol
		\pgfmathsetmacro{\leftsloperaw}{(\ymult/\xmult) * (-(1+\DELTA)/\xbar + 2*\EPSleft*(\xleft-\xbar))};
		\pgfmathsetmacro{\leftslope}{min(abs(\leftsloperaw),1/abs(\leftsloperaw)};
		\pgfmathsetmacro{\rightsloperaw}{(\ymult/\xmult) * (2*\EPSright*(\xright-\xbar))};
		\ifaffine
			\pgfmathsetmacro{\rightslope}{0}
		\else
			\pgfmathsetmacro{\rightslope}{min(abs(\rightsloperaw),1/abs(\rightsloperaw))};
		\fi
		\draw [very thick]
			({\xplus+\xmult*\xleft},
			{ \yplus + \ymult * ( 1 + \BETA*\DELTA - (1+\DELTA)*\xleft/\xbar + \EPSleft * ( \xleft - \xbar )^2 ) })
			to[out={-90+atan(\leftslope)},in={180-atan(\rightslope)}]
			({\xplus+\xmult*\xright},
			{ \yplus + \ymult * (\EPSright * ( \xright - \xbar )^2) });
	\fi

	\draw[-] ( {-(\axisbuff+\primegap)}, 0 )
		-- ( {(\xplus+\xmult*\xmax)+(\axisbuff+\primegap)}, 0 );
	\draw[<->]
		( {-(\axisbuff+\primegap)}, {(\yplus+\ymult*\ymin)+\axisbuff} )
		-- ( {-(\axisbuff+\primegap)}, {(\yplus+\ymult*\ymax)+\axisbuff} );
	\draw[-,opacity=0]
		( {-(\axisbuff+\primegap+0.065)}, {0} )
		-- ( {-(\axisbuff+\primegap+0.065)}, {0} );

	\draw[-] ( { \xplus + \xmult * \xmax }, - \ticklength )
		-- ( { \xplus + \xmult * \xmax }, \ticklength );
	\ifaffine
		\draw ( { \xplus + \xmult * \xmax }, 0 )
			node[anchor=north] {$\strut 1$};
	\else
		\draw ( { \xplus + \xmult * \xmax }, 0 )
			node[anchor=south] {$\strut 1$};
	\fi
	\draw[-] ( { \xplus + \xmult * \xbar }, - \ticklength )
		-- ( { \xplus + \xmult * \xbar }, \ticklength );
	\draw ( { \xplus + \xmult * \xbar }, 0 )
		node[anchor=south] {$\strut \bar{x}$};
	\draw[-] ( { \xplus + \xmult * \xunder }, - \ticklength )
		-- ( { \xplus + \xmult * \xunder }, \ticklength );
	\ifconcav
		\draw ( { \xplus + \xmult * \xunder }, 0 )
			node[anchor=south] {$\strut \underline{x}$};
	\else
		\draw ( { \xplus + \xmult * \xunder }, 0 )
			node[anchor=north] {$\strut \underline{x}$};
	\fi

	\pgfmathsetmacro{\EPSZERO}{0};
	\ifaffine
		\ifinterpol
			\draw ( { \xplus + \xmult * (\xunder/3) },
				{ \yplus + \ymult * ( 1 + \BETA*\DELTA - (1+\DELTA)*(\xunder/3)/\xbar + \EPSleft * ( (\xunder/3) - \xbar )^2 ) } )
				node[anchor=south] {$\strut \boldsymbol{\widetilde{u}}$};
			\draw ( { \xplus + \xmult * \xbar },
				{ \yplus + \ymult * ( 1 + \BETA*\DELTA - (1+\DELTA)*\xbar/\xbar + \EPSZERO * ( \xbar - \xbar )^2 ) } )
				node[anchor=south west] {\color{lightgray}$\boldsymbol{u}$};
		\else
			\draw ( { \xplus + \xmult * (\xunder/3) },
				{ \yplus + \ymult * ( 1 + \BETA*\DELTA - (1+\DELTA)*(\xunder/3)/\xbar + \EPSleft * ( (\xunder/3) - \xbar )^2 ) } )
				node[anchor=south] {$\strut \boldsymbol{u}$};
		\fi
	\fi
	\ifsshape
		\draw ( { \xplus + \xmult * (\xunder/3) },
			{ \yplus + \ymult * ( 1 + \BETA*\DELTA - (1+\DELTA)*(\xunder/3)/\xbar + \EPSleft * ( (\xunder/3) - \xbar )^2 ) } )
			node[anchor=south] {$\strut \boldsymbol{\widetilde{u}}$};
		\draw ( { \xplus + \xmult * (\xunder/3) },
			{ \yplus + \ymult * ( 1 + \BETA*\DELTA - (1+\DELTA)*(\xunder/3)/\xbar + \EPSZERO * ( (\xunder/3) - \xbar )^2 ) } )
			node[anchor=north] {\color{lightgray}$\strut \boldsymbol{u}$};
	\fi
	\ifconcav
		\draw ( { \xplus + \xmult * (\xunder/3) },
			{ \yplus + \ymult * ( 1 + \BETA*\DELTA - (1+\DELTA)*(\xunder/3)/\xbar + \EPSleft * ( (\xunder/3) - \xbar )^2 ) } )
			node[anchor=north] {$\strut \boldsymbol{\widetilde{u}}$};
		\draw ( { \xplus + \xmult * (\xunder/3) },
			{ \yplus + \ymult * ( 1 + \BETA*\DELTA - (1+\DELTA)*(\xunder/3)/\xbar + \EPSZERO * ( (\xunder/3) - \xbar )^2 ) } )
			node[anchor=south] {\color{lightgray}$\strut \boldsymbol{u}$};
	\fi

\end{tikzpicture}

		\caption{M-shaped regular approximation.}
		\label{fig:control:M}
	\end{subfigure}
	\begin{subfigure}{0.48\textwidth}
		\centering
		\affinefalse
		\sshapetrue
		\concavfalse
		\interpoltrue







\begin{tikzpicture}[scale=1, line cap=round]

	\pgfmathsetmacro{\ticklength}{1/14};	
	\pgfmathsetmacro{\samples}{30};		
	\pgfmathsetmacro{\radius}{1.9};			

	\pgfmathsetmacro{\xmax}{1};
	\pgfmathsetmacro{\axisbuff}{0};
	\pgfmathsetmacro{\primegap}{0};
	\pgfmathsetmacro{\ybuff}{0.2};

	\pgfmathsetmacro{\BETA}{0.1};
	\pgfmathsetmacro{\DELTA}{0.9};
	\pgfmathsetmacro{\xbar}{0.7};
	\pgfmathsetmacro{\xunder}{(1+\BETA*\DELTA)*\xbar/(1+\DELTA)};
	\ifaffine
		\pgfmathsetmacro{\EPS}{0};
	\else
		\pgfmathsetmacro{\EPS}{0.9};
	\fi
	\ifsshape
		\pgfmathsetmacro{\EPSleft}{\EPS};
		\pgfmathsetmacro{\EPSright}{-\EPS};
	\else
		\pgfmathsetmacro{\EPSleft}{-\EPS};
		\pgfmathsetmacro{\EPSright}{-\EPS};
	\fi
	\pgfmathsetmacro{\WINDOW}{0.15};
	\pgfmathsetmacro{\xleft}{\xbar-\WINDOW/2};
	\pgfmathsetmacro{\xright}{\xbar+\WINDOW/2};

	\pgfmathsetmacro{\ymax}{0.1 + 1+\BETA*\DELTA+max(0,\EPS)*(-\xbar)^2};
	\pgfmathsetmacro{\ymin}{\BETA*\DELTA-\DELTA};

	\pgfmathsetmacro{\xmult}{5};
	\pgfmathsetmacro{\xplus}{0};
	\pgfmathsetmacro{\ymult}{2};
	\pgfmathsetmacro{\yplus}{0};

	\pgfmathsetmacro{\EPSZERO}{0};
	\draw
		[domain={0}:{\xbar}, variable=\x,
		samples=\samples, very thick, lightgray]
		plot ( { \xplus + \xmult * \x },
		{ \yplus + \ymult * ( 1 + \BETA*\DELTA - (1+\DELTA)*\x/\xbar + \EPSZERO * ( \x - \xbar )^2 ) } );
	\draw
		[domain={\xbar}:{\xmax}, variable=\x,
		samples=\samples, very thick, lightgray]
		plot ( { \xplus + \xmult * \x },
		{ \yplus + \ymult * (\EPSZERO * ( \x - \xbar )^2) } );

	\ifinterpol
		\pgfmathsetmacro{\xLEFT}{\xleft};
		\pgfmathsetmacro{\xRIGHT}{\xright};
	\else
		\pgfmathsetmacro{\xLEFT}{\xbar};
		\pgfmathsetmacro{\xRIGHT}{\xbar};
	\fi
	\draw
		[domain={0}:{\xLEFT}, variable=\x,
		samples=\samples, very thick]
		plot ( { \xplus + \xmult * \x },
		{ \yplus + \ymult * ( 1 + \BETA*\DELTA - (1+\DELTA)*\x/\xbar + \EPSleft * ( \x - \xbar )^2 ) } );
	\draw
		[domain={\xRIGHT}:{\xmax}, variable=\x,
		samples=\samples, very thick]
		plot ( { \xplus + \xmult * \x },
		{ \yplus + \ymult * (\EPSright * ( \x - \xbar )^2) } );

	\ifinterpol
		\pgfmathsetmacro{\leftsloperaw}{(\ymult/\xmult) * (-(1+\DELTA)/\xbar + 2*\EPSleft*(\xleft-\xbar))};
		\pgfmathsetmacro{\leftslope}{min(abs(\leftsloperaw),1/abs(\leftsloperaw)};
		\pgfmathsetmacro{\rightsloperaw}{(\ymult/\xmult) * (2*\EPSright*(\xright-\xbar))};
		\ifaffine
			\pgfmathsetmacro{\rightslope}{0}
		\else
			\pgfmathsetmacro{\rightslope}{min(abs(\rightsloperaw),1/abs(\rightsloperaw))};
		\fi
		\draw [very thick]
			({\xplus+\xmult*\xleft},
			{ \yplus + \ymult * ( 1 + \BETA*\DELTA - (1+\DELTA)*\xleft/\xbar + \EPSleft * ( \xleft - \xbar )^2 ) })
			to[out={-90+atan(\leftslope)},in={180-atan(\rightslope)}]
			({\xplus+\xmult*\xright},
			{ \yplus + \ymult * (\EPSright * ( \xright - \xbar )^2) });
	\fi

	\draw[-] ( {-(\axisbuff+\primegap)}, 0 )
		-- ( {(\xplus+\xmult*\xmax)+(\axisbuff+\primegap)}, 0 );
	\draw[<->]
		( {-(\axisbuff+\primegap)}, {(\yplus+\ymult*\ymin)+\axisbuff} )
		-- ( {-(\axisbuff+\primegap)}, {(\yplus+\ymult*\ymax)+\axisbuff} );
	\draw[-,opacity=0]
		( {-(\axisbuff+\primegap+0.065)}, {0} )
		-- ( {-(\axisbuff+\primegap+0.065)}, {0} );

	\draw[-] ( { \xplus + \xmult * \xmax }, - \ticklength )
		-- ( { \xplus + \xmult * \xmax }, \ticklength );
	\ifaffine
		\draw ( { \xplus + \xmult * \xmax }, 0 )
			node[anchor=north] {$\strut 1$};
	\else
		\draw ( { \xplus + \xmult * \xmax }, 0 )
			node[anchor=south] {$\strut 1$};
	\fi
	\draw[-] ( { \xplus + \xmult * \xbar }, - \ticklength )
		-- ( { \xplus + \xmult * \xbar }, \ticklength );
	\draw ( { \xplus + \xmult * \xbar }, 0 )
		node[anchor=south] {$\strut \bar{x}$};
	\draw[-] ( { \xplus + \xmult * \xunder }, - \ticklength )
		-- ( { \xplus + \xmult * \xunder }, \ticklength );
	\ifconcav
		\draw ( { \xplus + \xmult * \xunder }, 0 )
			node[anchor=south] {$\strut \underline{x}$};
	\else
		\draw ( { \xplus + \xmult * \xunder }, 0 )
			node[anchor=north] {$\strut \underline{x}$};
	\fi

	\pgfmathsetmacro{\EPSZERO}{0};
	\ifaffine
		\ifinterpol
			\draw ( { \xplus + \xmult * (\xunder/3) },
				{ \yplus + \ymult * ( 1 + \BETA*\DELTA - (1+\DELTA)*(\xunder/3)/\xbar + \EPSleft * ( (\xunder/3) - \xbar )^2 ) } )
				node[anchor=south] {$\strut \boldsymbol{\widetilde{u}}$};
			\draw ( { \xplus + \xmult * \xbar },
				{ \yplus + \ymult * ( 1 + \BETA*\DELTA - (1+\DELTA)*\xbar/\xbar + \EPSZERO * ( \xbar - \xbar )^2 ) } )
				node[anchor=south west] {\color{lightgray}$\boldsymbol{u}$};
		\else
			\draw ( { \xplus + \xmult * (\xunder/3) },
				{ \yplus + \ymult * ( 1 + \BETA*\DELTA - (1+\DELTA)*(\xunder/3)/\xbar + \EPSleft * ( (\xunder/3) - \xbar )^2 ) } )
				node[anchor=south] {$\strut \boldsymbol{u}$};
		\fi
	\fi
	\ifsshape
		\draw ( { \xplus + \xmult * (\xunder/3) },
			{ \yplus + \ymult * ( 1 + \BETA*\DELTA - (1+\DELTA)*(\xunder/3)/\xbar + \EPSleft * ( (\xunder/3) - \xbar )^2 ) } )
			node[anchor=south] {$\strut \boldsymbol{\widetilde{u}}$};
		\draw ( { \xplus + \xmult * (\xunder/3) },
			{ \yplus + \ymult * ( 1 + \BETA*\DELTA - (1+\DELTA)*(\xunder/3)/\xbar + \EPSZERO * ( (\xunder/3) - \xbar )^2 ) } )
			node[anchor=north] {\color{lightgray}$\strut \boldsymbol{u}$};
	\fi
	\ifconcav
		\draw ( { \xplus + \xmult * (\xunder/3) },
			{ \yplus + \ymult * ( 1 + \BETA*\DELTA - (1+\DELTA)*(\xunder/3)/\xbar + \EPSleft * ( (\xunder/3) - \xbar )^2 ) } )
			node[anchor=north] {$\strut \boldsymbol{\widetilde{u}}$};
		\draw ( { \xplus + \xmult * (\xunder/3) },
			{ \yplus + \ymult * ( 1 + \BETA*\DELTA - (1+\DELTA)*(\xunder/3)/\xbar + \EPSZERO * ( (\xunder/3) - \xbar )^2 ) } )
			node[anchor=south] {\color{lightgray}$\strut \boldsymbol{u}$};
	\fi

\end{tikzpicture}

		\caption{S-shaped regular approximation.}
		\label{fig:control:W}
	\end{subfigure}
	\caption{Application to (health) risk warnings.}
	\label{fig:control}
\end{figure}

The consumer's risk (the probability with which consumption is harmful) is drawn from an atomless full-support distribution $F_0$. The authors study welfare-maximising information-provision about risk, e.g. via product labels.

There are multiple optimal posterior-mean distributions. Welfare $u$ may be approximated as in \Cref{fig:control:reg} by a smooth function $\widetilde{u}$ without changing the set of optimal distributions.

Regular approximations select from among the set of optimal distributions. Approximating welfare $u$ by a regular M-shaped $\widetilde{u}$, as in \Cref{fig:control:M}, amounts to selecting the least informative optimal distribution. Approximating by a regular S-shaped $\widetilde{u}$, as in \Cref{fig:control:W}, selects Kolotilin's (\citeyear{Kolotilin2014}) `upper censorship' distribution, which fully reveals $[0,a)$ and pools $[a,1]$, where $a$ is the least $x \in \left[ \underline{x}, 1 \right]$ such that $\frac{1}{1-F_0(x)} \int_x^1 y F_0(\dd y) \geq \bar x$. The former kind of approximation~$\widetilde{u}$ does not satisfy the crater property; the latter kind does.

\textcite{MariottiSchweizerSzechVonwangenheim2023} focus on the least informative optimal distribution, and they do not obtain comparative-statics results about its informativeness. \Cref{theorem:incr} suggests why: the least informative optimum need not become more informative as parameters shift because this selection from the set of optima amounts to assuming that welfare is M-shaped as in \Cref{fig:control:M}, so that the crater property fails.

By contrast, the optimal upper-censorship distribution is monotone: it becomes more informative whenever any of the model's three parameters $C,\beta,\delta$ decrease. In other words, more information is optimally provided to consumers who are less vulnerable, more present-biased, or more impatient.

To derive this result, we apply \Cref{theorem:incr}. The crater property is satisfied since selecting the upper-censorship optimum amounts to approximating by an S-shaped function $\widetilde{u}$. It remains to show that any decrease of $C$, $\beta$ or $\delta$ causes a coarse-convexity shift. This follows from two easily-verified facts: (i)~that both $\underline{x}$ and $\bar{x}$ are decreasing in each of $C$, $\beta$ and $\delta$, and (ii)~that any increase of either $\underline{x}$ or $\bar{x}$ produces a coarse-convexity shift.

\subsection{Costly information acquisition (\texorpdfstring{`}{‘}rational inattention\texorpdfstring{'}{’})}
\label{sec:appl2:costly}

In the literature on costly information acquisition with mean-measurable costs \parencite[e.g.][]{RavidRoeslerSzentes2022,MenschRavid2022,Kreutzkamp2023,Thereze2023adv,Thereze2023scr,MenschMalik2023}, a decision-maker chooses flexibly how to learn before taking an action. Each posterior-mean distribution $F$ has a cost $C(F)$ and a benefit $W(F)$. These are assumed to be posterior-mean-separable: $C(F) = \int c \dd F - c(\mu_0)$ and $W(F) = \int w \dd F$ for each feasible distribution $F$, where $c,w : [0,1] \to \R$ are convex and continuous, and $\mu_0 \coloneqq \int x F_0(\dd x)$ denotes the prior mean.
The interim benefit $w$ is interpreted as arising from a decision problem: $w(x) = \sup_{a \in \mathcal{A}} U(a,x)$ for each $x \in [0,1]$, where $U(a,m)$ denotes the interim payoff of action $a \in \mathcal{A}$ given posterior mean $m \in [0,1]$.
The decision-maker's flexible-learning problem is to choose among the feasible distributions $F$ to maximise $W(F)-C(F)$.
This is nested by the persuasion model, with $u \coloneqq w-c$.

Following the literature,%
	\footnote{E.g. \textcite{ChambersLiuRehbeck2020,Denti2022,Whitmeyer2024}.}
we say that \emph{information becomes more valuable} when the interim benefit $w$ shifts to $\widetilde{w} = w + \psi$, where $\psi : [0,1] \to \R$ is convex. Changes of the underlying decision problem $(\mathcal{A},U)$ which cause information to become more valuable include raising the stakes \parencite{Whitmeyer2024}, adding actions (in some cases---see \textcite{Whitmeyer2024} and §\ref{sec:appl2:twoside} below), and adding decisions \parencite{Delara2025}.%
	\footnote{`Raising the stakes' means replacing $U$ by $kU$ for some $k \in [1,\infty)$, and `adding actions' means replacing $\mathcal{A}$ by $\mathcal{A} \cup \mathcal{B}$ for some set $\mathcal{B}$. `Adding decisions' means replacing $\mathcal{A}$ by $\mathcal{A} \times \mathcal{B}$ and $U$ by $(a,b) \mapsto U(a)+V(b)$ for some non-empty set $\mathcal{B}$ and function $V : \mathcal{B} \to \R$.}
When information becomes more valuable, the interim payoff $u=w-c$ becomes coarsely more convex by \Cref{corollary:CLC_suffsuff} (\cpageref{corollary:CLC_suffsuff}). Hence by \Cref{theorem:nondecr}, the agent optimally learns no less.%
	\footnote{This recovers part of Theorem~3.1 in \textcite{Whitmeyer2024}.}
The same occurs when information becomes cheaper in the sense that the interim cost $c$ shifts to $\widetilde{c} = c - \psi$, where $\psi : [0,1] \to \R$ is convex.

In case the prior $F_0$ is binary, \Cref{proposition:ido_binary} provides that when information becomes cheaper or more valuable, the decision-maker optimally learns \emph{more.} This result directly applies to Denti's (\citeyear{Denti2022}, §IV) experimental test of the costly-information-acquisition model's comparative-statics predictions.

Beyond the binary-prior case, \Cref{theorem:incr} suggests that results about the decision-maker optimally learning \emph{more} will prove elusive. The mere fact that $u$ is the difference of two convex functions implies almost nothing.%
	\footnote{For any continuous $v : [0,1] \to \R$ and any $\eps>0$, there are convex $c,w : [0,1] \to \R$ such that $\sup_{x \in [0,1]} \abs*{ v(x) - [w(x)-c(x)] } < \eps$ \parencite[see e.g.][Lemma~S.3]{Sinander2022}.}
Rather, satisfaction by $u=w-c$ of the crater property (or sufficient conditions like W-shapedness) depends on the relative curvatures of the interim cost $c$ and interim benefit $w$, requiring either strong assumptions or hard-to-interpret joint restrictions. Some simple examples exist: for instance, $u=w-c$ is W-shaped if $c(x) \coloneqq \kappa \abs*{x-\mu_0}$ for each $x \in [0,1]$, where $\kappa>0$.

\subsection{Persuasion with choice}
\label{sec:appl2:twoside}

Consider an extension of the privately-informed-receiver model in §\ref{sec:appl1:coaxing} above in which the sender's chosen signal informs both a participation decision by the receiver and an action choice by the sender herself. For simplicity, assume that receiver's outside option $r \in (0,1)$ is known to the sender, and that the sender's payoff is separable between the receiver's action and her own: $u \coloneqq \1_{[r,1]} + \alpha w$, where $w : [0,1] \to \R$ is convex and $\alpha \geq 0$. This is depicted in \Cref{fig:twoside:u}.

\begin{figure}
	\centering
	\begin{subfigure}{0.48\textwidth}
		\centering
		\interpolfalse







\begin{tikzpicture}[scale=1, line cap=round]

	\pgfmathsetmacro{\ticklength}{1/14};	
	\pgfmathsetmacro{\samples}{30};		
	\pgfmathsetmacro{\radius}{1.9};			

	\pgfmathsetmacro{\xmax}{5};
	\pgfmathsetmacro{\ymax}{5};
	\pgfmathsetmacro{\axisbuff}{0};
	\pgfmathsetmacro{\primegap}{0};

	\pgfmathsetmacro{\Avex}{0.15};
	\pgfmathsetmacro{\Bvex}{4*(-\Avex)};
	\pgfmathsetmacro{\Cvex}{1.3};
	\pgfmathsetmacro{\jumpx}{0.5*\xmax};
	\pgfmathsetmacro{\jumpy}{0.5*\ymax};
	\pgfmathsetmacro{\WINDOW}{0.35*\xmax};
	\ifinterpol
		\pgfmathsetmacro{\xleft}{\jumpx-\WINDOW/2};
		\pgfmathsetmacro{\xright}{\jumpx+\WINDOW/2};
	\else
		\pgfmathsetmacro{\xleft}{\jumpx};
		\pgfmathsetmacro{\xright}{\jumpx};
	\fi

	\pgfmathsetmacro{\xmult}{1};
	\pgfmathsetmacro{\xplus}{0};
	\pgfmathsetmacro{\ymult}{1};
	\pgfmathsetmacro{\yplus}{0};

	\draw
		[domain={0}:{\jumpx}, variable=\x,
		samples=\samples, very thick, lightgray]
		plot ( { \xplus + \xmult * \x },
		{ \yplus + \ymult * ( \Avex * \x^2 + \Bvex * \x + \Cvex ) } );
	\draw
		[domain={\jumpx}:{\xmax}, variable=\x,
		samples=\samples, very thick, lightgray]
		plot ( { \xplus + \xmult * \x },
		{ \yplus + \ymult * ( \jumpy + \Avex * \x^2 + \Bvex * \x + \Cvex ) } );

	\draw
		[domain={0}:{\xleft}, variable=\x,
		samples=\samples, very thick]
		plot ( { \xplus + \xmult * \x },
		{ \yplus + \ymult * ( \Avex * \x^2 + \Bvex * \x + \Cvex ) } );
	\draw
		[domain={\xright}:{\xmax}, variable=\x,
		samples=\samples, very thick]
		plot ( { \xplus + \xmult * \x },
		{ \yplus + \ymult * ( \jumpy + \Avex * \x^2 + \Bvex * \x + \Cvex ) } );

	\ifinterpol
		\pgfmathsetmacro{\leftsloperaw}{(\ymult/\xmult) * ( 2*\Avex*\xleft + \Bvex )};
		\pgfmathsetmacro{\rightsloperaw}{(\ymult/\xmult) * ( 2*\Avex*\xright + \Bvex )};
		\pgfmathsetmacro{\leftslope}{min(abs(\leftsloperaw),1/abs(\leftsloperaw)};
		\pgfmathsetmacro{\rightslope}{min(abs(\rightsloperaw),1/abs(\rightsloperaw))};
		\draw [very thick]
			({\xplus+\xmult*\xleft},
			{ \yplus + \ymult * ( \Avex * \xleft^2 + \Bvex * \xleft + \Cvex ) } )
			to[out={-atan(\leftslope)},in={180+atan(\rightslope)}]
			({\xplus+\xmult*\xright},
			{ \yplus + \ymult * ( \jumpy + \Avex * \xright^2 + \Bvex * \xright + \Cvex ) } );
	\fi

	\draw[-] ( {-(\axisbuff+\primegap)}, 0 )
		-- ( {(\xplus+\xmult*(\xmax))+(\axisbuff+\primegap)}, 0 );
	\draw[->]
		( {-(\axisbuff+\primegap)}, {(\yplus+\ymult*0)+\axisbuff} )
		-- ( {-(\axisbuff+\primegap)}, {(\yplus+\ymult*(1.05*\ymax))+\axisbuff} );
	\draw[-,opacity=0]
		( {-(\axisbuff+\primegap+0.065)}, {-0.065} )
		-- ( {-(\axisbuff+\primegap+0.065)}, {-0.065} );

	\draw[-] ( { \xplus + \xmult * \jumpx }, - \ticklength )
		-- ( { \xplus + \xmult * \jumpx }, \ticklength );
	\draw ( { \xplus + \xmult * \jumpx }, 0 )
		node[anchor=north] {$\strut r$};
	\draw[-] ( { \xplus + \xmult * \xmax }, - \ticklength )
		-- ( { \xplus + \xmult * \xmax }, \ticklength );
	\draw ( { \xplus + \xmult * \xmax }, 0 )
		node[anchor=north] {$\strut 1$};

	\ifinterpol
		\draw[lightgray] ( { \xplus + \xmult * \jumpx },
			{ \yplus + \ymult * ( \Avex * \jumpx^2 + \Bvex * \jumpx + \Cvex ) } )
			node[anchor=west] {$\strut \boldsymbol{u}$};
		\draw ( { \xplus + \xmult * ((\jumpx+2*\xmax)/3) },
			{ \yplus + \ymult * ( \jumpy + \Avex * ((\jumpx+2*\xmax)/3)^2 + \Bvex * ((\jumpx+2*\xmax)/3) + \Cvex } )
			node[anchor=south east] {$\strut \boldsymbol{\widetilde{u}}$};
	\else
		\draw ( { \xplus + \xmult * ((\jumpx+2*\xmax)/3) },
			{ \yplus + \ymult * ( \jumpy + \Avex * ((\jumpx+2*\xmax)/3)^2 + \Bvex * ((\jumpx+2*\xmax)/3) + \Cvex } )
			node[anchor=south east] {$\strut \boldsymbol{u}$};
	\fi

\end{tikzpicture}

		\caption{Interim payoff $u$.}
		\label{fig:twoside:u}
	\end{subfigure}
	\begin{subfigure}{0.48\textwidth}
		\centering
		\interpoltrue







\begin{tikzpicture}[scale=1, line cap=round]

	\pgfmathsetmacro{\ticklength}{1/14};	
	\pgfmathsetmacro{\samples}{30};		
	\pgfmathsetmacro{\radius}{1.9};			

	\pgfmathsetmacro{\xmax}{5};
	\pgfmathsetmacro{\ymax}{5};
	\pgfmathsetmacro{\axisbuff}{0};
	\pgfmathsetmacro{\primegap}{0};

	\pgfmathsetmacro{\Avex}{0.15};
	\pgfmathsetmacro{\Bvex}{4*(-\Avex)};
	\pgfmathsetmacro{\Cvex}{1.3};
	\pgfmathsetmacro{\jumpx}{0.5*\xmax};
	\pgfmathsetmacro{\jumpy}{0.5*\ymax};
	\pgfmathsetmacro{\WINDOW}{0.35*\xmax};
	\ifinterpol
		\pgfmathsetmacro{\xleft}{\jumpx-\WINDOW/2};
		\pgfmathsetmacro{\xright}{\jumpx+\WINDOW/2};
	\else
		\pgfmathsetmacro{\xleft}{\jumpx};
		\pgfmathsetmacro{\xright}{\jumpx};
	\fi

	\pgfmathsetmacro{\xmult}{1};
	\pgfmathsetmacro{\xplus}{0};
	\pgfmathsetmacro{\ymult}{1};
	\pgfmathsetmacro{\yplus}{0};

	\draw
		[domain={0}:{\jumpx}, variable=\x,
		samples=\samples, very thick, lightgray]
		plot ( { \xplus + \xmult * \x },
		{ \yplus + \ymult * ( \Avex * \x^2 + \Bvex * \x + \Cvex ) } );
	\draw
		[domain={\jumpx}:{\xmax}, variable=\x,
		samples=\samples, very thick, lightgray]
		plot ( { \xplus + \xmult * \x },
		{ \yplus + \ymult * ( \jumpy + \Avex * \x^2 + \Bvex * \x + \Cvex ) } );

	\draw
		[domain={0}:{\xleft}, variable=\x,
		samples=\samples, very thick]
		plot ( { \xplus + \xmult * \x },
		{ \yplus + \ymult * ( \Avex * \x^2 + \Bvex * \x + \Cvex ) } );
	\draw
		[domain={\xright}:{\xmax}, variable=\x,
		samples=\samples, very thick]
		plot ( { \xplus + \xmult * \x },
		{ \yplus + \ymult * ( \jumpy + \Avex * \x^2 + \Bvex * \x + \Cvex ) } );

	\ifinterpol
		\pgfmathsetmacro{\leftsloperaw}{(\ymult/\xmult) * ( 2*\Avex*\xleft + \Bvex )};
		\pgfmathsetmacro{\rightsloperaw}{(\ymult/\xmult) * ( 2*\Avex*\xright + \Bvex )};
		\pgfmathsetmacro{\leftslope}{min(abs(\leftsloperaw),1/abs(\leftsloperaw)};
		\pgfmathsetmacro{\rightslope}{min(abs(\rightsloperaw),1/abs(\rightsloperaw))};
		\draw [very thick]
			({\xplus+\xmult*\xleft},
			{ \yplus + \ymult * ( \Avex * \xleft^2 + \Bvex * \xleft + \Cvex ) } )
			to[out={-atan(\leftslope)},in={180+atan(\rightslope)}]
			({\xplus+\xmult*\xright},
			{ \yplus + \ymult * ( \jumpy + \Avex * \xright^2 + \Bvex * \xright + \Cvex ) } );
	\fi

	\draw[-] ( {-(\axisbuff+\primegap)}, 0 )
		-- ( {(\xplus+\xmult*(\xmax))+(\axisbuff+\primegap)}, 0 );
	\draw[->]
		( {-(\axisbuff+\primegap)}, {(\yplus+\ymult*0)+\axisbuff} )
		-- ( {-(\axisbuff+\primegap)}, {(\yplus+\ymult*(1.05*\ymax))+\axisbuff} );
	\draw[-,opacity=0]
		( {-(\axisbuff+\primegap+0.065)}, {-0.065} )
		-- ( {-(\axisbuff+\primegap+0.065)}, {-0.065} );

	\draw[-] ( { \xplus + \xmult * \jumpx }, - \ticklength )
		-- ( { \xplus + \xmult * \jumpx }, \ticklength );
	\draw ( { \xplus + \xmult * \jumpx }, 0 )
		node[anchor=north] {$\strut r$};
	\draw[-] ( { \xplus + \xmult * \xmax }, - \ticklength )
		-- ( { \xplus + \xmult * \xmax }, \ticklength );
	\draw ( { \xplus + \xmult * \xmax }, 0 )
		node[anchor=north] {$\strut 1$};

	\ifinterpol
		\draw[lightgray] ( { \xplus + \xmult * \jumpx },
			{ \yplus + \ymult * ( \Avex * \jumpx^2 + \Bvex * \jumpx + \Cvex ) } )
			node[anchor=west] {$\strut \boldsymbol{u}$};
		\draw ( { \xplus + \xmult * ((\jumpx+2*\xmax)/3) },
			{ \yplus + \ymult * ( \jumpy + \Avex * ((\jumpx+2*\xmax)/3)^2 + \Bvex * ((\jumpx+2*\xmax)/3) + \Cvex } )
			node[anchor=south east] {$\strut \boldsymbol{\widetilde{u}}$};
	\else
		\draw ( { \xplus + \xmult * ((\jumpx+2*\xmax)/3) },
			{ \yplus + \ymult * ( \jumpy + \Avex * ((\jumpx+2*\xmax)/3)^2 + \Bvex * ((\jumpx+2*\xmax)/3) + \Cvex } )
			node[anchor=south east] {$\strut \boldsymbol{u}$};
	\fi

\end{tikzpicture}

		\caption{Regular approximation $\widetilde{u}$.}
		\label{fig:twoside:reg}
	\end{subfigure}
	\caption{Application to persuasion with choice.}
	\label{fig:twoside}
\end{figure}

The interim payoff $u$ may be approximated as in \Cref{fig:twoside:reg} by a regular W-shaped function $\widetilde{u}$ without affecting the set of optimal posterior-mean distributions. Thus the crater property is satisfied, so \Cref{theorem:incr} is applicable.

Regardless of the prior $F_0$, the sender provides more information whenever her own action becomes more important ($\alpha$ increases) or information becomes more valuable in the sense defined in §\ref{sec:appl2:costly} above (a shift of $w$). This follows from \Cref{theorem:incr} and \Cref{corollary:CLC_suffsuff} (\cpageref{corollary:CLC_suffsuff}), since both kinds of shift amount to adding a convex function to the interim payoff $u$.



\begin{appendices}

\crefalias{section}{appsec}
\crefalias{subsection}{appsec}
\crefalias{subsubsection}{appsec}

\section{Product structure of \texorpdfstring{`}{‘}less informative than\texorpdfstring{'}{’}}
\label{app:product}

In this appendix, we characterise
the `less informative than' order on distributions
in terms of the product order on convex functions $[0,1] \to \R$.
This result will be used in \cref{app:pf_thm_nondecr,app:pf_thm_incr} below.

Given a prior $F_0$, we write $\mathcal{F}$ for the space of all feasible distributions.
For each $F \in \mathcal{F}$,
let $C_F$ denote the function $[0,1] \to \R$
given by $C_F(x) \coloneqq \int_0^x F$ for each $x \in [0,1]$.
Let $\mathcal{C}$ be the space of all convex functions $C : [0,1] \to \R$
whose right-hand derivative $C^+ : [0,1) \to \R$ satisfies $0 \leq C^+ \leq 1$
and which obey $C(x) \leq \int_0^x F_0$ for every $x \in [0,1]$, with equality at $x=0$ and $x=1$.
Given any $C \in \mathcal{C}$,
define $C^+(1) \coloneqq 1$ by convention.
The \emph{product order} (or `pointwise order') on $\mathcal{C}$
is the partial order in which $C$ smaller than $C'$
if and only if $C(x) \leq C'(x)$ for every $x \in [0,1]$.

The following extends Gentzkow and Kamenica's (\citeyear{GentzkowKamenica2016}) observation:
not only do distributions $F$ correspond one-to-one with convex functions $C_F$,
but greater informativeness of $F$
is equivalent to $C_F$ being pointwise higher.

\begin{lemma}
	\label{lemma:isomorphism}
	Fix a prior $F_0$.
	The map $F \mapsto C_F$ is a bijection $\mathcal{F} \to \mathcal{C}$
	(with inverse $C \mapsto C^+$),
	and is increasing when $\mathcal{F}$ is ordered by `less informative than' and $\mathcal{C}$ has the product order.
	Thus $\mathcal{F}$ and $\mathcal{C}$ are order-isomorphic.
\end{lemma}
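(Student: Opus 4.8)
The plan is to verify directly that $F \mapsto C_F$ and $C \mapsto C^+$ are well-defined and mutually inverse, and then to observe that on $\mathcal{F}$ the relation `less informative than' is exactly the pointwise order pulled back from $\mathcal{C}$. Throughout I would use Kolotilin's (\citeyear{Kolotilin2018}) characterisation, so that $\mathcal{F}$ is precisely the set of mean-preserving contractions of $F_0$, together with the definition recalled in \cref{sec:model}: $F$ is a mean-preserving contraction of $G$ iff $\int_0^x F \leq \int_0^x G$ for all $x \in [0,1]$ with equality at $x = 1$.

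First I would check that $F \mapsto C_F$ maps $\mathcal{F}$ into $\mathcal{C}$ and that $C \mapsto C^+$ is a left inverse. For $F \in \mathcal{F}$, the function $C_F(x) = \int_0^x F$ is convex, and its right-hand derivative is $C_F^+ = F$ pointwise (this uses right-continuity of the CDF $F$: the right average of $F$ over $[x, x+h]$ tends to $F(x)$ as $h \downarrow 0$), so in particular $0 \leq C_F^+ \leq 1$; moreover $C_F(0) = 0$, and $C_F(x) = \int_0^x F \leq \int_0^x F_0$ for every $x$ with equality at $x = 1$, since $F$ is a mean-preserving contraction of $F_0$ (the equality at $x = 1$ being preservation of the mean). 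Hence $C_F \in \mathcal{C}$, and $C_F^+ = F$ recovers $F$, so $F \mapsto C_F$ is injective.

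Next I would show that $C \mapsto C^+$ maps $\mathcal{C}$ into $\mathcal{F}$ and is also a right inverse, which then gives the bijection. Fix $C \in \mathcal{C}$ and set $F \coloneqq C^+$ on $[0,1)$ and $F(1) \coloneqq 1$. Convexity makes $C^+$ nondecreasing and right-continuous, and it lies in $[0,1]$ by hypothesis, so $F$ is a genuine CDF of a distribution on $[0,1]$ (the convention $F(1) = 1$ supplying any leftover mass as an atom at $1$). Since a convex function is absolutely continuous on $[0,1]$ with a.e.\ derivative equal to $C^+$, the fundamental theorem of calculus gives $C(x) = C(0) + \int_0^x C^+ = \int_0^x F = C_F(x)$ for every $x$, using $C(0) = 0$ and the equality $C(1) = \int_0^1 F_0$ to fix the value at the right endpoint; so $C_F = C$. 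Finally $F \in \mathcal{F}$, because the constraints defining $\mathcal{C}$ say exactly that $\int_0^x F = C(x) \leq \int_0^x F_0$ for all $x$ with equality at $x = 1$, i.e.\ $F$ is a mean-preserving contraction of $F_0$, hence feasible by Kolotilin's theorem. Combining the two directions, $F \mapsto C_F$ is a bijection $\mathcal{F} \to \mathcal{C}$ with inverse $C \mapsto C^+$.

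For monotonicity: if $F, G \in \mathcal{F}$, then $\int_0^1 F = \int_0^1 F_0 = \int_0^1 G$, so the `equality at $x = 1$' clause is automatic, and `$F$ less informative than $G$' holds iff $\int_0^x F \leq \int_0^x G$ for every $x$, i.e.\ iff $C_F \leq C_G$ pointwise. Thus $F \mapsto C_F$ and its inverse are both order-preserving, so it is an order-isomorphism from $\mathcal{F}$ ordered by `less informative than' to $\mathcal{C}$ with the product order. The argument is elementary; the only steps needing any care are the identifications $C_F^+ = F$ and $C = C_F$ (handling right-hand derivatives and invoking absolute continuity of convex functions, with the equality constraint at $x = 1$ doing the work at the right endpoint), and noting that within $\mathcal{F}$ the equality-at-$1$ part of the definition of mean-preserving contraction comes for free.
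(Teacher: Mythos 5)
Your proof is correct and takes the same approach as the paper's. The paper's own proof is terser: it asserts that $F\mapsto C_F$ carries $\mathcal{F}$ into $\mathcal{C}$, that $C\mapsto C^+$ carries $\mathcal{C}$ into $\mathcal{F}$, and that $C_F^+=F$, then concludes bijectivity; you additionally spell out the right-inverse identity $C_{C^+}=C$ (via the absolute-continuity/FTC step), which is actually needed to justify surjectivity of $F\mapsto C_F$ and which the paper leaves to the reader. The one place to be careful, in both your argument and the paper's, is that a convex $C:[0,1]\to\R$ need not a priori be continuous at the endpoint $1$; your FTC step and the claim that $\mathcal C$ consists of continuous functions both implicitly rely on this, and it is worth noting that boundedness of $C^+$ together with the endpoint equality $C(1)=\int_0^1 F_0$ and the pointwise domination $C\le\int_0^{\cdot}F_0$ should be invoked to rule out an upward jump at $1$.
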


\begin{proof}
	Clearly the map $F \mapsto C_F$ carries $\mathcal{F}$ into $\mathcal{C}$, and is increasing.
	The map $C \mapsto C^+$ similarly carries $\mathcal{C}$ into $\mathcal{F}$, and by inspection $F = C_F^+$ for every $F \in \mathcal{F}$;
	so we've found an inverse of $F \mapsto C_F$ defined on all of $\mathcal{C}$, meaning that $F \mapsto C_F$ is bijective.
\end{proof}

\begin{corollary}
	\label{corollary:lattice}
	For any given prior $F_0$,
	the set $\mathcal{F}$ of all feasible distributions
	ordered by `less informative than'
	is a complete lattice.
\end{corollary}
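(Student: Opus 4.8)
The plan is to invoke the order-isomorphism of \Cref{lemma:isomorphism} to reduce the claim to showing that $\mathcal{C}$, ordered by the product order, is a complete lattice. Since a poset is a complete lattice exactly when every one of its subsets has an infimum, I would establish this property for $\mathcal{C}$. For the empty subset it is immediate: $C_{F_0}$ belongs to $\mathcal{C}$ and, by definition of $\mathcal{C}$, dominates every member of $\mathcal{C}$ pointwise, so $C_{F_0}$ is the greatest element of $\mathcal{C}$ and hence is $\inf \nullset$. It remains to find the infimum of an arbitrary non-empty $\mathcal{D} \subseteq \mathcal{C}$.

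The candidate infimum of such a $\mathcal{D}$ is $\vex g$, the largest convex minorant of the pointwise infimum $g \coloneqq \inf_{C \in \mathcal{D}} C$. The first step is to note that $g$ inherits the salient features of the members of $\mathcal{D}$: each $C \in \mathcal{D}$ is convex with $C(0) = 0$ and $0 \leq C^+ \leq 1$, hence non-negative, nondecreasing and $1$-Lipschitz, and lies below $C_{F_0}$ with $C(1) = C_{F_0}(1)$; passing to the infimum, $g$ is non-negative, nondecreasing and $1$-Lipschitz, lies below $C_{F_0}$, and satisfies $g(0) = 0$ and $g(1) = C_{F_0}(1)$.

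The crux---and the step I expect to be the main (if not deep) obstacle---is verifying that $\vex g \in \mathcal{C}$. Convexity holds by construction. The endpoints $0$ and $1$ are contact points of $\vex g$ (the only convex combination of points of $[0,1]$ that equals $0$, respectively $1$, is the trivial one), so $\vex g(0) = g(0) = 0$ and $\vex g(1) = g(1) = C_{F_0}(1)$; and $\vex g \leq g \leq C_{F_0}$. Finally one must check $0 \leq (\vex g)^+ \leq 1$ on $[0,1)$. The lower bound is routine: $\vex g \geq 0 = \vex g(0)$, so $\vex g$ is minimised at $0$ and is therefore nondecreasing. For the upper bound I would argue pointwise: if $s \in [0,1)$ is a contact point then $\vex g(s+h) - \vex g(s) \leq g(s+h) - g(s) \leq h$ for small $h > 0$, so $(\vex g)^+(s) \leq 1$; and if $s$ lies in a maximal open interval on which $\vex g < g$, then $\vex g$ is affine there, with slope equal to the secant slope of $g$ between the interval's endpoints (which are contact points), hence at most $1$ because $g$ is $1$-Lipschitz.

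The final step is routine bookkeeping. By construction $\vex g \leq g \leq C$ for every $C \in \mathcal{D}$, so $\vex g$ is a lower bound of $\mathcal{D}$ within $\mathcal{C}$; and any lower bound $D \in \mathcal{C}$ of $\mathcal{D}$ satisfies $D \leq g$ and, being convex, therefore $D \leq \vex g$. Hence $\vex g = \inf \mathcal{D}$. Every subset of $\mathcal{C}$ thus has an infimum, so $\mathcal{C}$---and, via \Cref{lemma:isomorphism}, also $\mathcal{F}$---is a complete lattice. (As a by-product one sees that suprema are even simpler: the pointwise supremum of a non-empty subfamily of $\mathcal{C}$ already lies in $\mathcal{C}$, though this is not needed.)
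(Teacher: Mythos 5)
Your proposal is correct and follows essentially the same route as the paper: reduce via the order-isomorphism of \Cref{lemma:isomorphism} to showing $\mathcal{C}$ is a complete lattice under the product order, and take the convex envelope of the pointwise infimum as the candidate infimum. The paper also explicitly notes that the pointwise supremum serves as the supremum, but otherwise merely asserts that the convex envelope "is an element of $\mathcal{C}$", whereas you spell out the verification (contact points at $0$ and $1$, slope bounds via the $1$-Lipschitz property of the pointwise infimum)---a useful elaboration of a step the paper treats as obvious.
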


\begin{proof}
	By \Cref{lemma:isomorphism},
	we need only show that when $\mathcal{C}$ has the product order,
	it holds for any family $\mathcal{C}' \subseteq \mathcal{C}$
	that $C^\star \coloneqq \sup_{C \in \mathcal{C}'} C$ is its least upper bound,
	and that the convex envelope of $\inf_{C \in \mathcal{C}'} C$, which we'll call $C_\star$, is its greatest lower bound.
	For the former, $C^\star$ clearly belongs to $\mathcal{C}$, is clearly an upper bound of $\mathcal{C}'$,
	and is clearly pointwise smaller than any other upper bound.
	For the latter, $C_\star$ is an element of $\mathcal{C}$, is clearly a lower bound of $\mathcal{C}'$,
	and exceeds every other lower bound by definition of the convex envelope.
\end{proof}

\section{Proof of \texorpdfstring{\Cref{theorem:nondecr} (\cpageref{theorem:nondecr})}{Theorem \ref{theorem:nondecr} (p. \pageref{theorem:nondecr})}}
\label{app:pf_thm_nondecr}

We shall prove the following generalisation of \Cref{theorem:nondecr}. Recall that for two distributions $F$ and $G$, the \emph{order interval} $[G,F]$ is the set of all distributions that are more informative than $G$ and less informative than $F$.

\begin{namedthm}[\Cref*{theorem:nondecr}$\boldsymbol{^*}$.]
	\label{theorem:nondecr_extended}
	For upper semi-continuous $u,v : [0,1] \to \R$, the following are equivalent:
	
	\begin{enumerate}[label=(\roman*)]

		\item \label{item:lessvex}
		$u$ is coarsely less convex than $v$.

		\item \label{item:cs_priors}
		For every distribution $F_0$,
		\eqref{eq:mcs} holds.

		\item \label{item:cs_intervals}
		For all distributions $G_0,F_0$
		such that $G_0$ is less informative than $F_0$
		and $\int u \dd F, \int v \dd G > -\infty$
		for some $F,G \in [G_0,F_0]$,
		\begin{equation*}
			\argmax_{F \in [G_0,F_0]}
			\int u \dd F
			\notstrictlyhigherthan
			\argmax_{F \in [G_0,F_0]}
			\int v \dd F .
		\end{equation*}

	\end{enumerate}
	
\end{namedthm}

In proving \hyperref[theorem:nondecr_extended]{\Cref*{theorem:nondecr}$^*$}, we shall write $\mu_F$ for the mean of a distribution $F$,
and shall sometimes abbreviate `$F$ is less informative than $G$' to `$F \preceq G$'.
For $x,y \in \R$ and $\alpha \in [0,1]$, we shall write $x_\alpha y \coloneqq \alpha x + (1-\alpha) y$.

In \hyperref[theorem:nondecr_extended]{\Cref*{theorem:nondecr}$^*$},
property \ref{item:cs_intervals} implies property \ref{item:cs_priors}
because a distribution is feasible given prior $F_0$
if and only if it belongs to $\left[ \nu, F_0 \right]$, where $\nu$ is the point mass concentrated on $\mu_{F_0}$, and obviously $\int u \dd \nu, \int v \dd \nu > -\infty$.
We shall prove that \ref{item:cs_priors} implies \ref{item:lessvex} and that \ref{item:lessvex} implies \ref{item:cs_intervals}.

\subsection{Proof that \ref{item:cs_priors} implies \ref{item:lessvex}}
\label{app:pf_thm_nondecr:cs_priors_implies_lessvex}

Observe that given $u,v : [0,1] \rightarrow \R$, $u$ is coarsely less convex than $v$ iff for any $x<z$ in $[0,1]$ satisfying
\begin{equation}
	u(x_\alpha z)
	\leq u(x)_\alpha u(z)
	\quad \text{for all $\alpha \in (0,1)$,}
	\tag{$\triangle$}
	\label{eq:chord_convex:if}
\end{equation}
it holds for each $\alpha \in (0,1)$ that
\begin{equation}
	u(x_\alpha z)
	\leq \mathrel{(<)} u(x)_\alpha u(z)
	\quad \text{implies} \quad
	v(x_\alpha z) \leq \mathrel{(<)} v(x)_\alpha v(z) .
	\tag{$\mathord{\Rightarrow} : \alpha$}
	\label{eq:chord_convex:then}
\end{equation}

We prove the contra-positive.
Assume that \ref{item:lessvex} fails,
meaning there are $x<z$ in $[0,1]$ and an $\alpha \in (0,1)$
such that \eqref{eq:chord_convex:if} holds and \eqref{eq:chord_convex:then} fails;
we seek a distribution $F_0$ such that
\begin{equation*}
	M_{F_0}(u)
	\coloneqq \argmax_{F \in \left[\nu,F_0\right]} \int u \dd F
	\strictlyhigherthan
	M_{F_0}(v) \coloneqq \argmax_{F \in \left[\nu,F_0\right]} \int v \dd F ,
\end{equation*}
where $\nu$ denotes the point mass concentrated at $\mu_{F_0}$.
For this, it suffices that
$F_0 \in M_{F_0}(u)$ and $\mu_{F_0} \in M_{F_0}(v)$
(so that $M_{F_0}(u)$ is higher than $M_{F_0}(v)$)
and that either $F_0 \notin M_{F_0}(v)$ or $\mu_{F_0} \notin M_{F_0}(u)$
(so that $M_{F_0}(v)$ is not higher than $M_{F_0}(u)$).
We shall use the standard `concavification' reasoning \parencite[see][]{KamenicaGentzkow2011}. Consider two cases.

\emph{Case~1: $v(x_\alpha z) \leq v(x)_\alpha v(z)$.}
Let $F_0$ be the distribution assigning weight $\alpha$ to $x$ and $1-\alpha$ to $z$, so that $\mu_{F_0} = x_\alpha z$.
By \eqref{eq:chord_convex:if}, $F_0$ belongs to $M_{F_0}(u)$.
Since $v(x_\alpha z) \leq v(x)_\alpha v(z)$ and \eqref{eq:chord_convex:then} fails by hypothesis,
it must be that
$u(x_\alpha z) < u(x)_\alpha u(z)$ and $v(x_\alpha z) = v(x)_\alpha v(z)$, or equivalently
$u(\mu_{F_0}) < \int u \dd F_0$
and $v(\mu_{F_0}) = \int v \dd F_0$.
Then $\mu_{F_0}$ belongs to $M_{F_0}(v)$ but not to $M_{F_0}(u)$.

\emph{Case~2: $v(x_\alpha z) > v(x)_\alpha v(z)$.}
Let $\widehat{v}$ be the concave envelope (i.e. pointwise least majorant) of the restriction of $v$ to $[x,z]$, and note that $\widehat{v}(x_\alpha z) \geq v(x_\alpha z) > v(x)_\alpha v(z)$
and (since $v$ is upper semi-continuous)
that $\widehat{v}(x) = v(x)$ and $\widehat{v}(z) = v(z)$.
Then there is a $\beta \in (0,1)$ such that $\widehat{v}$ is not affine on any neighbourhood of $x_{\beta} z$,
and $\widehat{v}(x_{\beta} z) = v(x_{\beta}z)$ since $v$ is upper semi-continuous.
Let $F_0$ be the distribution assigning weight $\beta$ to $x$ and $1-\beta$ to $z$, so that $\mu_{F_0} = x_{\beta} z$.
Then $F_0 \notin M_{F_0}(v)$ since $\widehat{v}$ is not affine, and $\mu_{F_0} \in M_{F_0}(v)$ since $\widehat{v}(\mu_{F_0}) = v(\mu_{F_0})$.
And $F_0$ belongs to $M_{F_0}(u)$ by \eqref{eq:chord_convex:if}.
\qed

\subsection{Proof that \ref{item:lessvex} implies \ref{item:cs_intervals}, using lemmata}
\label{app:pf_thm_nondecr:lessvex_implies_cs_intervals}

\begin{definition}
	\label{definition:ido}
	Let $u,v : [0,1] \to \R$ be upper semi-continuous.
	Given distributions $F,H$ such that $F \preceq H$, we say that \emph{$u$ is dominated by $v$ on $[F,H]$} iff
	\begin{equation*}
		\int u \dd H > -\infty , \quad
		\int v \dd F > -\infty , \quad
		\text{and} \quad
		H \in \argmax_{G \in [F,H]} \int u \dd G
	\end{equation*}
	implies that $\int v \dd H \geq \int v \dd F$, with the inequality strict if $\int u \dd H > \int u \dd F$.
	We say that \emph{$u$ is interval-dominated by $v$} iff for all distributions $F \preceq H$, $u$ is dominated by $v$ on $[F,H]$.
\end{definition}

Interval-dominance is a standard concept in the comparative-statics literature, due to \textcite{QuahStrulovici2007extensions,QuahStrulovici2009}. Our definition is slightly adapted from the standard one in order to deal with the `$-\infty$' case; this adaptation ensures that standard results remain applicable.

Our proof will use some measure-theoretic concepts and lemmata. Recall that a \emph{distribution} is a CDF $[0,1] \to [0,1]$. A \emph{distribution family} is a collection $\lambda = (\lambda_x)_{x \in [0,1]}$, where $\lambda_x$ is a distribution for each $x \in [0,1]$, and $x \mapsto \int w \dd\lambda_x$ is Borel measurable for any continuous $w : [0,1] \to \R$.
For any distribution family $\lambda$ and any distribution $F$, define $F^\lambda : [0,1] \to [0,1]$ by
\begin{equation*}
	F^\lambda(x) \coloneqq \int \lambda_y(x)F(\dd y)
	\quad \text{for each $x \in [0,1]$.}
\end{equation*}
It follows from the next result that $F^\lambda$ is well-defined. (Specifically, part~\ref{item:fam_usc} yields that $y \mapsto \int \1_{[0,x]} \dd \lambda_y = \lambda_y(x)$ is Borel measurable, hence $F$-integrable.)

\begin{lemma}
	\label{lemma:dist_family}
	Let $\lambda$ be a distribution family, let $F$ be a distribution, and let $u : [0,1] \to \R$ be upper semi-continuous.
	Then 
	\begin{enumerate}[label=(\alph*)]
		
		\item \label{item:fam_usc} 
		$x \mapsto \int u \dd\lambda_x$ is Borel measurable, and
		
		\item \label{item:fam_product} 
		$F^\lambda$ is a distribution, and $\int u \dd F^\lambda = \int \int u \dd \lambda_x F(\dd x)$.
		
	\end{enumerate}
	Moreover, for any distribution family $\nu$ such that $\nu_x \preceq \lambda_x$ for $F$-a.e. $x \in [0,1]$,
	\begin{enumerate}[label=(\alph*), resume]
			
		\item \label{item:fam_info}
		$F^\nu \preceq F^\lambda$, and 
		
		\item \label{item:fam_opt}
		there exists a distribution family $(\rho_x)_{x \in [0,1]}$ such that 
		\begin{equation}
			\label{eq:fam_opt}
			\rho_x \in \argmax_{G \in [\nu_x,\lambda_x]} \int u \dd G \quad \text{for $F$-a.e. $x \in [0,1]$.}
		\end{equation}
		
	\end{enumerate}
\end{lemma}

\begin{lemma}
	\label{lemma:ido}
	Fix a distribution $F$ and distribution families $\lambda, \nu$ such that $\nu_x \preceq \lambda_x$ for all $x \in [0,1]$. 
	Let $u,v : [0,1] \to \R$ be upper semi-continuous, and suppose that $u$ is dominated by $v$ on $[\nu_x,\lambda_x]$ for all $x \in [0,1]$.
	Then $u$ is dominated by $v$ on $\bigl[F^{\nu},F^\lambda\bigr]$.
\end{lemma}

We relegate the proofs of \Cref{lemma:dist_family,lemma:ido} to \cref{app:pf_thm_nondecr:lemmata_pfs} below.

\begin{proof}[Proof that \ref{item:lessvex} implies \ref{item:cs_intervals}]
	Let $u,v : [0,1] \to \R$ be upper semi-continuous, with $u$ coarsely less convex than $v$.
	We shall show that $u$ is interval-dominated by $v$.
	This suffices by Proposition~5 in \textcite{QuahStrulovici2007extensions}.%
		\footnote{Our definition of interval dominance is adapted from the standard one so as to allow for the possibility that some integrals may be $-\infty$. Under our definition, Proposition~5 in \textcite{QuahStrulovici2007extensions} remains valid, with the same proof.}

	So fix any distributions $F \preceq H$; we must show that $u$ is dominated by $v$ on $[F,H]$.
	We consider three cases of increasing generality.%
	\footnote{We thank Ian Jewitt for suggesting this tripartite argument.}
	Recall that we call a distribution \emph{binary} iff its support comprises at most two values.

	\medskip

	\emph{Case~1: $F$ is a point mass and $H$ is binary.}
	If $H$ is a point mass, then there is nothing to prove. Assume for the remainder that $\supp(H) = \{x,z\}$ where $x < z$.
	By the standard `concavification' reasoning \parencite[see][]{KamenicaGentzkow2011}, $H \in \argmax_{G \in [F,H]} \int u \dd G$ iff \eqref{eq:chord_convex:if} holds.
	Moreover, choosing $\alpha \in (0,1)$ so that $F$ is the point mass at $x_\alpha z$, $\int u \dd H \geq \mathrel{(>)} \int u \dd F$ holds iff $u(x_\alpha z) \leq \mathrel{(<)} u(x)_\alpha u(z)$, and $\int v \dd H \geq \mathrel{(>)} \int v \dd F$ holds iff $v(x_\alpha z) \leq \mathrel{(<)} v(x)_\alpha v(z)$. Since $u$ is coarsely less convex than $v$, it follows that $u$ is dominated by $v$ on $[F,H]$.

	\medskip

	\emph{Case~2: $F$ is a point mass.}
	By \Cref{lemma:ido} and the previous case, it suffices to exhibit distribution families $\nu,\lambda$ and a distribution $G$ such that $F = G^\nu$, $H = G^\lambda$, and for all $x \in [0,1]$, $\nu_x \preceq \lambda_x$, $\nu_x$ is a point mass, and $\lambda_x$ is binary.

	For each $x \in [0,1]$, let $\nu_x$ be the point mass at $\mu_H$; clearly $\nu = (\nu_x)_{x \in [0,1]}$ is a distribution family.
	Toward constructing $\lambda$ and $G$, let $\mathcal{F}$ be the set of all distributions with mean $\mu_H$, and let $\mathcal{B}$ be the set of all elements of $\mathcal{F}$ that are binary.
	By Theorem~2.1 in \textcite{Karr1983}, $\mathcal{B}$ is precisely the set of extreme points of $\mathcal{F}$.
	Moreover, the topology of weak convergence makes $\mathcal{F}$ compact and metrisable by Prokhorov's theorem \parencite[e.g.][Theorems~5.1 and 6.8]{Billingsley1999}.
	Hence $\mathcal{B}$ is a Borel subset of $\mathcal{F}$ \parencite[e.g.][Proposition~1.3]{Phelps2001} and, by Choquet's theorem \parencite[e.g.][p.~14]{Phelps2001}, there is a Borel probability measure $\pi$ on $\mathcal{F}$ such that $\pi(\mathcal{B})=1$ and
	\begin{equation*}
		\int w \dd H = \int \int w \dd L \pi(\dd L)
		\quad \text{for any continuous $w : [0,1] \to \R$.}
	\end{equation*}

	Since $\mathcal{F}$ is compact and metrisable, it is a standard Borel space. Hence by the Borel isomorphism theorem \parencite[e.g.][Theorem~3.3.13]{Srivastava1998}, there exists a Borel measurable bijection $\phi : [0,1] \to \mathcal{F}$ with Borel measurable inverse $\phi^{-1}$.
	Let $G$ be the CDF of the pushforward of $\pi$ by $\phi^{-1}$.
	Since $\pi(\mathcal{B})=1$, there exists a Borel measurable $\lambda : [0,1] \to \mathcal{B}$ such that $\lambda = \phi$ $G$-a.e.%
		\footnote{For example, $\lambda \coloneqq \1_{\phi^{-1}(\mathcal{B})} \times \phi + \1_{[0,1] \setminus \phi^{-1}(\mathcal{B})} \times L$, where $L \in \mathcal{B}$.}
	Since $\lambda$ is Borel measurable, it is a distribution family \parencite[e.g.][Theorem~IV.1.6]{Warga1972}.
	We have $G^\nu = F$ since $F$ is the point mass at $\mu_H$ (as $F \preceq H$), and for all $x \in [0,1]$, $\nu_x \preceq \lambda_x$ and $\lambda_x$ is binary.
	Finally, to show that $G^\lambda = H$, observe that $\pi$ is the pushforward by $\phi$ of the Borel measure $A \mapsto \int_A \dd G$, and thus $\pi$ equals the pushforward by $\lambda$ of $A \mapsto \int_A \dd G$. Hence
	\begin{equation*}
		\int w \dd H
		= \int \int w \dd L \pi(\dd L)
		= \int \int w \dd \lambda_x G(\dd x)
		= \int w \dd G^\lambda
	\end{equation*}
	for all continuous $w : [0,1] \to \R$, where the last equality follows from \Cref{lemma:dist_family}\ref{item:fam_product}. It follows that $H = G^\lambda$.

	\medskip

	\emph{Case~3: $F$ and $H$ are arbitrary.}
	By \Cref{lemma:ido} and the previous case, it suffices to exhibit distribution families $\nu,\lambda$ such that $F = F^\nu$, $H = F^\lambda$, and for all $x \in [0,1]$, $\nu_x \preceq \lambda_x$ and $\nu_x$ is a point mass. To that end, for each $x \in [0,1]$, let $\nu_x$ be the point mass at $x$; clearly $\nu = (\nu_x)_{x \in [0,1]}$ is a distribution family, and $F = F^\nu$.
	By Blackwell's theorem \parencite[e.g.][p.~94]{Phelps2001}, there exists a distribution family $\lambda = (\lambda_x)_{x \in [0,1]}$ such that $\mu_{\lambda_x} = x$ for all $x \in [0,1]$ and $H = F^\lambda$; clearly $\nu_x \preceq \lambda_x$ for each $x \in [0,1]$.
\end{proof}

\subsection{Proofs of the measure-theoretic lemmata}
\label{app:pf_thm_nondecr:lemmata_pfs}

\begin{proof}[Proof of \Cref{lemma:dist_family}]
	For \ref{item:fam_usc}, recall that since $u$ is upper semi-continuous, it is the pointwise limit of a pointwise decreasing sequence $(u_n)_{n \in \N}$ of continuous functions. By the monotone convergence theorem, $x \mapsto \int u \dd\lambda_x$ is the pointwise limit of the (pointwise decreasing) sequence $x \mapsto \int u_n \dd\lambda_x$ of Borel measurable functions. Hence $x \mapsto \int u \dd\lambda_x$ is Borel measurable.

	For \ref{item:fam_product}, note that $w \mapsto \int\int w \dd \lambda_x F(\dd x)$ defines a continuous linear functional on the space of continuous functions $w: [0,1] \to \R$ endowed with the supremum norm, mapping positive functions to positive values and constant functions to their images.
	Hence by the Riesz--Markov representation theorem \parencite[e.g.][Theorem~14.12]{AliprantisBorder2006}, there exists a unique distribution $G$ such that $\int w \dd G = \int\int w \dd \lambda_x F(\dd x) $ for every continuous $w : [0,1] \to \R$. Moreover,
	\begin{multline*}
		\int u \dd G 
		= \lim_{n \to \infty} \int u_n \dd G 
		= \lim_{n \to \infty} \int \int u_n \dd \lambda_x F(\dd x)
		\\
		= \int \lim_{n \to \infty} \int u_n \dd \lambda_x F(\dd x)
		= \int \int u \dd \lambda_x F(\dd x) ,
	\end{multline*}
	where the first, third and fourth equalities follow from the monotone convergence theorem.
	For any $x \in [0,1]$, the above argument with $u$ replaced by $\1_{[0,x]}$ yields $G(x) = F^\lambda(x)$, showing that $G = F^\lambda$; hence \ref{item:fam_product} holds.

	For \ref{item:fam_info}, given any convex function $\phi : [0,1] \to \R$, we have $\int \phi \dd\nu_x \leq \int \phi \dd \lambda_x$ for $F$-a.e. $x \in [0,1]$ since $\nu_x \preceq \lambda_x$ for $F$-a.e. $x \in [0,1]$, so that 
	\begin{equation*}
		\int \phi \dd F^{\nu} = \int \int \phi \dd \nu_x F(\dd x) \leq \int \int \phi \dd\lambda_x F(\dd x) = \int \phi \dd F^\lambda
	\end{equation*}
	where the equalities follow from part~\ref{item:fam_product} since $\phi$ is convex and thus upper semi-continuous.

	For \ref{item:fam_opt}, let $\mathcal{G}$ be the space of all distributions endowed with the topology of weak convergence, and let $\mathcal{D}$ be the set of all pairs $(G,H) \in \mathcal{G}^2$ that satisfy $G \preceq H$, equipped with the product topology. $\mathcal{G}$ is separable and metrisable by Prokhorov's theorem \parencite[e.g.][Theorem~6.8]{Billingsley1999}.
	Hence by the measurable maximum theorem \parencite[e.g.][Theorem~18.19]{AliprantisBorder2006}, for each $n \in \N$, the correspondence $\mathcal{D} \Rightarrow \mathcal{G}$ given by 
	\begin{equation*}
		(G,H) \mapsto \argmax_{L \in [G,H]}\int u_n \dd L
	\end{equation*}
	admits a Borel measurable selection $R^n : \mathcal{D} \rightarrow \mathcal{G}$, since the correspondence $(G,H) \mapsto [G,H]$ is continuous with non-empty and compact values, and the map $L \mapsto \int u_n \dd L$ is continuous. 

	A collection $(\pi_x)_{x \in [0,1]} \subseteq \mathcal{G}$ is a distribution family if and only if $x \mapsto \pi_x$ is a Borel measurable map $[0,1] \to \mathcal{G}$ \parencite[e.g.][Theorem~IV.1.6]{Warga1972}.
	Hence $x \mapsto \lambda_x$ and $x \mapsto \nu_x$ are Borel measurable maps $[0,1] \to \mathcal{G}$.
	Since $\nu_x \preceq \lambda_x$ for $F$-a.e. $x \in [0,1]$, it follows (possibly after modifying $x \mapsto \lambda_x$ and $x \mapsto \nu_x$ on an $F$-null set) that $x \mapsto (\nu_x,\lambda_x)$ is a Borel measurable map $[0,1] \to \mathcal{D}$.
	Then for each $n \in \N$, $x \mapsto R^n(\nu_x,\lambda_x) \eqqcolon \rho^n_x$ is a Borel measurable map $[0,1] \to \mathcal{G}$, so $(\rho^n_x)_{x \in [0,1]}$ is a distribution family.

	By Theorem~IV.2.1 in \textcite{Warga1972}, we may assume (passing to a subsequence is necessary) that there is a distribution family $(\rho_x)_{x \in [0,1]}$ such that
	\begin{equation}
		\label{eq:fam_lim}
		\lim_{n \to \infty} \int\int w(x,y) \rho^n_x(\dd y)F(\dd x)
		= \int\int w(x,y) \rho_x(\dd y)F(\dd x)
	\end{equation}
	for any $w : [0,1]^2 \to \R$ with $w(x,\cdot)$ continuous for each $x \in [0,1]$, $w(\cdot,y)$ Borel measurable for each $y \in [0,1]$, and $x \mapsto \max_{y \in [0,1]} \abs{w(x,y)}$ $F$-integrable.
	
	It remains to establish \eqref{eq:fam_opt}.
	To this end, note that $W : \mathcal{D} \to \R$ defined by
	\begin{equation*}
		W(G,H) \coloneqq \max_{L \in [G,H]}\int u \dd L
		\quad \text{for each $(G,H) \in \mathcal{D}$}
	\end{equation*}
	is upper semi-continuous \parencite[e.g.][Lemma~17.30]{AliprantisBorder2006}, so that the map $U : [0,1] \to \R$ given by
	\begin{equation*}
		U(x) \coloneqq \max_{G \in [\nu_x,\lambda_x]} \int u \dd G \quad \text{for each $x \in [0,1]$}
	\end{equation*}
	is Borel measurable, being the composition of the Borel measurable map $x \mapsto (\nu_x,\lambda_x)$ with $W$.
	For each $n \in \N$, the map $U_n : [0,1] \to \R$ defined by
	\begin{equation*}
		U_n(x)
		\coloneqq \int u_n \dd \rho^n_x
		= \max_{G \in [\nu_x,\lambda_x]} \int u_n \dd G
		\quad \text{for each $x \in [0,1]$}
	\end{equation*}
	is Borel measurable since $(\rho^n_x)_{x \in [0,1]}$ is a distribution family, and satisfies $U_n \geq U$ since $u_n \geq u$.
	Hence for any Borel $A \subseteq [0,1]$, 
	\begin{multline*}
		\int_A\int u \dd \rho_x F(\dd x)
		= \int_A \lim_{m \to \infty} \int u_m \dd \rho_x F(\dd x) 
		= \lim_{m \to \infty} \int_A \int u_m \dd \rho_x F(\dd x) 
		\\
		= \lim_{m \to \infty}\lim_{n \to \infty}\int_A\int u_m \dd\rho^n_xF(\dd x) 
		\geq \lim_{n \to \infty} \int_A U_n \dd F 
		\geq \int_A U \dd F
	\end{multline*}
	where the first two equalities follow from the monotone convergence theorem,
	the third equality follows from \eqref{eq:fam_lim} above since $u_m$ is continuous for each $m \in \N$, 
	the first inequality holds since $(u_m)_{m \in \N}$ is pointwise decreasing,%
		\footnote{For any $m \leq n$, we have $u_m \geq u_n$, hence $\int u_m \dd \rho^n_x \geq \int u_n \dd \rho^n_x = U_n(x)$ for every $x \in [0,1]$, hence $\int_A \int u_m \dd \rho^n_x F(\dd x) \geq \int_A U_n \dd F$. Now let $n \to \infty$, then $m \to \infty$.}
	and the final inequality holds since $U_n \geq U$ for all $n \in \N$.
	Thus $\int u \dd \rho_x \geq U(x)$ for $F$-a.e. $x \in [0,1]$.

	Hence to establish \eqref{eq:fam_opt}, it suffices to show that $\nu_x \preceq \rho_x \preceq \lambda_x$ for $F$-a.e. $x \in [0,1]$.
	We shall prove that $\nu_x \preceq \rho_x$ for $F$-a.e. $x \in [0,1]$, omitting the analogous argument for the other half.
	To this end, note that there exists a countable set $\Phi$ of continuous convex functions $[0,1] \to \R$ such that any convex function $[0,1] \to \R$ is the pointwise limit of a pointwise decreasing sequence of functions in $\Phi$.%
		\footnote{For example, the set of all maps of the form $x \mapsto \max_{k \in \{1,\dots,K\}} [ \alpha(k) x + \beta(k) ]$ where $K \in \N$ and $\alpha,\beta : \{1,\dots,K\} \to \Q$. (Here $\Q \subseteq \R$ denotes the rational numbers.)}
	Moreover, it holds for any $\phi \in \Phi$ that $\int \phi \dd \nu_x \leq \int \phi \dd \rho_x$ for $F$-a.e. $x \in [0,1]$, since if this inequality were to fail for all $x \in A$ where $A \subseteq [0,1]$ is $F$-non-null, then
	\begin{equation*}
		\int_A\int \phi \dd\rho_x F(\dd x) < \int_A\int \phi \dd\nu_x F(\dd x) \leq \int_A\int \phi \dd\rho^n_x F(\dd x)
		\quad \text{for all $n \in \N$,}
	\end{equation*}
	(where the second inequality holds since $\nu_x \preceq \rho^n_x$ for all $x \in [0,1]$,) which would contradict \eqref{eq:fam_lim} with $w(x,y) \coloneqq \1_{A}(x) \phi(y)$.
	Since $\Phi$ is countable, it follows that there is an $F$-null set $A \subseteq [0,1]$ such that $\int \phi \dd \nu_x \leq \int \phi \dd \rho_x$ for every $x \in [0,1] \setminus A$ and every $\phi \in \Phi$.
	Hence by the monotone convergence theorem, $\int \phi \dd \nu_x \leq \int \phi \dd \rho_x$ holds for every $x \in [0,1] \setminus A$ and every convex $\phi : [0,1] \to \R$. Equivalently (since $A$ is $F$-null), $\nu_x \preceq \rho_x$ for $F$-a.e. $x \in [0,1]$.
\end{proof}

\begin{proof}[Proof of \Cref{lemma:ido}]
	Suppose that 
	\begin{equation}
		\label{eq:ido_if}
		\int u \dd F^\lambda > -\infty , \quad
		\int v \dd F^\nu > -\infty , \quad
		\text{and} \quad
		F^\lambda \in \argmax_{G \in \left[F^\nu,F^\lambda\right]} \int u \dd G .
	\end{equation}
	We must show that $\int v \dd F^\lambda \geq \int v \dd F^{\nu}$, and that the inequality is strict if $\int u \dd F^\lambda > \int u \dd F^{\nu}$.

	By \Cref{lemma:dist_family}\ref{item:fam_opt}, we may choose a distribution family $(\rho_x)_{x \in [0,1]}$ such that 
	\begin{equation*}
		\rho_x \in \argmax_{G \in [\nu_x,\lambda_x]} \int u \dd G 
		\quad \text{for $F$-a.e. $x \in [0,1]$.}
	\end{equation*}
	Define
	\begin{align*}
		A &\coloneqq \left\{x \in [0,1] : \text{$\int u \dd \lambda_x > - \infty$ and $\int v \dd \nu_x > -\infty$} \right\}
		\\
		\text{and}\quad
		B &\coloneqq \left\{x \in A : \lambda_x \in \argmax_{G \in [\nu_x,\lambda_x]} \int u \dd G\right\}.
	\end{align*}
	We have $\int_A \dd F = 1$ by \eqref{eq:ido_if} and \Cref{lemma:dist_family}\ref{item:fam_product}.
	We further claim that $\int_B \dd F = 1$. Suppose toward a contradiction that $\int_B \dd F < 1$; then
	\begin{equation*}
		\int u \dd F^\lambda = \int \int u \dd \lambda_x F(\dd x) < \int \int u \dd \rho_x F(\dd x) = \int u \dd F^{\rho} ,
	\end{equation*}
	where the equalities follow from \Cref{lemma:dist_family}\ref{item:fam_product}, and the inequality is strict since $\int u \dd F^\lambda > -\infty$. But $F^{\rho}$ belongs to $\bigl[F^{\nu},F^\lambda\bigr]$ by \Cref{lemma:dist_family}\ref{item:fam_info} since $\rho_x \in \left[\nu_x,\lambda_x\right]$ for $F$-a.e. $x \in [0,1]$, so $\int u \dd F^\lambda \geq \int u \dd F^{\rho}$ by \eqref{eq:ido_if}; a contradiction.

	Since $u$ is dominated by $v$ on $[\nu_x,\lambda_x]$ for all $x \in [0,1]$, we have $\int v \dd\lambda_x \geq \int v \dd \nu_x$ for every $x \in B$.
	This together with $\int_B \dd F = 1$ implies that
	\begin{equation*}
		\int v \dd F^\lambda
		= \int \int v \dd \lambda_x F(\dd x) 
		\geq \int \int v \dd \nu_x F(\dd x) 
		= \int v \dd F^{\nu} ,
	\end{equation*}
	where the equalities follow from \Cref{lemma:dist_family}\ref{item:fam_product}.

	It remains only to show that if $\int u \dd F^\lambda > \int u \dd F^{\nu}$, then $\int v \dd F^\lambda > \int v \dd F^{\nu}$. So assume that $\int u \dd F^\lambda > \int u \dd F^{\nu}$, and let
	\begin{equation*}
		C^w \coloneqq \left\{x \in B : \int w \dd \lambda_x > \int w \dd \nu_x \right\}
		\quad \text{for $w \in \{u,v\}$.}
	\end{equation*}
	Note that $C^u$ is $F$-non-null, since otherwise 
	\begin{equation*}
		\int u \dd F^\lambda = \int \int u \dd \lambda_x F(\dd x) \leq \int \int u \dd \nu_x F(\dd x) = \int u \dd F^{\nu} ,
	\end{equation*}
	where the inequality holds since $\int_B \dd F = 1$, and the equalities follow from \Cref{lemma:dist_family}\ref{item:fam_product}.
	We have $C^u \subseteq C^v$ since $u$ is dominated by $v$ on $[\nu_x,\lambda_x]$ for all $x \in [0,1]$.
	Thus $C^v$ is $F$-non-null, so
	\begin{equation*}
		\int v \dd F^\lambda = \int \int v \dd \lambda_x F(\dd x) > \int \int v \dd \nu_x F(\dd x) = \int v \dd F^{\nu} ,
	\end{equation*}
	where the inequality holds since $\int v \dd \lambda_x \geq \int v \dd \nu_x$ for $F$-a.e. $x \in [0,1]$ and $\int v \dd F^{\nu} > -\infty$, and the equalities follow from \Cref{lemma:dist_family}\ref{item:fam_product}.
\end{proof}

\section{Proof of \texorpdfstring{\Cref{theorem:incr} (\cpageref{theorem:incr})}{Theorem \ref{theorem:incr} (p. \pageref{theorem:incr})}}
\label{app:pf_thm_incr}

The necessity of the crater property for comparative statics was proved in the text (§\ref{sec:mcs_incr:w}). In this appendix, we prove sufficiency.

Given any distribution $F$, let $C_F : [0,1] \rightarrow \R$ be given by $C_F(x) \coloneqq \int_0^x F$ for each $x \in [0,1]$.
We shall make free use of the order isomorphism described in \cref{app:product}
between distributions $F$ ordered by informativeness and functions $C_F$ ordered by pointwise inequality.

We shall use two lemmata.
The first is a version of Dworczak and Martini's (\citeyear{DworczakMartini2019}) duality theorem.
Given any regular $u : [0,1] \rightarrow \R$, let $\mathcal{M}(u)$ denote the space of all convex and Lipschitz continuous functions $p : [0,1] \rightarrow \R$ satisfying $p \geq u$.

\begin{lemma}
    \label{lemma:duality}
    Let $u : [0,1] \rightarrow \R$ be regular, and let $F_0$ be a distribution. Then
    \begin{equation*}
        \min_{p \in \mathcal{M}(u)} \int p \dd F_0
        = \max_{\text{$F$ feasible given $F_0$}} \int u \dd F ,
    \end{equation*}
    where both sides are well-defined.
    Moreover, if
	\begin{equation*}
        p \in \argmin_{r \in \mathcal{M}(u)} \int r \dd F_0
        \quad \text{and} \quad
        F \in \argmax_{\text{$G$ feasible given $F_0$}} \int u \dd G ,
    \end{equation*}
    then both
    
    \begin{enumerate}[label=(\alph*)]
        \item \label{item:duality:convex}
        $C_F = C_{F_0}$ on $\{x,y\}$ for any $x \leq y$ such that $[x,y]$ is a maximal interval of affineness of $p$, and

        \item \label{item:duality:support}
        $p = u$ on $\supp(F)$.
    \end{enumerate}
    
\end{lemma}

\Cref{lemma:duality} follows directly from Theorem~2 in \textcite{DizdarKovac2020}.

\begin{lemma}
	\label{lemma:crater}
	Let $u : [0,1] \to \R$ be regular and satisfy the crater property, and let $F_0$ be any distribution.
	Then there exists a
	\begin{equation*}
		p \in \argmin_{r \in \mathcal{M}(u)} \int r \dd F_0
	\end{equation*}
	such that for any $x \leq z$ such that $[x,z]$ is a maximal interval of affineness of $p$, both of the following hold:

	\begin{enumerate}[label=(\alph*)]
		
		\item \label{item:crater:bound} 

		\begin{enumerate}[label=(\roman*)]
		
			\item Either $x = 0$, or else $p(x) = u(x)$ and $u$ is strictly convex on $[x-\eps,x]$ for all sufficiently small $\eps > 0$, and

			\item either $z = 1$, or else $p(z) = u(z)$ and $u$ is strictly convex on $[z,z+\eps]$ for all sufficiently small $\eps > 0$.
		
		\end{enumerate}
		
		\item \label{item:crater:tangent} 
		If $x < z$, then either 

		\begin{enumerate}[label=(\roman*)]
		
			\item $p$ is tangent to $u$ at some $y \in (x,z)$ and $u$ is W-shaped on $[x,z]$, or 
		
			\item $p(x) = u(x)$, $p$ is tangent to $u$ at $z$, and $u$ is S-shaped on $[x,z]$, or
		
			\item $p(z) = u(z)$, $p$ is tangent to $u$ at $x$ and $u$ is reverse-S-shaped on $[x,z]$.
		
		\end{enumerate}
		
	\end{enumerate}
\end{lemma}

\begin{proof}
	We first consider the case in which $u$ is W-shaped, and then the case in which it is not. Recall that W-shapedness encompasses as special cases all convex, concave, S-shaped and reverse-S-shaped functions.

	\medskip

	\emph{Case~1: $u$ is W-shaped.} If $u$ is strictly convex, choose $p = u$. Suppose for the remainder that $u$ is not convex, and write $[a,b]$ for the maximal (proper) interval of concavity of $u$. For each $y \in [a,b]$, let $p_y : [0,1] \to \R$ be the pointwise maximum of $u$ and the tangent to $u$ at $y$, and let $[x_y,z_y]$ be the maximal interval of affineness of $p_y$ containing $y$. Note that for any $y \in [a,b]$, if $x \leq z$ are such that $[x,z]$ is a maximal interval of affineness of $p_y$, then $p_y$ satisfies \ref{item:crater:bound} and \ref{item:crater:tangent}. It therefore suffices to find a $y \in [a,b]$ such that $p_y \in \argmin_{r \in \mathcal{M}(u)} \int r \dd F_0$.
	If there is a $y \in [a,b]$ such that the interval $(x_y,z_y)$ is $F_0$-null, then choose $p \coloneqq p_y$. Suppose for the remainder that $(x_y,z_y)$ is $F_0$-non-null for every $y \in [a,b]$.
	
	Let $\alpha_y \coloneqq F_0(x_y) - F_0(x_y-)$ and $\beta_y \coloneqq F_0(z_y) - F_0(z_y-)$ for each $y \in [a,b]$.%
		\footnote{Given a distribution $F$, let $F(0-) \coloneqq 0$ and $F(w-) \coloneqq \lim_{\xi \uparrow w} F(\xi)$ for $w \in (0,1]$.}
	The maps $y \mapsto \alpha_y$ and $y \mapsto \beta_y$ are continuous on $(a,b)$, and satisfy
	\begin{equation*}
		\lim_{y \downarrow a}
		\frac{\int_{[x_y,z_y)} \xi F_0(\dd \xi)}
		{\int_{[x_y,z_y)} \dd F_0}
		> y
		> \lim_{y \uparrow b}
		\frac{\int_{(x_y,z_y]} \xi F_0(\dd \xi)}
		{\int_{(x_y,z_y]} \dd F_0} .
	\end{equation*}
	Hence by the intermediate value theorem, there exist $y \in (a,b)$, $\alpha \in [0,\alpha_y]$ and $\beta \in [0,\beta_y]$ such that 
	\begin{equation*}
		y
		= \frac{\alpha x_y + \int_{(x_y,z_y]} \xi F_0(\dd \xi) - \beta z_y}
		{\alpha + \int_{(x_y,z_y]} \dd F_0 - \beta}.
	\end{equation*}
	Then $\int p_y \dd F_0 = \int u \dd G$, where $G \coloneqq \1_{[x_y,y)}(F_0(x_y)-\alpha) + \1_{[y,z_y)}(F_0(z_y)-\beta) + \1_{[0,1] \setminus [x_y,z_y)} F_0$.
	Hence $p_y \in \argmin_{r \in \mathcal{M}(u)} \int r \dd F_0$ by \Cref{lemma:duality}, so we may choose $p \coloneqq p_y$.

	\medskip

	\emph{Case~2: $u$ is not W-shaped.}
	Enumerate the maximal intervals on which $u$ is W-shaped as $(I_i)_{i=1}^n$, noting that $n \geq 2$ (by the Case-2 hypothesis) and that these intervals overlap.
	For each $i \in \{1,\dots,n\}$, let $\mathcal{M}_i(u)$ denote the space of all convex and Lipschitz continuous $p : I_i \rightarrow \R$ satisfying $p \geq u$ on $I_i$.
	For each $i \in \{1,\dots,n\}$, applying Case~1 above yields a
	\begin{equation*}
		p_i \in \argmin_{r \in \mathcal{M}_i(u)} \int_{I_i} r \dd F_0
	\end{equation*}
	that is affine on at most one proper interval and satisfies property~\ref{item:crater:bound} (with $0$ and $1$ replaced by $\min I_i$ and $\max I_i$, respectively) and property~\ref{item:crater:tangent}.

	We may assume without loss (relabelling the $(I_i)_{i=1}^n$ if required) that $\min I_i \leq \min I_{i+1}$ for each $i \in \{1,\dots,n-1\}$, that $I_i \cap I_j = \varnothing$ for any $i<j$ in $\{1,\dots,n\}$ such that $i+1 \neq j$, and that $u$ is convex on $J_i \coloneqq I_i \cap I_{i+1}$ for each $i \in \{1,\dots,n-1\}$, concave on $K_i \coloneqq I_i \setminus (I_{i-1} \cup I_{i+1})$ for each $i \in \{2,\dots,n-1\}$, S-shaped but not convex on $K_1 \coloneqq I_1 \setminus I_2$, and reverse-S-shaped but not convex on $K_n \coloneqq I_n \setminus I_{n-1}$.
	Moreover, there are $y_i \in K_i$ for each $i \in \{1,\dots,n\}$ and $x_{i+1} \in J_i \setminus \{\max I_i\}$ and $z_i \in J_i \setminus \{\min I_{i+1}\}$ for each $i \in \{1,\dots,n-1\}$ such that $p_1$ is tangent to $u$ at $y_1$, affine on $[y_1,z_1]$ and equal to $u$ on $J_1 \cap (z_1,1)$, $p_n$ is tangent to $u$ at $y_n$, affine on $[x_n,y_n]$ and equal to $u$ on $J_{n-1} \cap (0,x_n)$ and, for each $i \in \{2,\dots,n-1\}$, $p_i$ is tangent to $u$ at $y_i$, affine on $[x_i,z_i]$ and equal to $u$ on $(J_{i-1} \cap (0,x_i)) \cup (J_i \cap (z_i,1))$.
	
	It must be that $z_i \leq x_{i+1}$ for each $i \in \{1,\dots,n-1\}$, since if $x_{i+1} < z_i$ for some $i \in \{1,\dots,n-1\}$ then $y_i < \min J_i$ and $\max J_i < y_{i+1}$, so that $u$ violates the crater property (with $x \coloneqq y_i$, $y \coloneqq \min J_i$ $z \coloneqq \max J_i$ and $w \coloneqq y_{i+1}$)---a contradiction.
	It follows that $p \coloneqq \1_{[z_0,z_1]}p_1 + \sum_{i = 2}^n \1_{(z_{i-1},z_i]} p_i$ belongs to $\mathcal{M}(u)$, where $z_0 \coloneqq 0$ and $z_n \coloneqq 1$.

	For any $x \leq z$ such that $[x,z]$ is a maximal interval of affineness of $p$, there exists an $i \in \{1,\dots,n\}$ such that $[x,z] \subseteq I_i$ and $p = p_i$ on $[x,z]$, and furthermore $\min I_i < x$ ($z < \max I_i$) unless $i = 1$ ($i = n$). Hence \ref{item:crater:bound} and \ref{item:crater:tangent} are satisfied.

	For each $i \in \{1,\dots,n\}$ it holds that $p_i \in \argmin_{r \in \mathcal{M}_i(u)} \int_{[z_{i-1},z_i]} r \dd F_0$ by \Cref{lemma:duality}, since $p_i \in \argmin_{r \in \mathcal{M}_i(u)} \int_{I_i} r \dd F_0$ and $p_i = u$ on $I_i \setminus [z_{i-1},z_i]$.%
		\footnote{Assuming without loss that $[z_{i-1},z_i]$ is $F_0$-non-null, we have $p_i \in \argmax_{r \in \mathcal{M}_i(u)} \int r \dd F^i_0$, where $F_0^i$ is the distribution derived from $F_0$ by conditioning on the event $I_i$. Then \Cref{lemma:duality} delivers a distribution $G$ feasible given $F_0^i$ such that $\int p_i \dd F^i_0 = \int u \dd G$. By \Cref{lemma:duality}\ref{item:duality:convex}, $C_G = C_{F^i_0}$ on $I_i \setminus [z_{i-1},z_i]$. Then conditioning $F^i_0$ and $G$ on $[z_{i-1},z_i]$ preserves the feasibility of $G$, and $\int_{[z_{i-1},z_i]} p_i \dd F^i_0 = \int_{[z_{i-1},z_i]} u \dd G$. Hence $p_i \in \argmin_{r \in \mathcal{M}_i(u)} \int_{[z_{i-1},z_i]} r \dd F_0$ by \Cref{lemma:duality}.} 
	Since $\bigcup_{i=1}^n [z_{i-1},z_i] = [0,1]$, it follows that $p \in \argmin_{r \in \mathcal{M}(u)} \int r \dd F_0$.
\end{proof}

\begin{proof}[Proof of the first (sufficiency) part of \Cref{theorem:incr}]
	Fix regular $u,v : [0,1] \to \R$
	such that $u$ satisfies the crater property
	and is coarsely less convex than $v$.
	Let $F_0$ be any distribution,
	and fix
	\begin{equation*}
	    G \in \argmax_{\text{$F$ feasible given $F_0$}} \int u \dd F
	    \quad \text{and} \quad
	    H \in \argmax_{\text{$F$ feasible given $F_0$}} \int v \dd F .
	\end{equation*}
	We must construct
	\begin{equation*}
	    G' \in \argmax_{\text{$F$ feasible given $F_0$}} \int u \dd F
	    \quad \text{and} \quad
	    H' \in \argmax_{\text{$F$ feasible given $F_0$}} \int v \dd F 
	\end{equation*}
	such that $C_G \leq C_{H'}$ and $C_{G'} \leq C_H$.
		
	By \Cref{lemma:crater}, there is a $p \in \mathcal{M}(u)$ satisfying \Cref{lemma:crater}'s properties~\ref{item:crater:bound} and \ref{item:crater:tangent}. Fix a
	\begin{equation*}
		q \in \argmin_{r \in \mathcal{M}(v)} \int r \dd F_0 .
	\end{equation*}
	Note that (because $u$ is regular,) there are finitely many triplets $x < y < z$ such that $[x,z]$ is a maximal interval of affineness of $p$ and $p = u$ on $\{x,y,z\}$. Enumerate these as $(x_k,y_k,z_k)_{k=1}^K$ where $x_1 \leq \dots \leq x_K$, and note that $x_k < z_{k+1}$ for all $k \in \{1,\dots,K-1\}$ since $u$ is W-shaped on $[x_k,z_k]$ by \Cref{lemma:crater}\ref{item:crater:tangent} and satisfies the crater property. 
	
	Let $G'$ be obtained from $G$ by, for each $k \in \{1,\dots,K\}$, shifting as much probability mass as possible (subject to feasibility given $F_0$) from $\{x_k,z_k\}$ to $y_k$.
	Symmetrically, let $H'$ be obtained from $H$ by, for each $k \in \{1,\dots,K\}$ such that $q = v$ on $\{x_k,y_k,z_k\}$, shifting as much probability mass as possible (subject to feasibility given $F_0$) from $y_k$ to $\{x_k,z_k\}$.
	Note that
	\begin{equation*}
		\int u \dd (G'-G) = 0
		\quad \text{and} \quad
		\int v \dd (H'-H) \geq 0
	\end{equation*} 
	since $q$ is convex, so that $G'$ ($H'$) is optimal for $u$ (for $v$) given $F_0$.

	It remains to prove that $C_{G'} \leq C_H$ and $C_G \leq C_{H'}$.
	By \Cref{lemma:duality}\ref{item:duality:convex}, it suffices to show that $C_{G'} \leq C_H$ and $C_G \leq C_{H'}$ on $[x,z]$ for any $x < z$ such that $[x,z]$ is a maximal interval of affineness of $q$, since $C_{G'} \leq C_{F_0} \geq C_{H'}$.
	So fix such $x<z$.
	We shall use the following claim.

	\begin{namedthm}[Claim.]
		\label{claim:concave}
		$u$ is concave on an open interval containing $\supp(H) \cap (x,z)$.
	\end{namedthm}

	\begin{proof}[Proof of the {\hyperref[claim:concave]{claim}}]
		\renewcommand{\qedsymbol}{$\square$}
		It suffices to show that, given any $y \leq y'$ in $\supp(H) \cap (x,z)$, $u$ is strictly concave on $(y-\eps,y'+\eps)$ for $\eps >0$ sufficiently small.
		So fix $y \leq y'$ in $\supp(H) \cap (x,z)$, and note that $q = v$ on $\{y,y'\}$ by \Cref{lemma:duality}\ref{item:duality:support}, so that $v$ is strictly concave on $(y-\eps,y+\eps)$ and on $(y'-\eps,y'+\eps)$ for small enough $\eps > 0$.
		Hence so is $u$, since it is coarsely less convex than $v$.
		
		Let $\widehat{u}$ be the concave envelope of the restriction of $u$ to $[y-\eps,y'+\eps]$.
	    We shall show that $\widehat{u}$ is not affine.
	    To see why this suffices, note that in this case $\widehat{u}(b) = u(b)$ for some $y-\eps < b < y'+\eps$, so that the tangent to $u$ at $b$ lies above $u$ on $[y-\eps,y'+\eps]$.
	    Then $u$ is strictly concave on $[y-\eps,y'+\eps]$, since it satisfies the crater property and is strictly concave on $[y-\eps,y+\eps]$ and on $[y'-\eps,y'+\eps]$.

		So suppose toward a contradiction that $\widehat{u}$ is affine.
	    Then 
	    \begin{equation*}
	        \alpha u(y-\eps) + (1-\alpha)u(y'+\eps)
	        \geq u\bigl(\alpha (y-\eps) + (1-\alpha)(y'+\eps) \bigr)
	        \quad \text{for all $\alpha \in (0,1)$,}
	    \end{equation*}
	    and the inequality is strict when $\alpha = \beta$ where $\beta(y-\eps) + (1-\beta)(y'+\eps) = y$, since $u$ is strictly concave on $[y-\eps,y+\eps]$.
	    However,
	    \begin{equation*}
	        \beta v(y-\eps) + (1-\beta)v(y'+\eps)
	        \leq \beta q(y-\eps) + (1-\beta)q(y'+\eps)
	        = q(y)
	        = v(y) ,
	    \end{equation*}
	    where the inequality holds since $v \leq q$, and the first equality holds since $q$ is affine on $[x,z] \supseteq [y-\eps,y'+\eps]$ for small enough $\eps > 0$. This is a contradiction with the fact that $u$ is coarsely less convex than $v$.
	\end{proof}%
	\renewcommand{\qedsymbol}{$\blacksquare$}

	By the \hyperref[claim:concave]{claim} and \Cref{lemma:crater}\ref{item:crater:bound}, there are $x' \leq z'$ such that $[x',z']$ is a maximal interval 
	of affineness of $p$ and $\supp(H) \cap (x,z) \subseteq (x',z')$, so that $C_G = C_{F_0}$ on $\{x',z'\}$ by \Cref{lemma:duality}\ref{item:duality:convex}.
	We consider two cases: first the (generic) case in which $p(y') = u(y')$ and $q(y) = v(y)$ for some $y,y' \in (x',z')$, and then the (non-generic) complementary case.

	\medskip

	\emph{Case~1: $p(y') = u(y')$ and $q(y) = v(y)$ for some $y,y' \in (x',z')$.}
	Note that $x' < z'$ in this case, so that $p > u$ on $(x',y') \cup (y',z')$ by \Cref{lemma:crater}\ref{item:crater:tangent}.
	Then $C_G$ is affine on $[x',y']$ and on $[y',z']$ by \Cref{lemma:duality}\ref{item:duality:support}; that is, $G$ pools $(x',z')$ into $y'$.
	Note also that $C_{G'}(x') = C_{F_0}(x')$ by \Cref{lemma:duality}\ref{item:duality:convex}.
	
	We first show that if $x' > 0$, then $G'(x'-) = F_0(x'-)$;
	that is, $G'$ reveals whether or not the state weakly exceeds $x'$.
	If there is no $a \in [0,x')$ such that $p$ is affine on $[a,x']$, then this follows from \Cref{lemma:duality}\ref{item:duality:convex}.
	Suppose for the remainder that $p$ is affine on $[a,x']$, where $a \in [0,x')$.
	Note that $p$ is tangent to $u$ at some $w \in \{x',y',z'\}$ by \Cref{lemma:crater}\ref{item:crater:tangent}.
	It cannot be that $p$ is tangent to $u$ at some $b \in (a,x')$, since then $u$ would be convex on an open interval containing $x'$ and satisfy $p(x') = u(x')$ by \Cref{lemma:crater}\ref{item:crater:bound}, in which case $u$ would violate the crater property on $[b,w]$---a contradiction.
	Thus $p > u$ on $(a,x')$, so that $C_{G'}$ is affine on $[a,x']$ by \Cref{lemma:duality}\ref{item:duality:support}.
	Moreover, choosing $a$ such that $[a,x']$ is a maximal interval of affineness of $p$ yields $C_{G'} = C_{F_0}$ on $\{a,x'\}$ by \Cref{lemma:duality}\ref{item:duality:convex}, so that $C_{G'} = C_{F_0}$ on $[a,x']$ since $C_{G'} \leq C_{F_0}$, and thus $G'(x'-) = F_0(x'-)$.
	
	A symmetric argument shows that $C_{G'}(z') = C_{F_0}(z')$ and $G'(z') = F_0(z')$: that is, $G'$ reveals whether or not the state is weakly below $z'$.

	Since $G'$ is less informative than $G$, we deduce that $G'$ pools states $(x',z')$ into $y'$ (that is, $C_{G'}$ is affine on $[x',y']$ and on $[y',z']$), and that $G$ reveals whether or not the state weakly exceeds $x'$ and whether or not the state is weakly below $z'$ (i.e. $G(x'-) = F_0(x'-)$ if $x' > 0$ and $G(z') = F_0(z')$).

	By \Cref{lemma:duality}\ref{item:duality:support}, if $x' \leq x$ and $z \leq z'$ and the first (second) inequality is strict unless $p(x') > u(x')$ ($p(z') > u(z')$), then $G$ has no atom at $x'$ (at $z'$) if $x' = x$ ($z = z'$).
	Since $G$ reveals whether or not the state belongs to $[x',1]$, also reveals whether or not the state belongs to $[0,z']$, and pools states $(x',z')$, it follows that $G$ pools $[x,z]$, so that $C_G \leq C_H$ on $[x,z]$ since $C_H = C_{F_0}$ on $\{x,z\}$.

	Assume for the remainder that either (i)~$x \leq x'$ with equality only if $p(x') = u(x')$ or (ii)~$z' \leq z$ with equality only if $p(z') = u(z')$.
	In particular, consider the former case; the argument for the latter is symmetric.
	Then $p(x') = u(x')$, since this holds if $x=x'$ by hypothesis and holds if $x<x'$ by \Cref{lemma:crater}\ref{item:crater:bound}.
	We consider two sub-cases: first the (non-generic) case in which $z' \leq z$ with equality only if $p(z') = u(z')$, and then the (generic) case in which $z \leq z'$ with equality only if $p(z') > u(z')$.

	\smallskip

	\emph{Sub-case~1(a): $z' \leq z$ with equality only if $p(z') = u(z')$.}
	In this sub-case, $p=u$ on $\{x',z'\}$ by \Cref{lemma:crater}\ref{item:crater:bound}. 
	Then, given $a \in [x,x']$ and $b \in [z',z]$ such that $p$ is strictly convex on $[a,x']$ and $[z',b]$, we have
	\begin{equation*}
		\alpha u(a) + (1-\alpha) u(b)
		\geq u(\alpha a + (1-\alpha)b)
		\quad \text{for all $\alpha \in (0,1)$,}
	\end{equation*}
	where the inequality is strict unless $a = x'$, $b = z'$ and $\alpha x' + (1-\alpha)z' = y'$, since $p$ is affine and exceeds $u$ on $[x',z']$, strictly so on $(x',y') \cup (y',z')$.
	However, choosing $\alpha=\beta$ where $\beta a + (1-\beta) b = y$ yields
	\begin{equation*}
		\beta v(a) + (1-\beta)v(b)
		\leq \beta q(a) + (1-\beta)q(b)
		= q(y)
		= v(y) ,
	\end{equation*}
	where the inequality holds since $v \leq q$ and the first equality holds since $q$ is affine on $[x,z] \supseteq [a,b]$.
	Since $u$ is coarsely less convex than $v$, it must therefore be that $a = x'$, $b = z'$ and $y = y'$, and thus $x = x'$ and $z' = z$ by \Cref{lemma:crater}\ref{item:crater:bound}.
	Moreover, $q > v$ on $(x,y') \cup (y',z)$, since $y$ was chosen (in the Case-1 hypothesis) as an arbitrary $\xi \in (x',z')$ such that $q(\xi)=v(\xi)$, and we just showed that this necessitates $\xi=y'$. Hence $C_{H'}$ and $C_H$ are affine on $[x,y']$ and on $[y',z]$ by \Cref{lemma:duality}\ref{item:duality:support}.
	Then $C_G \leq C_{H'}$ ($C_{G'} \leq C_H$) on $[x,z]$ as desired by construction of $H'$ (of $G'$), since $C_G$ ($C_{G'}$) is likewise affine on $[x,y']$ and on $[y',z]$, and $C_{H'} = C_{F_0}$ ($C_H = C_{F_0}$) on $\{x,z\}$ by \Cref{lemma:duality}\ref{item:duality:convex} (and $G'(x') = F_0(x')$ if $x' > 0$ and $G'(z) = F_0(z)$).

	\smallskip

	\emph{Sub-case~1(b): $z \leq z'$ with equality only if $p(z') > u(z')$.}
	Since $p = u$ on $\{x',y'\}$ and $x' < y' < z'$, $u$ is strictly convex on $[x',w']$ and strictly concave on $[y',w']$ for some $x' < w' < y'$, by \Cref{lemma:crater}\ref{item:crater:tangent}.
	Hence, setting $w \coloneqq \min\{y',z\}$, we have
	\begin{equation*}
	    \alpha u(x') + (1-\alpha)u(w)
	    > u(\alpha x' + (1-\alpha)w)
	    \quad \text{for all $\alpha \in (0,1)$.}
	\end{equation*}
	It must be that $y' \leq y$, since if $y < y'$ then for $\alpha=\beta$ where $\beta x' + (1-\beta)w = y$, we would have
	\begin{equation*}
	    \beta v(x') + (1-\beta)v(w)
	    \leq \beta q(x') + (1-\beta)q(w)
	    = q(y)
	    = v(y) ,
	\end{equation*}
	where the inequality holds since $v \leq q$ and the first equality holds since $q$ is affine on $[x,z] \supseteq [x',w]$---a contradiction with the fact that $u$ is coarsely less convex than $v$.

	Furthermore, by \Cref{lemma:duality}\ref{item:duality:support}, $C_G$ is affine on $[y',z']$ since $p > u$ on $(y',z')$, and differentiable at $z$ if $z=z'$ by the sub-case-1(b) hypothesis.
	Since $y' \leq y < z \leq z'$ and $C_G(z') = C_{F_0}(z')$, the slope of $C_G$ on $(y',z')$ must be at least $F_0(z)$.
	Since $C_H \leq C_{F_0}$ with equality at $z$, $C_H$ also has slope (formally, right-hand derivative) at most $F_0(z)$ on $[y',z]$.
	Since $C_G(z) \leq C_{F_0}(z) = C_H(z)$, it follows that $C_G \leq C_H$ on $[y',z]$.

	Since $y$ was chosen (in the Case-1 hypothesis) as an arbitrary $\xi \in (x',z')$ such that $q(\xi)=v(\xi)$, and we showed above that this necessitates $y' \leq \xi$, and since $x \leq x' < y'$, it must be that $q > v$ on $(x,y')$, so that $C_H$ is affine on $[x,y']$ by \Cref{lemma:duality}\ref{item:duality:support}.
	This, together with the fact that $C_G(x) \leq C_{F_0}(x) = C_H(x)$ and (from above) $C_G(y') \leq C_H(y')$, implies that $C_G \leq C_H$ on $[x,y']$.
	Since $G'$ is less informative than $G$ and $H$ is less informative than $H'$, it follows that $C_{G'} \leq C_H$ and $C_G \leq C_{H'}$ on $[x,z]$, as desired.

	\medskip

	\emph{Case~2: Either $q > v$ or $p > u$ on $(x',z')$.}
	Suppose first that $q > v$ on $(x,z) \cap (x',z')$. In this case, since $\supp(H) \cap (x,z) \subseteq (x',z')$, $C_H$ is affine on $[x,z]$ by \Cref{lemma:duality}\ref{item:duality:support}. Thus $C_H = C_{F_0} \geq C_G$ on $[x,z]$, where the equality holds since $C_H \leq C_{F_0}$ with equality on $\{x,z\}$.

	Suppose for the remainder that $(x,z) \cap (x',z') \neq \varnothing$ and $p > u$ on $(x',z')$, so that $p = u$ on $\{x',z'\}$ by \Cref{lemma:crater}\ref{item:crater:tangent}.
	Then 
	\begin{equation*}
		\alpha u(x') + (1-\alpha) u(z') = p(\alpha x' + (1-\alpha)z') > u(\alpha x' + (1-\alpha)z')
	\end{equation*}
	for all $\alpha \in (0,1)$, where the equality holds since $p$ is affine on $[x',z']$. 
	
	If $x \leq x'$ and $z' \leq z$, then
	\begin{align*}
		q(\alpha x' + (1-\alpha)z') &= \alpha q(x') + (1-\alpha) v(z') \\&\geq \alpha v(x') + (1-\alpha)v(z') 
		> v(\alpha x' + (1-\alpha)z')
	\end{align*}
	for all $\alpha \in (0,1)$, where the equality holds since $q$ is affine on $[x,z] \supseteq [x',z']$, the weak inequality holds since $q \geq v$, and the strict inequality holds since $v$ is coarsely more convex than $u$; hence $q > v$ on $(x',z') \supseteq (x,z) \cap (x',z')$, so that $C_H \leq C_G$ on $[x,z]$ by the argument at the beginning of Case~2.

	Assume for the remainder that either $x' < x < z'$ or $x' < z < z'$.
	$C_G$ is affine on $[x',z']$ by \Cref{lemma:duality}\ref{item:duality:support} since $p > u$ on $(x',z')$, so $C_{F_0}$ is also affine on $[x',z']$ (and equal to $C_G$), as $C_G \leq C_{F_0}$ with equality on $\{x',z'\}$.
	Since $C_H \leq C_{F_0}$ with equality on $\{x,z\}$, and either $x' < x < z'$ or $x' < z < z'$, we must then have $C_H = C_{F_0}$ on $[x',z']$.
	Since $C_H$ is affine on $[x,z] \setminus \supp(H) \subseteq [x,z] \setminus (x',z')$ and $C_H \leq C_{F_0}$ with equality on $\{x,z\} \cup [x',z']$, it follows that $C_G \leq C_{F_0} = C_H$ on $[x,z]$.
	Hence (since $G'$ is less informative than $G$ and $H$ is less informative than $H'$) $C_{G'} \leq C_H$ and $C_G \leq C_{H'}$ on $[x,z]$, as desired.
\end{proof}

\section{Proof of \texorpdfstring{\Cref{proposition:ido_binary} (\cpageref{proposition:ido_binary})}{Proposition \ref{proposition:ido_binary} (p. \pageref{proposition:ido_binary})}}
\label{app:pf_thm_nondecr_binary}

The second (converse) part of \Cref{proposition:ido_binary} follows from the proof in \cref{app:pf_thm_nondecr:cs_priors_implies_lessvex} of the second (converse) part of {\hyperref[theorem:nondecr_extended]{\Cref*{theorem:nondecr}$^*$}}.%
	\footnote{The argument there shows that
	if $u$ is not coarsely less convex than $v$,
	then we can construct a prior $F_0$
	such that $\argmax_F \int u \dd F$ is strictly higher than (a fortiori not lower than) $\argmax_F \int v \dd F$.
	And the constructed prior is, in fact, binary.}

To prove the first part, let $u,v : [0,1] \to \R$ be upper semi-continuous,
assume that $u$ is coarsely less convex than $v$,
and let $F_0$ be a binary distribution.
Write $\mu$ for the mean of $F_0$.
Assume without loss of generality that $F_0$ is supported on $\{0,1\}$ (that is, $F_0 = 1-\mu + \mu \1_{\{1\}}$).%
	\footnote{If $F_0$ is degenerate ($F_0 = \1_{[\mu,1]}$) then the result is trivial.
	If not, then $F_0$ is supported on $\{x,y\}$ with $x < \mu < y$,
	all feasible distributions have support in $[x,y]$,
	and $u|_{[x,y]}$ is coarsely less convex than $v|_{[x,y]}$;
	so the interval $[x,y]$ may as well be $[0,1]$.}
Given $x,y \in \R$ and $\alpha \in [0,1]$, let us write $x_\alpha y \coloneqq \alpha x + (1-\alpha) y$.

Write $\cav u$ for the concave envelope of $u$.
Let $[x,w]$ be the maximal interval containing $\mu$
on which $\cav u$ is affine.
Define
\begin{equation*}
	\mathcal{U} \coloneqq \left\{u = \cav u \right\} \intersect [x,w] ,
\end{equation*}
and note that $x,w \in \mathcal{U}$ since $u$ is upper semi-continuous.
Further define
\begin{equation*}
	y \coloneqq \sup
	\left( \mathcal{U} \intersect \left[ 0, \mu \right] \right) 
	\quad \text{and} \quad
	z \coloneqq \inf
	\left( \mathcal{U} \intersect \left[ \mu,1 \right] \right) ,
\end{equation*}
and note that $y,z \in \mathcal{U}$
by upper semi-continuity.
Clearly $x \leq y \leq \mu \leq z \leq w$.

Let
\begin{equation*}
	M(u)
	\coloneqq \argmax_{\text{$F$ feasible given $F_0$}} \int u \dd F .
\end{equation*}
\textcite{KamenicaGentzkow2011}
showed that $M(u)$ is the set of all mean-$\mu$ distributions $F$
such that $\int u \dd F = (\cav u)\left( \mu \right)$.
Thus $M(u)$ is the set of all mean-$\mu$ distributions supported on $\mathcal{U}$.
It follows that the distribution $G$ ($H$) with mean $\mu$ and support $\{y,z\}$ ($\{x,w\}$) is the least (most) informative distribution in $M(u)$.

For the function $v$,
analogously define $\mathcal{V} \subseteq [0,1]$, $x',y',z',w' \in \mathcal{V}$, and distributions $G',H'$ in $M(v)$.
We must show that $H$ is less informative than $H'$
and that $G$ is less informative than $G'$.
The former requires precisely that $x' \leq x$ and $w \leq w'$,
while the latter requires that $y' \leq y$ and $z \leq z'$.

We first show that $x' \leq x$ and $w \leq w'$.
Since $x,w \in \mathcal{U}$,
we have $u(x_\alpha w) \leq u(x)_\alpha u(w)$ for every $\alpha \in (0,1)$.
As $u$ is coarsely less convex than $v$,
it follows that $v(x_\alpha w) \leq v(x)_\alpha v(w)$ for each $\alpha \in (0,1)$, implying that $[x,w] \subseteq [x',w']$.

\begin{namedthm}[Claim.]
	\label{claim:binary_inclusion}
	$\mathcal{V} \intersect [x,w] \subseteq \mathcal{U}$.
\end{namedthm}

\begin{proof}[Proof]%
	\renewcommand{\qedsymbol}{$\square$}
	Take any $\widehat y \in \mathcal{V} \intersect [x,w]$.
	The result is trivial if $\widehat y=x$ or $\widehat y=w$, so suppose not:
	$\widehat y = x_\alpha w$ for some $\alpha \in (0,1)$.
	Then
	\begin{equation*}
		v(x_\alpha w)
		= (\cav v)(x_\alpha w)
		\geq (\cav v)(x)_\alpha (\cav v)(w)
		\geq v(x)_\alpha v(w)
	\end{equation*}
	since $x_\alpha w \in \mathcal{V}$ (the equality),
	$\cav v$ is concave (first inequality),
	and $\cav v \geq v$ (second inequality),
	whence $u(x_\alpha w) \geq u(x)_\alpha v(w)$
	because $u$ is coarsely less convex than $v$.
	So $u(x_\alpha w) = u(x)_\alpha v(w)$,
	and thus $\widehat y = x_\alpha w \in \mathcal{U}$.
\end{proof}%
\renewcommand{\qedsymbol}{$\blacksquare$}

We now show that $y' \leq y$; the argument for $z \leq z'$ is analogous.
If $y' < x$, then $y' < x \leq y$ since $y \in \mathcal{U} \subseteq [x,w]$.
Suppose instead that $x \leq y'$.
Then since $y' \leq \mu \leq w$,
we have $y' \in [x,w]$.
As $y' \in \mathcal{V}$,
it follows from the \hyperref[claim:binary_inclusion]{claim}
that $y'$ belongs to $\mathcal{U}$.
So $y' \in \mathcal{U} \intersect \left[0,\mu\right]$,
and thus $y' \leq \sup \left( \mathcal{U} \intersect \left[ 0, \mu \right] \right) = y$.
\qed

\section{Proof of \texorpdfstring{\Cref{proposition:alignment} (\cpageref{proposition:alignment})}{Proposition \ref{proposition:alignment} (p. \pageref{proposition:alignment})}}
\label{app:pf_alignment}

For $x,y \in \R$ and $\alpha \in [0,1]$, write $x_\alpha y \coloneqq \alpha x + (1-\alpha) y$.
Define $u,v : [0,1] \to \R$
by $u(x) \coloneqq U_S(A(x),x)$ and $v(x) \coloneqq U_R(A(x),x)$ for each $x \in [0,1]$.
Choose any $x<y$ in $[0,1]$ such that $u(x_\beta y) \leq u(x)_\beta u(y)$ for every $\beta \in (0,1)$, and fix an $\alpha \in (0,1)$.
Note that $v(x_\alpha y) \leq v(x)_\alpha v(y)$ since $v$ is convex (as $A$ is $U_R$-optimal).
Thus
\begin{align*}
	\Phi\bigl(u(x_\alpha y),v(x_\alpha y),x_\alpha y\bigr)
	&\leq \Phi\bigl(u(x_\alpha y),v(x)_\alpha v(y),x_\alpha y\bigr)
	\\
	&\leq \Phi\bigl(u(x)_\alpha u(y), v(x)_\alpha v(y),x_\alpha y\bigr)
	\\
	&\leq \Phi\bigl(u(x),v(x),x)_\alpha \Phi(u(y),v(y),y\bigr) ,
\end{align*}
where the first inequality holds since $\Phi\bigl(u(x_\alpha y),\cdot,x_\alpha y\bigr)$ is increasing,
the second holds since $\Phi\bigl(\cdot,v(x)_\alpha v(y),x_\alpha y\bigr)$ is (strictly) increasing,
and the final inequality holds since $\Phi$ is convex.
Moreover, the second inequality is strict if $u(x_\alpha y) < u(x)_\alpha u(y)$, as $\Phi\bigl(\cdot,v(x)_\alpha v(y),x_\alpha y\bigr)$ is strictly increasing.
\qed

\section{Proof of \texorpdfstring{\Cref{proposition:tightness} (\cpageref{proposition:tightness})}{Proposition \ref{proposition:tightness} (p. \pageref{proposition:tightness})}}
\label{app:pf_tightness}

The argument is close to the proof in §\ref{sec:mcs_incr:w} of the converse (necessity) half of \Cref{theorem:incr}.
Fix a distribution $F_0$ that is not binary.
Choose an $X \in (0,1)$ such that $0 < \lim_{z \uparrow X} F_0(z) \leq F_0(X) < 1$.
Define 
\begin{equation*}
	x \coloneqq \frac{1}{F_0(X)} \int_{[0,X]} \xi F_0(\dd \xi)
	\quad \text{and} \quad
	w \coloneqq \frac{1}{1-F_0(X)} \int_{(X,1]} \xi F_0(\dd \xi),
\end{equation*}
and note that $x < X < w$.
Fix a convex $p : [0,1] \rightarrow \R$ that is affine on $[0,X]$ and on $[X,1]$, but not affine on $[0,1]$.
Clearly we may choose a regular and M-shaped $u : [0,1] \to \R$ such that $p=u$ on $\{x,w\}$ and $p>u$ on $[0,1] \setminus \{x,w\}$, and such that $u$ is convex on $[X,y]$ and concave on $[y,1]$ for some $y \in (X,1)$.
Let $G$ be the distribution supported on $\{x,w\}$ whose mean is the same as that of $F_0$. Then $G$ is uniquely optimal for $u$ given $F_0$, since any other feasible distribution $F$ has $\int u \dd F < \int p \dd F \leq \int p \dd F_0 = \int p \dd G = \int u \dd G$, where the weak inequality holds since $p$ is convex and $F$ is feasible given $F_0$, the first equality holds since $p$ is affine on $[0,X]$ and on $[X,1]$, and the final equality holds since $p=u$ $G$-a.e.

Since $u'$ is bounded, we may choose a regular $v : [0,1] \to \R$
that coincides with $u$ on $[X,1]$
and that weakly exceeds $u$ and is strictly convex on $[0,X]$. Then $v$ is S-shaped and coarsely more convex than $u$.
Let $\delta \coloneqq F_0(a) - \lim_{z \uparrow a} F_0(z)$, and observe that there are $a \in [0,X]$ and $\pi \in [0,1]$ such that
\begin{equation*}
	\frac{ v(b) - v(a) }{ b-a } = v'(b) ,
	\quad \text{where} \quad
	b \coloneqq \frac{ \pi\delta a + \int_a^1 \xi F_0( \dd \xi ) }{\pi\delta+1-F_0(a)} > 0 .
\end{equation*}
Define $F$ by $F \coloneqq F_0$ on $[0,a)$, $F \coloneqq F_0(a)-\pi\delta$ on $[a,b)$, and $F \coloneqq 1$ on $[b,1]$.
(That is, $F$ reveals $[0,a)$, pools $(a,1]$, reveals $a$ with probability $1-\pi$, and otherwise pools it with $(a,1]$.)
Let $q : [0,1] \to \R$ be affine on $[X,1]$ and satisfy $q \geq v$, with equality on $[0,a] \union \{b\}$.
The distribution $F$ is optimal for $v$ given $F_0$ since for any (other) feasible distribution $H$, we have $\int v \dd H \leq \int q \dd H \leq \int q \dd F_0 = \int q \dd F = \int v \dd F$, where the second inequality holds since $q$ is convex and $H$ is feasible given $F_0$, the first equality holds since $q$ is affine on $[a,1]$, and the final equality holds since $q=v$ $F$-a.e.

Since $p(X) > u(X)$, it must be either that $a<X$ or that $a = X$ and $\pi\delta > 0$.
Thus $F$ is not more informative than $G$, so \eqref{eq:mcs_incr} fails.
\qed

\section{Proof of \texorpdfstring{\Cref{proposition:W_down} (\cpageref{proposition:W_down})}{Proposition \ref{proposition:W_down} (p. \pageref{proposition:W_down})}}
\label{app:pf_thm_incr_down}

For the first half (sufficiency), fix a distribution $F_0$, and let $u,v : [0,1] \to \R$ be regular with $u$ coarsely less convex than $v$.
If $v$ is concave, then it is \emph{strictly} concave since regular, so $u$ is strictly concave since coarsely less convex; hence the point mass concentrated at the prior mean $\int x F_0(\dd x)$ is uniquely optimal for $u$ given $F_0$, so \eqref{eq:mcs_incr_down} holds. If instead $v$ is convex, then it is \emph{strictly} convex since regular, so $F_0$ is uniquely optimal for $v$ given $F_0$; hence \eqref{eq:mcs_incr_down} holds.

For the second half (necessity), fix a regular $v : [0,1] \to \R$ that is neither concave nor convex; we shall exhibit a regular $u : [0,1] \to \R$ that is coarsely less convex than $v$, an (atomless convex-support) prior distribution $F_0$, and a distribution $F$ that is optimal for $v$ given $F_0$
such that no distribution optimal for $u$ given $F_0$ is less informative than $F$. The argument will be similar to the proof in §\ref{sec:mcs_incr:w} of the converse (necessity) part of \Cref{theorem:incr}.

By hypothesis (and using regularity), there are $x' < z < w'$ in $[0,1]$ such that either $v$ is strictly convex on $[x',z]$ and strictly concave on $[z,w']$, or $v$ is strictly concave on $[x',z]$ and strictly convex on $[z,w']$. We consider the former case (the latter is analogous).

Choose a $w \in (z,w')$ such that the tangent to $v$ at $w$ crosses $v$ on $[x',w)$ exactly once, at some $a' \in (x',z)$.
Since $v'$ is bounded, we may choose a regular $u : [0,1] \rightarrow \R$ such that $u-v$ is concave (so $u$ is coarsely less convex than $v$), $u$ is strictly concave on $[0,a']$ and on $[w,1]$, and $u \leq v$ on $[x',w']$, with equality on $[a',w]$.
Then since $u$ is strictly concave on $[x',a']$ and strictly convex on $[a',z]$, we may choose an $x \in (x',a')$ such that the tangent to $u$ at $x$ lies strictly above (below) $u$ at $a'$ (at $z$).
It follows that there is a convex $p : [0,1] \to \R$ and an $X \in (a',z)$ such that $p$ is affine on $[x',X]$ and on $[X,w']$, and $u \geq p$ on $[x',w']$, with equality on $\{x,w\}$ and with strict inequality at $X$.

Let $F_0$
be a distribution that is atomless
with support $[x',w']$,
\begin{equation*}
	\frac{1}{F_0(X)} \int_0^X \xi F_0(\dd \xi) = x
	\quad \text{and} \quad
	\frac{1}{1-F_0(X)} \int_X^1 \xi F_0(\dd \xi) = w .
\end{equation*}
As $v$ is S-shaped on $[x',w']$,
an `upper censorship' distribution $F$ is optimal
by Kolotilin's (\citeyear{Kolotilin2014}, p.~14) well-known result:
for $a \in (0,1)$ satisfying
\begin{equation*}
	\frac{ v(b) - v(a) }{ b-a } = v'(b) ,
	\quad \text{where} \quad
	b \coloneqq \frac{1}{1-F_0(a)} \int_a^1 \xi F_0( \dd \xi ) ,
\end{equation*}
this distribution $F$ fully reveals $[0,a)$ and pools $[a,1]$.%
	\footnote{Explicitly,
	$F = F_0$ on $[0,a)$,
	$F= F_0(a)$ on $[a,b)$
	and $F= 1$ on $[b,1]$.}
It is easy to see graphically (in \Cref{fig:Wproof} on \cpageref{fig:Wproof}, paying attention to $p$)
that $a$ must be strictly smaller than $X$.
Thus the optimal distribution $F$ pools some states to the left of $X$
with states to its right.
For the payoff $u$, however,
it is strictly sub-optimal to pool states on either side of $X$ together.
This is reasonably intuitive given the shape of $u$; formally, it follows from the argument in \cref{footnote:nopool} (\cpageref{footnote:nopool}).
Thus \eqref{eq:mcs_incr_down} fails:
no distribution optimal for $u$ given $F_0$
is less informative than $F$,
since the latter pools across $X$
while the former do not.
\qed

\section{Proof of \texorpdfstring{\Cref{proposition:priorshift} (\cpageref{proposition:priorshift})}{Proposition \ref{proposition:priorshift} (p. \pageref{proposition:priorshift})}}
\label{app:pf_priorshift}

Fix any atomless $F_0 \neq G_0$; we shall find a regular and S-shaped $u : [0,1] \to \R$ for which \eqref{eq:mcs_incr_prior} fails.
If $F_0$ is not less informative than $G_0$, then \eqref{eq:mcs_incr_prior} fails for any strictly convex $u : [0,1] \to \R$, since $F_0$ ($G_0$) is uniquely optimal for $u$ given $F_0$ ($G_0$).
Assume for the remainder that $F_0$ is less informative than $G_0$.

For any atomless distribution $F$, integration by parts%
	\footnote{Licensed by e.g. Theorem~18.4 in \textcite{Billingsley1995}.}
yields
\begin{equation*}
	\frac{1}{1-F(y)} \int_y^1 x F(\dd x)
	= \frac{ 1 - y F(y) - \int_y^1 F }{ 1-F(y) }
	= 1 + \frac{ (1-y) F(y) - \int_y^1 F }{ 1-F(y) } 
\end{equation*}
for each $y \in (0,1)$.
We have $\int_y^1 F_0 \geq \int_y^1 G_0$ for every $y \in (0,1)$ since $F_0$ is less informative than $G_0$.
Since in addition $F_0 \neq G_0$, it cannot be that $F_0$ is first-order stochastically dominated by $G_0$, and thus $F_0(a) < G_0(a)$ for some $a \in (0,1)$.
It follows that
\begin{equation}
	b \coloneqq
	\frac{1}{1-F_0(a)} \int_a^1 x F_0(\dd x)
	< \frac{1}{1-G_0(a)} \int_a^1 x G_0(\dd x) .
	\label{eq:b_bprime}
\end{equation}

Choose a regular and S-shaped $u : [0,1] \to \R$ such that $( u(b) - u(a) ) / (b-a) = u'(b)$.
Let $F$ be the distribution given by $F \coloneqq F_0$ on $[0,a)$, $F \coloneqq F_0(a)$ on $[a,b)$ and $F \coloneqq 1$ on $[b,1]$. Write $a'$ for the unique $y \in (0,1)$ satisfying
\begin{equation*}
	\frac{ u(\beta(y)) - u(y) }{ \beta(y)-y } = u'(\beta(y)) ,
	\quad \text{where} \quad
	\beta(y) \coloneqq \frac{1}{1-G_0(y)} \int_y^1 x G_0( \dd x ) ,
\end{equation*}
define $b' \coloneqq \beta(a')$, and let $G$ be the distribution given by $G \coloneqq G_0$ on $[0,a')$, $G \coloneqq G_0(a')$ on $[a',b')$ and $G \coloneqq 1$ on $[b',1]$.
By Kolotilin's (\citeyear{Kolotilin2014}, p. 14) well-known result, $F$ ($G$) is uniquely optimal for $u$ given $F_0$ ($G_0$).
By \eqref{eq:b_bprime}, we have $a > a'$, so $F$ is not less informative than $G$.
Thus \eqref{eq:mcs_incr_prior} fails.
\qed

\vspace{\topsep}

The atomlessness hypothesis in \Cref{proposition:priorshift} can be dropped: it suffices to assume that $F_0$ is not degenerate. Then there are $a,\alpha \in [0,1]$ such that
\begin{equation*}
	\lim_{x \uparrow a} F_0(x)
	+ \alpha \left[ F_0(a) - \lim_{x \uparrow a} F_0(x) \right]
	< \lim_{x \uparrow a} G_0(x)
	+ \alpha \left[ G_0(a) - \lim_{x \uparrow a} G_0(x) \right]
	< 1 ,
\end{equation*}
and thus the proof above remains applicable, with minor modifications along the lines of the proof of \Cref{proposition:tightness} (\cref{app:pf_tightness}) to take care of atoms.

\section{Proof of \texorpdfstring{\hyperref[theorem:incr_general]{\Cref*{theorem:incr}$\boldsymbol{'}$} (\cpageref{theorem:incr_general})}{Theorem \ref{theorem:incr}' (p. \pageref{theorem:incr_general})}}
\label{app:pf_theorem:incr_general}

For the first half (sufficiency), fix a prior distribution $F_0$, and let $u,v : E \to \R$ be strongly regular with $u$ coarsely less convex than $v$. If $u$ is concave, then it is strictly concave by strong regularity, so the point mass at $\mu_0 \coloneqq \int x F_0(\dd x)$ is uniquely optimal for $u$ given $F_0$, so \eqref{eq:mcs_incr_general} holds. If $u$ is convex, then it is strictly convex by strong regularity, and hence so is $v$, in which case $F_0$ is uniquely optimal for $v$ given $F_0$, so \eqref{eq:mcs_incr_general} holds.

For the second half (necessity), say that a strongly regular $u : E \to \R$ satisfies the crater property iff for all distinct $x,y \in E$, the map $[0,1] \to \R$ given by $\alpha \mapsto u(\alpha x + (1-\alpha)y)$ satisfies the crater property.

\begin{lemma}
	\label{lemma:unique_opt}
	Let $u : E \to \R$ be strongly regular and satisfy the crater property, and let $\ell$ be the Lebesgue measure on a two-dimensional affine subspace of $\R^n$.
	Then
	\begin{equation*}
		\argmax_{\text{$F$ feasible given $F_0$}} \int u \dd F
	\end{equation*}
	is a singleton for any distribution $F_0$ admitting a density with respect to $\ell$.
\end{lemma}

\begin{proof}[Proof of \Cref{lemma:unique_opt}]
	\renewcommand{\qedsymbol}{$\square$}%
	Since $u$ is strongly regular, it is Lipschitz continuous.
	Hence by Theorem~7 in \textcite{DworczakKolotilin2023}, it suffices to show that there exists no $\eps > 0$ and distinct $x,y \in E$ such that $\nabla u(x) = \nabla u(y)$, $\alpha x + (1-\alpha)y \in E$ for all $\alpha \in [-\eps,1+\eps]$, and
	\begin{equation*}
		\alpha u(x) + (1-\alpha) u(y) \geq u(\alpha x + (1-\alpha) y) \quad \text{for all $\alpha \in [-\eps,1+\eps]$.}
	\end{equation*}
	So suppose toward a contradiction that some $\eps>0$ and $x,y \in E$ have these properties.
	Define $w : [0,1] \to \R$ by $w(\alpha) \coloneqq u(\alpha x + (1-\alpha)y)$ for each $\alpha \in [0,1]$.
	By hypothesis, the tangent to $w$ at $0$ lies above the graph of $w$, and is tangent to $w$ also at $1$. Since $u$ is strongly regular, $w$ is not affine.
	Hence $w$ violates the crater property, so $u$ violates the crater property---a contradiction.
\end{proof}
\renewcommand{\qedsymbol}{$\blacksquare$}

Fix a strongly regular $u : E \to \R$ that is neither concave nor convex; we shall find a strongly regular $v : E \to \R$ that is coarsely more convex than $u$ and an atomless convex-support distribution $F_0$ such that \eqref{eq:mcs_incr_general} fails. If $u$ violates the crater property, then such $v$ and $F_0$ exist by \Cref{theorem:incr}. Assume for the remainder that $u$ satisfies the crater property.

Assume without loss that $E$ has dimension $n$, and note that $n \geq 2$ by hypothesis.
For any $S \subseteq E$, let $\interior(S)$ denote its relative interior. For each $x \in \interior(E)$, let $H_x$ denote the Hessian matrix of $u$ at $x$.
We consider separately the case in which $u$ has a saddle point, i.e. an $x \in \interior(E)$ at which $H_x$ is indefinite, and the case in which it does not.

\medskip

\emph{Case~1: $H_x$ is indefinite at some $x \in \interior(E)$.}
Assume without loss that $x = 0$.
Since $H_0$ is indefinite, it admits eigenvalues $\lambda_1,\lambda_2$ such that $\lambda_2 < 0 < \lambda_1$.
As $H_0$ is symmetric, its eigenvectors (appropriately rescaled) form an orthonormal basis of $\R^n$. We henceforth express elements of $E$ in coordinates relative to this basis, with the eigenvectors associated with $\lambda_1$ and $\lambda_2$ as (respectively) the first and second basis vectors.
Then $u_{11}(0) = \lambda_1$, $u_{12}(0) = 0$ and $u_{22}(0) = \lambda_2$, where subscripts denote partial derivatives.
Assume without loss that $u(0) = u_1(0) = u_2(0) = 0$.
Let 
\begin{equation*}
	S \coloneqq \left\{x \in \R^n : \norm*{x} \leq 1 \text{ and $x_i = 0$ for $i > 2$}\right\} ,
\end{equation*}
and note that since $0 \in \interior(E)$, we may assume without loss that $S \subseteq E$.

Let $u^\star : S \to \R$ be given by 
\begin{equation*}
	u^\star(x)
	\coloneqq \tfrac{1}{2} \left( \lambda_1 x_1^2 + \lambda_2 x_2^2 \right)
	\quad \text{for each $x \in S$.}
\end{equation*}
A second-order Taylor expansion of $u$ around $0$ yields that
\begin{equation}
	\abs{u(x)-u^\star(x)}
	\bigm/
	\norm*{x}^2
	\rightarrow 0
	\quad \text{as $x \rightarrow 0$ in $S$.}
	\label{eq:u_approx}
\end{equation}
Since $u$ is strongly regular, we may choose a convex and twice differentiable $\psi : E \to \R$ with $\abs*{\psi(x)} \bigm/ \norm*{x}^3 \rightarrow 0$ as $x \rightarrow 0$ such that $v : E \to \R$ given by 
\begin{equation*}
	v(x)
	\coloneqq u(x) -\tfrac{1}{2}\lambda_2(x_1+x_2)^2 + \psi(x)
	\quad \text{for each $x \in E$}
\end{equation*}
is strongly regular.
Since $v-u$ is convex, $v$ is coarsely more convex than $u$. By a second-order Taylor expansion of $v$ around $0$, 
\begin{equation}
	\abs{v(x)-v^\star(x)}
	\bigm/
	\norm*{x}^2
	\rightarrow 0
	\quad \text{as $x \rightarrow 0$ in $S$,}
	\label{eq:v_approx}
\end{equation}
where $v^\star : S \to \R$ is given by 
\begin{equation*}
	v^\star(x)
	\coloneqq u^\star(x) - \tfrac{1}{2}\lambda_2(x_1+x_2)^2
	\quad \text{for each $x \in S$.}
\end{equation*}

Let $F^\star_0$ be the uniform distribution on $S$.
Note that there are no distinct $x,y \in \interior(S)$ such that either $\nabla u^\star(x) = \nabla u^\star(y)$ or $\nabla v^\star(x) = \nabla v^\star(y)$.
Hence by Theorem~7 in \textcite{DworczakKolotilin2023}, 
\begin{equation*}
	\argmax_{\text{$F$ feasible given $F^\star_0$}} \int u^\star \dd F
	= \{F^\star\}
	\quad \text{and} \quad 
	\argmax_{\text{$F$ feasible given $F^\star_0$}} \int v^\star \dd F
	= \{G^\star\}
\end{equation*}
for some distributions $F^\star$ and $G^\star$.
We shall (a)~show that $G^\star$ is not more informative than $F^\star$, and then (b)~deduce that \eqref{eq:mcs_incr_general} fails for some atomless convex-support prior distribution $F_0$.

For part~(a), let $F$ be the posterior-mean distribution induced (given prior $F^\star_0$) by a signal that reveals the first coordinate of the state and nothing else.
The map $p : S \to \R$ given by $p(x) \coloneqq \frac{1}{2} \lambda_1 x_1^2$ for each $x \in S$ is convex and Lipschitz with $p \geq u^\star$, and it satisfies $\int (p-u^\star)\dd F = 0$ since $F$ assigns probability 1 to $\{x \in S: x_2 = 0\}$.
Hence $F^\star = F$ by Theorem~5 in \textcite{DworczakKolotilin2023}.
Thus if $G^\star$ were more informative than $F^\star$, then \emph{any} distribution more informative than $F^\star$ would also be optimal for $v^\star$, since $F^\star=F$ reveals the first coordinate of the state and $v^\star(x_1,\cdot)$ is affine for each $x_1 \in [-1,1]$. As $G^\star$ is uniquely optimal for $v^\star$ given $F^\star_0$, it therefore cannot be more informative than $F^\star$.

For part~(b), define $u^\eps,v^\eps : S \to \R$ by $u^\eps(x) \coloneqq u(\eps x) / \eps^2$ and $v^\eps(x) \coloneqq v(\eps x) / \eps^2$ for each $x \in S$ and $\eps \in (0,1)$.
Since $u,v$ and thus $u^\eps,v^\eps$ are strongly regular and satisfy the crater property, \Cref{lemma:unique_opt} implies that
\begin{equation*}
	\argmax_{\text{$F$ feasible given $F^\star_0$}} \int u^\eps \dd F
	= \{F^\eps\}
	\quad \text{and} \quad 
	\argmax_{\text{$F$ feasible given $F^\star_0$}} \int v^\eps \dd F
	= \{G^\eps\}
\end{equation*}
for some distributions $F^\eps$ and $G^\eps$.
Write $F_0^\eps$ for the pushforward of $F^\star_0$ by $x \mapsto \eps x$.
Since rescaling interim payoffs (by $1/\eps^2$) and the prior (by $1/\eps$) affects neither feasibility nor the sender's preferences,%
	\footnote{Writing $F \circ \eps^{-1}$ for the pushforward by $x \mapsto \eps x$ of a distribution $F$, (i)~a distribution $F$ is feasible given $F^\star_0$ iff $F \circ \eps^{-1}$ is feasible given $F^\eps_0$, and (ii)~for $F,G$ concentrated on $S$, $\int u^\eps \dd F \geq \mathrel{(>)} \int u^\eps \dd G$ iff $\int u \dd (F \circ \eps^{-1}) \geq \mathrel{(>)} \int u \dd (G \circ \eps^{-1})$, and similarly for $v^\eps$ and $v$.}
\begin{equation*}
	\argmax_{\text{$F$ feasible given $F^\eps_0$}} \int u \dd F
	\quad \text{and} \quad
	\argmax_{\text{$F$ feasible given $F^\eps_0$}} \int v \dd F
\end{equation*}
are equal to the pushforward by $x \mapsto \eps x$ of (respectively) $F^\eps$ and $G^\eps$.
Since $F^\star$ ($G^\star$) is uniquely optimal for $u^\star$ ($v^\star$) given $F^\star_0$ and $u^\eps \rightarrow u^\star$ ($v^\eps \rightarrow v^\star$) uniformly as $\eps \downarrow 0$ by \eqref{eq:u_approx} (by \eqref{eq:v_approx}), $F^\eps \rightarrow F^\star$ ($G^\eps \rightarrow G^\star$) weakly as $\eps \downarrow 0$.%
	\footnote{We have $u^\eps \rightarrow u^\star$ uniformly since $\sup_{x \in S}\abs{u^\eps(x)-u^\star(x)} = \sup_{x \in S}\abs{u(\eps x)-u^\star(\eps x)}/\eps^2 \rightarrow 0$ as $\eps \downarrow 0$ by definition of $u^\star$ and \eqref{eq:u_approx}. To conclude that $F^\eps \to F^\star$ weakly, note first that by Prokhorov's theorem \parencite[e.g.][Theorem~5.1]{Billingsley1999}, $(F^\eps)_{\eps>0}$ converges weakly along a subsequence to some distribution $F$. Hence $\int u^\star\dd \left(F-F^\eps\right)$ and $\int \abs{u^\star-u^\eps}\dd F^\eps$ vanish as $\eps \downarrow 0$, so that $\int u^\star \dd F = \lim_{\eps \downarrow 0} \int u^\eps \dd F^\eps \geq \lim_{\eps \downarrow 0} \int u^\eps \dd F^\star = \int u^\star \dd F^\star$, where the inequality follows from the definition of $F^\eps$, since $F^\star$ is feasible given $F^\star_0$. Since $F^\star$ is uniquely optimal for $u^\star$ given $F^\star_0$, it follows that $F = F^\star$. Similarly for $v^\eps$ and $G^\eps$.}
Since $G^\star$ is not more informative than $F^\star$, it follows there is an $\eps>0$ such that $G^\eps$ fails to be more informative than $F^\eps$, so that \eqref{eq:mcs_incr_general} fails for $F_0 = F^\eps_0$.

\medskip

\emph{Case~2: $H_x$ is indefinite at no $x \in \interior(E)$.}
Say that $u$ is \emph{locally (strictly) concave} at $x \in \interior(E)$ iff $u$ is (strictly) concave on an open convex neighbourhood of $x$. Analogously define local (strict) convexity.

\begin{namedthm}[Claim.]
	\label{claim:local_con}
	For any $x \in \interior(E)$, if $H_x$ is not positive (negative) semi-definite, then $u$ is locally strictly concave (convex) at $x$.
\end{namedthm}

\begin{proof}[Proof of the {\hyperref[claim:local_con]{claim}}]
	\renewcommand{\qedsymbol}{$\square$}
	If $H_x$ is not positive (negative) semi-definite, then the same is true of $H_y$ for all $y$ in an open convex neighbourhood of $x$, as $y \mapsto H_y$ is continuous. By the case-2 hypothesis, $H_y$ is negative (positive) semi-definite for all $y$ in this neighbourhood. So $u$ is locally concave (convex) at $x$. By strong regularity, $u$ must be locally \emph{strictly} concave (convex) at $x$.
\end{proof}
\renewcommand{\qedsymbol}{$\blacksquare$}

Since $u$ is strongly regular and (by hypothesis) not strictly convex, it is not convex, so there is an $x \in \interior(E)$ at which $H_x$ is not positive semi-definite. By the \hyperref[claim:local_con]{claim}, $u$ is locally strictly concave at $x$.
Let $T$ be the hyperplane in $\R^{n+1}$ tangent to the graph of $u$ at $x$.
Since $u$ is not concave (it is strongly regular, and by hypothesis not strictly concave), we may choose $x$ so that $T$ intersects the graph of $u$ at some $y \in \interior(E) \setminus \{x\}$.
Since $u$ is locally strictly concave at $x$ and continuous, we may choose $y$ so that $T$ does not intersect the graph of $u$ on $\co(\{x,y\}) \setminus \{x,y\}$, where `$\co(\cdot)$' denotes the convex hull.

Define $w : [0,1] \to \R$ by $w(\alpha) \coloneqq u(\alpha x + (1-\alpha)y)$ for each $\alpha \in [0,1]$. Evidently $w$ is strictly convex on an open interval that contains $0$. Hence, after replacing $x$ with a nearby point if necessary, we may assume without loss that $w''(0) > 0$.
Since $w$ coincides with the restriction of $u$ to $\co(\{x,y\})$, it follows that $H_y$ is not negative semi-definite, so that $u$ is locally strictly convex at $y$ by the \hyperref[claim:local_con]{claim}.

Let $t : E \to \R$ be the map having graph $T$, and let $p \coloneqq \max\{u,t\}$.
Assume that $n=2$; this is without loss, as it amounts to replacing $E$ by its intersection $E'$ with a two-dimensional affine space containing $x$ and $y$, and the $v$ and $F_0$ constructed below (with domain $E'$) can easily be extended to $E$.
Since $u$ is locally strictly concave (convex) at $x$ (at $y$) and $t \geq u$ on $\co(\{x,y\}) \setminus \{x,y\}$, replacing $E$ by a convex two-dimensional subset containing $\co(\{x,y\})$ if necessary, we may without loss assume that $p$ is convex and that there is a convex open set $A \ni x$ such that $p$ is affine on $A$, $p=u$ on $E \setminus A$, and both $A$ and $E \setminus A$ are Lebesgue-non-null.
Clearly we may choose a strongly regular $v : E \to \R$ that is coarsely more convex than $u$ and an $x' \in \interior(E)$ such that,
letting $p' \coloneqq \max\{v,t'\}$ where $t'$ is the map $E \to \R$ whose graph equals the plane tangent to $v$ at $x'$, both of the following hold:

\begin{itemize}

	\item $p'$ is convex, $p'$ is affine on an open convex set $A' \ni x'$ such that $A' \setminus A$ is Lebesgue-non-null, and $p' = v$ on $E \setminus A'$.

	\item There exists a distribution $F_0$ with full support, a density with respect to the Lebesgue measure on $\R^2$, and $\left. \int_A z F_0(\dd z) \middle/ \int_A F_0(\dd z) \right. = x$ and $\left. \int_{A'} z F_0(\dd z) \middle/ \int_{A'} F_0(\dd z) \right. = x'$.

\end{itemize}

Let $F$ ($F'$) pool states in $A$ (in $A'$) and reveal all other states.
By Theorem~5 in \textcite{DworczakKolotilin2023}, $F$ ($F'$) is optimal for $u$ (for $v$) given $F_0$; by \Cref{lemma:unique_opt}, \emph{uniquely} optimal. Since $A' \setminus A$ is $F_0$-non-null, $F'$ is not more informative than $F$. Hence \eqref{eq:mcs_incr_general} fails.
\qed

\section{More on comparative-statics theory}
\label{app:mcs_lit}

We discussed in §\ref{sec:intro:lit_mcs} how our results relate to the theory of comparative statics. In this appendix, we prove a claim in that discussion about the implications of I-quasi-supermodularity in the persuasion model (§\ref{app:mcs_lit:iqsm}), and discuss our use of the weak rather than the strong set order (§\ref{app:mcs_lit:wso}).

\subsection{I-quasi-supermodularity in the persuasion model}
\label{app:mcs_lit:iqsm}

The weakest supermodularity-type domain restriction in the comparative-statics literature is Quah and Strulovici's (\citeyear{QuahStrulovici2007extensions}) `I-quasi-supermodularity'. In the persuasion model, even I-quasi-supermodularity is highly demanding:

\begin{lemma}
	\label{lemma:iqsm}
	Let $v : [0,1] \to \R$ be upper semi-continuous. The sender's objective function $F \mapsto \int v \dd F$ is I-quasi-super\-modular only if $v$ is either concave or strictly convex.
\end{lemma}

\begin{proof}
	Let $v : [0,1] \to \R$ be upper semi-continuous, and suppose that $F \mapsto \int v \dd F$ is I-quasi-super\-modular; we will show that $v$ must be either concave or strictly convex. Recall from the proof of \Cref{theorem:nondecr} (\cref{app:pf_thm_nondecr}) that if an upper semi-continuous function $u : [0,1] \to \R$ is coarsely less convex than $v$, then $F \mapsto \int u \dd F$ is interval-dominated by $F \mapsto \int v \dd F$. Hence by Theorem~1 in \textcite{QuahStrulovici2007extensions}, \eqref{eq:mcs_incr_down} on \cpageref{eq:mcs_incr_down} holds for every upper semi-continuous $u : [0,1] \to \R$ that is coarsely less convex than $v$ and every distribution $F_0$. Then by (the converse part of) \Cref{proposition:W_down} (\cpageref{proposition:W_down}), $v$ must be either concave or strictly convex.
\end{proof}

\subsection{The weak and strong set orders}
\label{app:mcs_lit:wso}

In the literature, the desired `increase' of $\argmax_{a \in A} U(a)$ is often formalised using the \emph{strong set order.} The strong set order is defined only in case $\mathcal{A}$ is a lattice. (This holds in the persuasion model, as shown in \cref{app:product}.) Under this assumption, given $A,B \subseteq \mathcal{A}$, we say that \emph{$A$ is lower than $B$ in the strong set order} if and only if for any $a \in A$ and $b \in B$, their greatest lower bound $a \meet b$ belongs to $A$, and their least upper bound $a \join b$ belongs to $B$. Evidently a set $A \subseteq \mathcal{A}$ is lower than itself in the strong set order if and only if it is a sublattice. For this reason, the strong set order is usually used only to compare sublattices.

In the persuasion model, the action set $\mathcal{A}$ is in fact a lattice (see \cref{app:product}), but the set $\argmax_{a \in A} U(a)$ need not be a sublattice. (This is true even in special cases, e.g. if only binary priors $F_0$ are considered.)

Our results are phrased in terms of the \emph{weak} set order: given $A,B \subseteq \mathcal{A}$, we say that \emph{$A$ is lower than $B$ in the weak set order} iff for any $a \in A$ and $b \in B$, there is an $a' \in A$ such that $a' \leq b$ and there is a $b' \in B$ such that $a \leq b'$. Evidently strong set ordering implies weak set ordering, but not vice-versa. The two are equivalent for singletons $A=\{a\}$ and $B=\{b\}$.

We chose the weak set order for two reasons. Firstly, we consider the weak set order more natural. (Our reading of the literature is that the strong set order is widespread not because its extra strength is interpretable, but rather because it yields clean necessity results.) Secondly, as noted above, the strong set order is not very natural for (and therefore usually not even defined for) comparing non-sublattice sets, such as argmaxes in the persuasion model.

Some of our results do remain true if `lower than in the weak set order' is replaced by `lower than in the strong set order'. Others do not, for example \Cref{proposition:ido_binary} (choose $u=v$, where $u$ and $F_0$ are such that $\argmax_{\text{$F$ feasible given $F_0$}} \int u \dd F$ is not a sublattice).

\section{Tightness of \texorpdfstring{\Cref{lemma:CLC_suff} (\cpageref{lemma:CLC_suff})}{Lemma \ref{lemma:CLC_suff} (p. \pageref{lemma:CLC_suff})}}
\label{app:CLC_suff_tight}

\Cref{lemma:CLC_suff} is nearly tight, in the following sense:

\begin{namedthm}[Partial converse of \Cref*{lemma:CLC_suff}.]
	\label{lemma:CLC_suff_converse}
	If $\Phi : \R \times [0,1] \to \R$ is such that for every upper semi-continuous $u : [0,1] \to \R$, $u$ is coarsely less convex than $x \mapsto \Phi(u(x),x)$,
	then $\Phi$ must be convex on $\R \times (0,1)$
	with $\Phi(\cdot,x)$ increasing for every $x \in (0,1)$.
\end{namedthm}

This partial converse is implied by the following result, which closes the small gap between \Cref{lemma:CLC_suff} and its converse by giving an exact characterisation of coarse-convexity-increasing transformations $\Phi$. This result has other useful consequences, such as the fact (used in §\ref{sec:appl1:convexcav}) that $u$ is coarsely less convex than $\max\{u,\psi\}$ whenever $\psi : [0,1] \to \R$ is strictly convex.

\begin{namedthm}[\Cref*{lemma:CLC_suff}$\boldsymbol{^*}$.]
	\label{proposition:CLC_charac}
	For a map $\Phi : \R \times [0,1] \to \R$, the following are equivalent:
	
	\begin{enumerate}[label=(\roman*)]
		
		\item \label{bullet:proposition:CLC_charac:shift}
		For every $u : [0,1] \to \R$, $u$ is coarsely less convex than $x \mapsto \Phi(u(x),x)$.
		
		\item \label{bullet:proposition:CLC_charac:shift_usc}
		For every upper semi-continuous $u : [0,1] \to \R$, $u$ is coarsely less convex than $x \mapsto \Phi(u(x),x)$.
	
		\item \label{bullet:proposition:CLC_charac:charac}
		For any $x < y$ in $[0,1]$, $\alpha \in (0,1)$ and $a,b,c \in \R$ such that $c \leq \mathrel{(<)} \alpha a + (1-\alpha)b$, we have $\Phi(c,\alpha x + (1-\alpha)y) \leq \mathrel{(<)} \alpha\Phi(a,x) + (1-\alpha)\Phi(b,y)$.
	
	\end{enumerate}
	
\end{namedthm}

For the proof, we write $a_\alpha b \coloneqq \alpha a + (1-\alpha) b$ for $a,b \in \R$ and $\alpha \in [0,1]$.

\begin{proof}[Proof of the {\hyperref[lemma:CLC_suff_converse]{partial converse of \Cref*{lemma:CLC_suff}}}]
	By \hyperref[proposition:CLC_charac]{\Cref*{lemma:CLC_suff}$^*$}, it suffices to show that property~\ref{bullet:proposition:CLC_charac:charac} implies that $\Phi$ is convex on $\R \times (0,1)$ and that $\Phi(\cdot,x)$ is increasing for each $x \in (0,1)$.
	So let $\Phi$ satisfy \ref{bullet:proposition:CLC_charac:charac}, and note that it follows that for each $c \in \R$, $\Phi(c,\cdot)$ is convex, hence continuous on $(0,1)$.

	For convexity, property~\ref{bullet:proposition:CLC_charac:charac} immediately implies that $\Phi(\alpha(a,x)+(1-\alpha)(b,y)) \leq \Phi(a,x)_\alpha \Phi(b,y)$ for any $\alpha \in (0,1)$ and any $(a,x),(b,y) \in \R \times [0,1]$ such that $x \neq y$.
	To show that the same holds when $x=y=z \in (0,1)$,
	(in other words, that $\Phi(\cdot,z)$ is convex for each $z \in (0,1)$)
	observe that for any $x \in (0,z)$ and $y \in (z,1)$ such that $x_\alpha y = z$, we have $\Phi(a_\alpha b, z ) \leq \Phi(a,x)_\alpha \Phi(b,y)$, so letting $x,y \to z$ yields
	$\Phi(a_\alpha b, z) \leq \Phi(a,z)_\alpha \Phi(b,z)$
	by continuity.

	For monotonicity, take any $z \in (0,1)$ and $c<a$ in $\R$; we must show that $\Phi(c,z) \leq \Phi(a,z)$. 
	For any $x \in (0,z)$ and $y \in (z,1)$ such that $\frac{1}{2} x + \frac{1}{2} y = z$, property~\ref{bullet:proposition:CLC_charac:charac} implies $\Phi(c,z) < \frac{1}{2}\Phi(a,x) + \frac{1}{2}\Phi(a,y)$, which as $x,y \to z$ yields $\Phi(c,z) \leq \Phi(a,z)$ by continuity.
\end{proof}

\begin{proof}[Proof of {\hyperref[proposition:CLC_charac]{\Cref*{lemma:CLC_suff}$^*$}}]
	\ref{bullet:proposition:CLC_charac:charac} implies \ref{bullet:proposition:CLC_charac:shift} since for any $u : [0,1] \to \R$ and any $x<y$ in $[0,1]$ such that $u(x_\beta y) \leq u(x)_\beta u(y)$ for every $\beta \in (0,1)$, property~\ref{bullet:proposition:CLC_charac:charac} (with $a \coloneqq u(x)$, $b \coloneqq u(y)$ and $c \coloneqq u(x_\alpha y)$) implies for each $\alpha \in (0,1)$ that $\Phi(u(x_\alpha y),x_\alpha y) \leq \Phi(u(x),x)_\alpha \Phi(u(y),y)$, with strict inequality if $u(x_\alpha y) < u(x)_\alpha u(y)$. \ref{bullet:proposition:CLC_charac:shift} immediately implies \ref{bullet:proposition:CLC_charac:shift_usc}.
	To show that \ref{bullet:proposition:CLC_charac:shift_usc} implies \ref{bullet:proposition:CLC_charac:charac}, we prove the contra-positive: let $\Phi$ violate \ref{bullet:proposition:CLC_charac:charac}, meaning that there are $x < y$ in $[0,1]$, $\alpha \in (0,1)$ and $a,b,c \in \R$ such that either
	
	\begin{enumerate}
	
		\item \label{bullet:proposition:CLC_charac:weak}
		$c \leq a_\alpha b$ and $\Phi(c,x_\alpha y) > \Phi(a,x)_\alpha \Phi(b,y)$, or

		\item \label{bullet:proposition:CLC_charac:strict}
		$c < a_\alpha b$ and $\Phi(c,x_\alpha y) \geq \Phi(a,x)_\alpha \Phi(b,y)$.
	
	\end{enumerate}
	
	\noindent
	To show that \ref{bullet:proposition:CLC_charac:shift_usc} fails, define $u : [0,1] \to \R$ by
	$u \coloneqq a$ on $[0,x]$,
	$u(x_\alpha y) \coloneqq c$,
	$u \coloneqq b$ on $[y,1]$
	and $u \coloneqq \min\{a,b,c\}-1$ on $(x,x_\alpha y) \union (x_\alpha y,y)$.
	Clearly $u$ is upper semi-continuous. We have $u(x_\beta y) \leq u(x)_\beta u(y)$ for every $\beta \in (0,1)$, with strict inequality at $\beta=\alpha$ in case~\ref{bullet:proposition:CLC_charac:strict}, and furthermore $\Phi(u(x_\alpha y),x_\alpha y) \geq \Phi(u(x),x)_\alpha \Phi(u(y),y)$, with strict inequality in case~\ref{bullet:proposition:CLC_charac:weak}. Thus $u$ is not coarsely less convex than $x \mapsto \Phi(u(x),x)$.
\end{proof}

\section{Extension: specific shifts}
\label{app:specific}

In this appendix, we show that the crater property remains necessary for `increasing' comparative statics when attention is confined to shifts of the sender's interim payoff $u$ that are more specific than coarse-convexity shifts: in particular, conventional increased convexity and adding a convex function.

\begin{proposition}
	\label{proposition:specific}
	Let $u : [0,1] \to \R$ be regular.
	The following are equivalent:

	\begin{enumerate}[label=(\roman*)]
		
		\item \label{item:specific:crater} 
		$u$ satisfies the crater property.
		
		\item \label{item:specific:more_cvex} 
		For any regular $v : [0,1] \to \R$ such that $v = \phi \circ u$ for some convex and strictly increasing $\phi : \R \to \R \union \{\infty\}$,
		\eqref{eq:mcs_incr} holds for every atomless convex-support distribution $F_0$.

		\item \label{item:specific:cvex_diff} 
		For any regular $v : [0,1] \to \R$ such that $v = u + \psi$ for some convex $\psi : [0,1] \to \R$,
		\eqref{eq:mcs_incr} holds for every atomless convex-support distribution $F_0$.
		
	\end{enumerate}
\end{proposition}

\begin{proof}
	\ref{item:specific:crater} implies \ref{item:specific:more_cvex} and \ref{item:specific:cvex_diff} by \Cref{corollary:CLC_suffsuff} and \Cref{theorem:incr} (\cpageref{corollary:CLC_suffsuff,theorem:incr}).

	To show that \ref{item:specific:cvex_diff} implies \ref{item:specific:crater}, we shall prove the contra-positive, by arguing that in the proof of the necessity half of \Cref{theorem:incr} (§\ref{sec:mcs_incr:w}), $v$ can be chosen so that $v-u$ is convex. We shall focus on Case~1 (the argument for Case~2 is similar).
	Since $u$ is regular, we may choose a regular $w : [0,1] \to \R$ such that $w = u$ on $[X,1]$ and, on each sub-interval of $[0,X]$ on which $u$ is convex (concave), $w-u$ is affine ($w$ is affine).
	Note that $w-u$ is convex, and that $w$ is convex on $[0,X]$.
	Fix any $\chi : [0,1] \to \R$ that is continuously differentiable with bounded derivative, is strictly convex on $[0,X]$, and vanishes on $[X,1]$.
	Then $v \coloneqq w + \chi$ weakly exceeds $u$, is strictly convex on $[0,X]$, and coincides with $u$ on $[X,1]$; and evidently $v-u$ is convex.

	To show that \ref{item:specific:more_cvex} implies \ref{item:specific:crater}, we shall modify the proof in §\ref{sec:mcs_incr:w} of the necessity half of \Cref{theorem:incr}.
	We again focus on Case~1 (Case~2 is similar).
	By replacing $x$ and $x'$ ($w$ and $w'$) with larger (smaller) values if necessary, we can ensure that $u(x) \neq u(w)$, without loss $u(x) < u(w)$, that $X \in (y,z)$, and that for some $z' \in (z,w)$, $u$ is strictly increasing and strictly concave on $[z',w']$ and $\max_{[x',z']}u = u(z')$.
	Fix an $\eps \in \left( 0, \min\{u(w) - u(z'),1\} \right)$, and choose a $\phi : \R \to \R$ that is strictly increasing, continuously differentiable, equal to the identity on $\left(-\infty,u(w)-\eps^2\right)$, affine on $\left(u(w)-\eps^2/2,\infty\right)$, and satisfies $\phi\left(u(w)-\eps^2/2\right) = \phi\left(u(w)-\eps^2\right)+\eps$.
	Then $v \coloneqq \phi \circ u$ equals $u$ on $[x',z']$, and satisfies $v(w) > u(w)$ and $v'(w) > u'(w)$.
	Moreover, $[v(w)-u(w)]/[v'(w)-u'(w)]$ vanishes as $\eps \downarrow 0$.
	Hence for sufficiently small $\eps$, the tangent to $v$ at $w$ is steeper than the tangent to $u$ at $w$, and the tangents cross in $(z,w)$.
	Moreover, the former tangent approaches the latter as $\eps$ vanishes.
	Thus (recalling the properties of $p$ and $F_0$) for sufficiently small $\eps$, there exists a function $q : [0,1] \to \R$, 
	an $x^\star \in (x,y)$, a $X^\star \in (X,z)$ and a $w^\star \in (z,w)$
	such that
	$q$ is affine on $[x',X^\star]$ and on $[X^\star,w']$,
	weakly exceeds $v$ on $[x',w']$,
	is tangent to $v$ at $x^\star$ and at $w^\star$, and satisfies
	\begin{equation*}
		\frac{1}{F_0(X)} \int_0^{X^\star} \xi F_0(\dd \xi) = x^\star
		\quad \text{and} \quad
		\frac{1}{1-F_0(X)} \int_{X^\star}^1 \xi F_0(\dd \xi) = w^\star.
	\end{equation*}
	Then the distribution $F$ that reveals only whether the state exceeds $X^\star$ is optimal for $v$ (by the argument in \cref{footnote:nopool}, \cpageref{footnote:nopool}).
	Since $X^\star \neq X$, $F$ pools states on either side of $X$, so \eqref{eq:mcs_incr} fails.
\end{proof}

\section{Extension: constrained persuasion}
\label{app:constraints}

In this appendix, we extend our analysis to encompass constraints on the sender's choice of signal, following the small but growing literature on constrained (or costly) persuasion.%
	\footnote{See e.g. \textcite{GentzkowKamenica2014,LetreustTomala2019,DovalSkreta2024}. Some of this work is surveyed by \textcite{KamenicaKimZapechelnyuk2021}.}
We focus on two important types of constraint: \emph{monotonicity} and \emph{coarseness.}
In the former case, the sender can use only monotone partitional signals; in the latter, she can use only signals that send at most $K$ messages, for some $K \geq 2$.

We ask whether comparative-statics conclusions can be drawn under assumptions weaker than those identified by \Cref{theorem:incr} (\cpageref{theorem:incr}). For both constraint types, the answer is `no': the crater property remains necessary.

\subsection{Monotone partitional signals}
\label{app:constraints:monpart}

In many applied settings, information is provided via scores: the state space $[0,1]$ is partitioned into intervals, and all that is revealed about the realisation of the state is which interval is belongs to. Examples include ratings in online commerce, grades in academic settings, and credit scores. Such signals are called \emph{monotone partitional.}

We call a distribution $F$ \emph{M-feasible (given $F_0$)} iff it is the posterior-mean distribution induced by some monotone partitional signal. As is well-known, a distribution $F$ is M-feasible given an atomless $F_0$ iff it is feasible for $F_0$ and $[0,1)$ may be partitioned into intervals $[x,y)$ such that either (i)~$F=F_0$ on $[x,y)$ or (ii)~$F=F_0(x)$ on $[x,\mu)$ and $F=F_0(y)$ on $[\mu,y)$ where $\mu \coloneqq \left. \left[ \int_x^y z F_0(\dd z) \right] \middle/ \left[ F_0(y)-F_0(x) \right] \right.$.
In other words, states are either fully revealed (case~(i)) or pooled with \emph{adjacent} states (case~(ii)).

\begin{proposition}
	\label{proposition:crater_monpart}
	Let $u : [0,1] \to \R$ be regular.
	If
	\begin{equation*}
		\argmax_{\text{$F$ M-feasible given $F_0$}} \int u \dd F
		\lowerthan
		\argmax_{\text{$F$ M-feasible given $F_0$}} \int v \dd F 
	\end{equation*}
	for every regular $v : [0,1] \to \R$ that is coarsely more convex than $u$
	and every atomless convex-support distribution $F_0$,
	then $u$ satisfies the crater property.
\end{proposition}

Thus restricting the sender to using only M-feasible distributions does not permit comparative-statics conclusions to be drawn under any weaker assumptions on the interim payoff $u$: the crater property remains necessary.

\Cref{proposition:crater_monpart} follows directly from the proof in §\ref{sec:mcs_incr:w} of the necessity half of \Cref{theorem:incr} since by inspection, the feasible distributions $F$ and $G$ which appear in that argument are in fact M-feasible.

\subsection{Coarse signals}
\label{app:constraints:simple}

In practice, communication is often coarse, with only a finite number of messages in use. This may be due to bounded rationality or information-processing costs, for example. Such coarseness can be modelled by constraining the sender to use only signals that send at most $K$ messages, for some exogenous $K$ \parencite{AybasTurkel2024,LyuSuenZhang2023}.

A distribution $F$ is the posterior-mean distribution induced by a signal satisfying this constraint if and only if $F$ is feasible given $F_0$ and has $\abs*{ \supp(F) } \leq K$. We call such distributions \emph{$K$-feasible (given $F_0$).}

\begin{proposition}
	\label{proposition:Kcoarse}
	Let $u : [0,1] \to \R$ be regular, and fix any $K \geq 2$.
	If
	\begin{equation}
		\argmax_{\text{$F$ $K$-feasible given $F_0$}} \int u \dd F
		\lowerthan
		\argmax_{\text{$F$ $K$-feasible given $F_0$}} \int v \dd F 
		\label{eq:mcs_incr_K}
		\tag{$\star_K$}
	\end{equation}
	for every regular $v : [0,1] \to \R$ that is coarsely more convex than $u$
	and every atomless convex-support distribution $F_0$,
	then $u$ satisfies the crater property.
\end{proposition}

\begin{proof}[Sketch proof]
	We focus on the generic case in which optimal distributions are unique. We will show that with a small addition, the proof of the necessity half of \Cref{theorem:incr} (§\ref{sec:mcs_incr:w} above) remains applicable. 
	The argument there shows that if a regular $u : [0,1] \to \R$ violates the crater property, then there is a prior distribution $F_0$ and a coarsely more convex, regular and S-shaped $v : [0,1] \to \R$ such that
	the distribution $G$ that is uniquely optimal for $u$ given $F_0$
	is binary, and is not less informative than
	the distribution $F$ that is uniquely optimal for $v$ given $F_0$.
	Since $G$ is binary, it is $K$-feasible, so
	\begin{equation*}
		\argmax_{\text{$H$ $K$-feasible given $F_0$}} \int u \dd H
		= \{G\} .
	\end{equation*}
	Since $v$ is S-shaped, we have by Proposition~6 in \textcite{LyuSuenZhang2023} that
	\begin{equation*}
		\argmax_{\text{$H$ $K$-feasible given $F_0$}} \int u \dd H
		= \left\{F^\dag\right\} 
	\end{equation*}
	for a distribution $F^\dag$ that is less informative than $F$. Then $G$ is not less informative than $F^\dag$, so \eqref{eq:mcs_incr_K} fails.
\end{proof}

\section{\texorpdfstring{\Cref{theorem:incr}}{Theorem~\ref{theorem:incr}} with affine segments}
\label{app:incr_affine}

In this appendix, we show that regularity can be weakened in \Cref{theorem:incr} (and \Cref{proposition:W_down}) to allow affine segments, at the cost of a longer proof.

Say that $u : [0,1] \to \R$ \emph{weakly regular}
iff
(i)~$u$ is continuous and possesses a continuous and bounded derivative $u' : (0,1) \to \R$, and
(ii)~$[0,1]$ may be partitioned into finitely many intervals,
on each of which $u$ is either strictly convex, strictly concave, or affine.
This is exactly regularity as defined on \cpageref{definition:regular}, except with (ii) modified to permit affine segments.

We defined the crater property (\cpageref{definition:crater}) only for regular $u : [0,1] \to \R$. For the more general weakly regular case, the definition is as follows:
a weakly regular $u : [0,1] \to \R$ satisfies the
crater property
if and only if
for any $x < y < z < w$ in $[0,1]$
such that $u$ is concave on $[x,y]$ and $[z,w]$ and strictly convex on $[y,z]$,
the tangents to $u$ at $x$ and at $w$
cross at coordinates $(X,Y) \in \R^2$
satisfying $y \leq X \leq z$
and $Y \leq u(X)$.

\begin{namedthm}[\Cref*{theorem:incr}$\boldsymbol{^\dag}$.]
	\label{theorem:incr_affine}
	Let $u : [0,1] \to \R$ be weakly regular.
	If $u$ satisfies the crater property,
	then for every weakly regular $v : [0,1] \to \R$ that is coarsely more convex than $u$
	and every atomless convex-support distribution $F_0$,
	\begin{equation}
		\argmax_{\text{$F$ feasible given $F_0$}} \int u \dd F
		\lowerthan
		\argmax_{\text{$F$ feasible given $F_0$}} \int v \dd F .
		\label{eq:mcs_incr_affine}
		\tag{$\star\star$}
	\end{equation}
	Conversely, if \eqref{eq:mcs_incr} holds
	for every weakly regular $v$ that is coarsely more convex than $u$
	and every atomless convex-support distribution $F_0$,
	then $u$ satisfies the crater property.
\end{namedthm}

We view the restriction to atomless and convex-support priors $F_0$ as a mild form of well-behavedness. A simple way of dropping this restriction is to replace it with the (generic) requirement that there be a unique distribution optimal given $F_0$ for $u$ and for $v$; with this substitution, \hyperref[theorem:incr_affine]{\Cref*{theorem:incr}$^\dag$} remains true as stated.%
	\footnote{The first (sufficiency) half follows from \hyperref[theorem:incr_affine]{\Cref*{theorem:incr}$^\dag$} and the facts that when the space of distributions has the topology of weak convergence, it is sequentially compact (by Prokhorov's theorem, e.g. Theorem~5.1 in \textcite{Billingsley1999}), the atomless convex-support distributions form a dense subset, $F_0 \mapsto \argmax_{\text{$F$ feasible given $F_0$}} \int u \dd F$ is upper hemi-continuous, and the binary relation `is less informative than' is continuous.}

Like \Cref{theorem:incr}, \Cref{proposition:W_down} remains true when regularity is replaced by weak regularity and only atomless convex-support prior distributions $F_0$ are considered. The only change to the proof is that when establishing the second (converse) part, the possibility that $v$ is affine on $[z,w']$ must be dealt with as a separate case; we omit the details.

The proof of the converse (necessity) half of \Cref*{theorem:incr}$^\dag$ follows from the proof in §\ref{sec:mcs_incr:w} of the necessity half of \Cref{theorem:incr}, except (again) that the possibility that $v$ is affine on $[z,w']$ must be dealt with as a separate case; we omit the details. The proof of the sufficiency half is long; below, we first (§\ref{app:pf_thm_incr:sufficiency}) prove it using a lemma, then (§\ref{app:pf_thm_incr:lemma_ordered_prices}) prove the lemma.

\subsection{Proof of the sufficiency part of \texorpdfstring{\hyperref[theorem:incr_affine]{\Cref*{theorem:incr}$\boldsymbol{^\dag}$}}{Theorem \ref*{theorem:incr}†}}
\label{app:pf_thm_incr:sufficiency}

Given any distribution $F$, let $C_F : [0,1] \rightarrow \R$ be given by $C_F(x) \coloneqq \int_0^x F$ for each $x \in [0,1]$.
We shall make free use of the order isomorphism described in \cref{app:product}
between distributions $F$ ordered by informativeness and convex functions $C_F$ ordered by pointwise inequality.

The sufficiency proof relies on three lemmata.
The first is a version of Dworczak and Martini's (\citeyear{DworczakMartini2019}) duality theorem.
Given any weakly regular $u : [0,1] \rightarrow \R$, let $\mathcal{M}(u)$ denote the space of all convex and Lipschitz continuous functions $p : [0,1] \rightarrow \R$ satisfying $p \geq u$.

\begin{lemma}
	\label{lemma:duality_weak}
	Let $u : [0,1] \rightarrow \R$ be weakly regular, and let $F_0$ be an atomless distribution. Then
	\begin{equation*}
		\min_{p \in \mathcal{M}(u)} \int p \dd F_0
		= \max_{\text{$F$ feasible given $F_0$}} \int u \dd F ,
	\end{equation*}
	where both sides are well-defined.
	Moreover, for $p \in \mathcal{M}(u)$ and a distribution $F$ feasible given $F_0$
	to solve (respectively) the minimisation and maximisation problems,
	it is necessary and sufficient that both
	
	\begin{enumerate}[label=(\alph*)]
		\item \label{item:duality_weak:convex}
		$p$ is affine on any interval on which $C_F < C_{F_0}$, and

		\item \label{item:duality_weak:support}
		$p = u$ on $\supp(F)$.
	\end{enumerate}
	
\end{lemma}

\begin{proof}[Proof of \Cref{lemma:duality_weak}]
	Fix a distribution $F_0$.
	The result is trivial if $F_0$ is degenerate, so suppose not.
	Since $u$ is weakly regular, for any convex and continuous $q : [0,1] \rightarrow \R$ such that $q \geq u$, there is a $p \in \mathcal{M}(u)$ such that $p \leq q$.
	Thus the first part follows from Theorem~1(ii) in \textcite{DizdarKovac2020} applied to the restriction of $u$ to $\supp(F_0)$, since $u$ is weakly regular.

	For the second part, fix any $p \in \mathcal{M}(u)$ and any distribution $F$ that is feasible given $F_0$.
	Since $F_0$ is atomless,
	we have $F_0(0)=0$
	and thus $F(0)=0$.%
		\footnote{We have $C_F \leq C_{F_0}$ and $C_{F_0}(0) = 0 \leq C_F(0)$, whence $[C_F(x)-C_F(0)] / x \leq [C_{F_0}(x)-C_{F_0}(0)] / x$ for every $x \in (0,1]$,
		so that letting $x \downarrow 0$
		yields $F(0) \leq F_0(0) = 0$.}
	Because $p$ is convex and Lipschitz,
	we may extend its derivative $p' : (0,1) \to \R$
	continuously to $[0,1]$
	by letting $p'(0)$ and $p'(1)$ be the right- and left-hand derivatives at $0$ and at $1$, respectively.
	Then for any distribution $G$ with $G(0)=0$,
	integrating by parts twice,%
		\footnote{This is licensed by e.g. Theorem~18.4 in \textcite{Billingsley1995}.}
	\begin{equation*}
		\int p \dd G
		= p(1) - \int p' G
		= p(1) - p'(1)C_G(1) + \int C_G \dd p' ,
	\end{equation*}
	where the last term is to be understood in the Lebesgue--Stieltjes sense.
	Thus
	\begin{equation*}
		\int p \dd F_0 \geq \int p \dd F \geq \int u \dd F ,
	\end{equation*}
	where the first inequality is strict unless \ref{item:duality_weak:convex} holds,
	while the second is strict unless \ref{item:duality_weak:support} holds since $p$ and $u$ are continuous.
\end{proof}

\begin{lemma}
	\label{lemma:propertyW}
	Let $u : [0,1] \to \R$ be weakly regular and satisfy the crater property,
	and suppose there are $x<z$ in $[0,1]$
	such that the tangent to $u$ at $x$ (at $z$)
	weakly exceeds $u$ on $[x,z]$.
	Then there is a $y \in (x,z]$ (a $y \in [x,z)$)
	such that $u$ is concave on $[x,y]$ (on $[y,z]$)
	and strictly convex on $[y,z]$ (on $[x,y]$).
\end{lemma}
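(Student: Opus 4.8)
The plan is to prove the first assertion (the one about the tangent at $x$); the second follows immediately by applying the first to $u(1-\,\cdot\,)$, which is again regular, still satisfies the crater property (that property is invariant under the reflection $t\mapsto 1-t$), and whose tangent at $1-z$ lies weakly above it on $[1-z,1-x]$ precisely when $u$'s tangent at $z$ lies weakly above $u$ on $[x,z]$. Write $L_x$ for the tangent to $u$ at $x$, so that the hypothesis reads $L_x\ge u$ on $[x,z]$.

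First I would set $y\coloneqq\sup\{t\in[x,z]:u\text{ is concave on }[x,t]\}$. This supremum is attained and $u$ is concave on $[x,y]$, since concavity of $u$ on $[x,t]$ for all $t<t_0$ passes to $[x,t_0]$ by continuity of $u$. Moreover $y>x$: the piece of the regular partition abutting $x$ on the right cannot be strictly convex (else $L_x$ would dip strictly below $u$ just to the right of $x$), so it is affine or strictly concave, and hence $u$ is concave on a nondegenerate interval $[x,x+\delta]$. If $y=z$ we are done (with the degenerate interval $[z,z]$ in the conclusion), so assume $y<z$; it remains to show $u$ is strictly convex on $[y,z]$. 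Note that the piece abutting $y$ on the right must be strictly convex: otherwise $u'$, being non-increasing on $[x,y]$ and on a right-neighbourhood of $y$ and continuous at $y$ (regularity), would be non-increasing on $[x,y+\delta]$, contradicting the definition of $y$. (The same reasoning shows $y$ is a breakpoint of the partition.) Suppose, toward a contradiction, that $u$ is not strictly convex on $[y,z]$. Letting $[c_1,d_1]$ be the intersection with $[y,z]$ of the first piece to the right of $y$ that is not strictly convex, we then have $y<c_1<d_1\le z$ with $u$ affine or strictly concave on $[c_1,d_1]$ and strictly convex on $[y,c_1]$ (all intervening pieces are strictly convex, glued along the continuous $u'$); here $c_1<z$, as otherwise $u$ would be strictly convex throughout $[y,z]$.

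Now I would invoke the crater property at the four points $x<y<c_1<d_1$, using that $u$ is concave on $[x,y]$ and on $[c_1,d_1]$ and strictly convex on $[y,c_1]$. It produces a crossing point $(X,Y)$ of the tangents to $u$ at $x$ and at $d_1$ with $y\le X\le c_1$ and $Y\le u(X)$. Since $X\in[y,c_1]\subseteq[x,z]$, the hypothesis gives $L_x(X)\ge u(X)$; as $(X,Y)$ lies on $L_x$ we have $Y=L_x(X)$, and so $L_x(X)=u(X)$. I then split into three cases by the location of $X$. If $X\in(y,c_1)$, then $u$ is strictly convex near $X$ while $L_x\ge u$ with equality at $X$, so $u-L_x$ would have a strict local minimum at the interior point $X$ at which it in fact attains a maximum---impossible. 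If $X=y$, then on $[x,y]$ the line $L_x$ lies weakly above the concave function $u$ and agrees with it at $x$ (where it is tangent) and at $y$; since by concavity $u$ also lies weakly above the chord joining the two contact points, and that chord---agreeing with $L_x$ at two points---equals $L_x$, we get $u=L_x$ on $[x,y]$, whence $u'(y)=u'(x)$ and $L_x$ is the tangent to $u$ at $y$; but strict convexity of $u$ on $[y,c_1]$ then puts $L_x$ strictly below $u$ on $(y,c_1]$, contradicting $L_x\ge u$. Finally, if $X=c_1$, then $L_x\ge u$ with equality at $c_1$ yields (letting $t\downarrow c_1$ within $[c_1,d_1]$, and taking $t=y$ within $[y,c_1]$) the sandwich $u'(c_1)\le u'(x)\le\frac{u(c_1)-u(y)}{c_1-y}$; but strict convexity of $u$ on $[y,c_1]$ makes the chord slope $\frac{u(c_1)-u(y)}{c_1-y}$ strictly smaller than $u'(c_1)$, a contradiction.

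Each case being impossible, $u$ must be strictly convex on $[y,z]$, which completes the argument. The step I expect to demand the most care is the three-way case analysis in the last paragraph---in particular the endpoint cases $X=y$ and $X=c_1$, where one must read off the precise interaction between $L_x$, the concave side of $u$, and the strictly convex side of $u$ at the contact point; the interior case is the easy one. Everything else (the construction of $y$, the structural facts about the partition pieces, and the reflection reducing the second assertion to the first) is routine.
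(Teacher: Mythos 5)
Your proof is correct and takes essentially the same approach as the paper's. The paper likewise defines $y$ as the largest point of $[x,z]$ up to which $u$ is concave, uses regularity to get $y>x$, and (if $y<z$) lets $\widehat z$ be the end of strict convexity after $y$, then appeals to the crater property on $x<y<\widehat z<w$ (with $u$ concave on $[\widehat z,w]$) to derive a contradiction. Where the paper is terse—it simply asserts that the tangent to $u$ at $x$ strictly exceeds $u$ on $[y,\widehat z]$, so that the mandated crossing point cannot satisfy $Y\le u(X)$—your three-way case analysis on the location of $X$ supplies exactly the argument that makes this assertion rigorous, including the delicate endpoint cases $X=y$ and $X=c_1$. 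So this is the same route, with a worked-out final step rather than a new one.
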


\begin{proof}[Proof of \Cref{lemma:propertyW}]
	Suppose that the tangent to $u$ at $x$ weakly exceeds $u$ on $[x,z]$;
	the other case is analogous.
	Let $y$ be the largest $y' \in [x,z]$ such that $u$ is concave on $[x,y']$.
	We have $y > x$ since $u$ is weakly regular.
	It remains to show that $u$ is strictly convex on $[y,z]$.
	This is immediate if $y=z$, so suppose for the remainder that $y<z$.

	Let $\widehat{z}$ be the largest $w \in [y,1]$ such that $u$ is strictly convex on $[y,w]$;
	clearly $\widehat{z} > y$ by the weak weakly regularity of $u$.
	We must show that $\widehat{z} \geq z$,
	so suppose toward a contradiction that $\widehat{z} < z$.
	Then by weak weakly regularity,
	$u$ is concave on $[\widehat{z},w]$
	for some $w \in (\widehat{z},z]$.
	But then $u$ violates the crater property,
	since the tangent to $u$ at $x$
	strictly exceeds $u$ on $[y,\widehat{z}]$ (as $u$ is strictly convex on $[y,\widehat{z}]$).
\end{proof}

\begin{lemma}
	\label{lemma:ordered_prices}
	Let $u,v : [0,1] \rightarrow \R$ be weakly regular,
	and suppose that $u$ satisfies the crater property and is coarsely less convex than $v$.
	Let $F_0$ be an atomless convex-support distribution.
	Then for any
	\begin{equation*}
		p \in \argmin_{r \in \mathcal{M}(u)} \int r \dd F_0
		\quad \text{and} \quad
		q \in \argmin_{r \in \mathcal{M}(v)} \int r \dd F_0 ,
	\end{equation*}
	if $q$ is affine on an interval $[x,y] \subseteq \supp(F_0)$,
	then so is $p$.
\end{lemma}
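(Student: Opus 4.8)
The plan is to argue in contrapositive-like fashion: fix $p$ and $q$ as in the statement and suppose $q$ is affine on $[x,y]\subseteq\supp(F_0)$; I will show that $u$ is concave on $[x,y]$, and then deduce that $p$ is affine on $[x,y]$ from a general perturbation fact. That fact is: \emph{if a regular $w:[0,1]\to\R$ is concave on an interval $J\subseteq\supp(F_0)$, then every minimiser $r$ of $\int r\dd F_0$ over $r\in\mathcal M(w)$ is affine on $J$.} It suffices to show $r$ is affine on $\interior(J)$. If not, there is $t_0\in\interior(J)$ at which $r$ is not affine on any neighbourhood; since $F_0$ charges every neighbourhood of $t_0$, if $r(t_0)>w(t_0)$ one could replace $r$ near $t_0$ by a chord and obtain a strictly cheaper element of $\mathcal M(w)$, so $r(t_0)=w(t_0)$. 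Because $r\ge w$ and $w$ is concave near $t_0$ one has $r'_-(t_0)\le w'(t_0)\le r'_+(t_0)$, so $w$ has a supporting line $L$ at $t_0$ that is simultaneously a supporting line of $r$ at $t_0$, and concavity of $w$ gives $L\ge w$ near $t_0$; replacing $r$ on a small interval around $t_0$ by the pointwise maximum of $L$ and the chord of $r$ yields a convex function in $\mathcal M(w)$ agreeing with $r$ outside that interval and strictly below $r$ on a set of positive $F_0$-measure — a contradiction.

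Next I would reduce ``$u$ concave on $[x,y]$'' to a statement about strictly convex subintervals: since $u$ is regular and $u'$ is continuous, a finite concatenation of affine and strictly concave pieces with matching one-sided derivatives is concave, so $u$ is concave on $[x,y]$ as soon as it is not strictly convex on any subinterval of $[x,y]$. Suppose toward a contradiction that $u$ is strictly convex on some $[a,b]\subseteq[x,y]\subseteq\supp(F_0)$. Applying \Cref{definition:lessconvex_ordinal} to every subinterval $[c,d]\subseteq[a,b]$ — on each of which $u$ is strictly (hence chord-) convex, with strict inequality at every interior point — gives $v(\alpha c+(1-\alpha)d)<\alpha v(c)+(1-\alpha)v(d)$ for all such $c,d$ and all $\alpha\in(0,1)$; that is, $v$ too is strictly convex on $[a,b]$. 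The goal is now to contradict ``$q$ affine on $[x,y]$'' using ``$v$ strictly convex on $[a,b]\subseteq\supp(F_0)$''.

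For this I would use \Cref{lemma:duality}. Let $G$ solve $\max_F\int v\dd F$ with the pair $(q,G)$ satisfying that lemma's complementary-slackness conditions \ref{item:duality:convex}--\ref{item:duality:support}. If $G$ reveals on $[a,b]$, then $q=v$ on $[a,b]$ by \ref{item:duality:support}, and since $v$ is strictly convex there this makes $q$ non-affine on $[a,b]\subseteq[x,y]$ — the desired contradiction. So it remains to rule out that the optimal $v$-solution pools across $[a,b]$: in that case $[a,b]$ lies in (the closure of) a maximal interval $[s,t]$ on which $C_G<C_{F_0}$, so by \ref{item:duality:convex} $q$ is a single line over $[s,t]$, equal to $v$ at $s,t$ and weakly above $v$ on $[s,t]$. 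Because $v$ is strictly convex on $[a,b]$ yet lies below that line with equality at $s$ and $t$, $v$ must have concave stretches flanking $[a,b]$, and by \Cref{definition:lessconvex_ordinal} (contrapositively: if $v$ is not chord-convex on an interval then neither is $u$) each such stretch forces $u$ to be concave on a nearby subinterval — producing a concave--strictly-convex--concave configuration of $u$ around $[a,b]$, i.e.\ a \emph{crater}.

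By hypothesis this crater obeys the shallow/wide inequalities of \Cref{definition:propertyW}. Using \Cref{lemma:propertyW} to pin down the local shape of $u$ near $[a,b]$ (hence, through \Cref{definition:lessconvex_ordinal}, of $v$), I would then construct via \Cref{lemma:duality} an optimal price for $v$ that coincides with $v$ on $[a,b]$ — equivalently, exhibit a feasible distribution optimal for $v$ that reveals $[a,b]$ rather than pooling across it (intuitively, a sufficiently shallow, wide crater makes ``reveal the convex bottom, keep the shoulders pooled'' a net improvement). This contradicts $q$ being affine on $[x,y]$, so $u$ has no strictly convex subinterval inside $[x,y]$, hence is concave on $[x,y]$, and the perturbation fact gives that $p$ is affine on $[x,y]$. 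I expect this last construction — turning the shallowness of a crater of $u$ into a proof that the more convex $v$ cannot be forced to pool over its strictly convex bottom, with all the care that the geometry (affine versus strictly concave shoulders, pooling endpoints at $a$ or $b$, etc.) demands — to be the main obstacle, and the only place where the crater property (and \Cref{lemma:propertyW}) is used.
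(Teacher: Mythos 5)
Your proposed intermediate target — that $q$ affine on $[x,y]\subseteq\supp(F_0)$ forces $u$ to be concave on $[x,y]$ — is strictly stronger than the lemma, and it is false. The lemma only asserts that $p$ is affine on $[x,y]$, which is fully compatible with $u$ having a strictly convex stretch there. Concretely, take $u$ with a ``crater'' (concave shoulders, strictly convex bottom) obeying the crater property, let $v=u+\phi$ with $\phi$ convex (so $v$ is ordinally more convex and may itself be a shallower crater), and choose $F_0$ atomless with convex support whose mass is concentrated on the shoulders so that for both $u$ and $v$ the unique optimiser pools the entire crater to a single interior point. Then both $p$ and $q$ are affine across the crater (each is the tangent to the respective payoff at the pooling point), and the paper's lemma holds, yet $u$ is strictly convex on a subinterval of $[x,y]$. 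In fact this is exactly why the paper's proof does \emph{not} try to pin down the shape of $u$ on the whole interval: it argues by contradiction from the assumption that $p$ fails to be affine, splits on whether $u$ is convex on $[\tilde x,\tilde z]$, and in the non-convex case works with a strictly \emph{concave} piece of $u$ together with a maximal affineness interval $[\bar x,\bar z]$ of $p$ that overlaps it, deriving the contradiction from the ordinal-convexity definition and \Cref{corollary:duality} — never concluding anything about $u$'s global concavity. Because your intermediate claim is false, the ``main obstacle'' you flag (ruling out that the optimal $v$-solution pools across a strictly convex piece) cannot be overcome: such pooling genuinely can happen.

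Two smaller points. First, your ``perturbation fact'' (that concavity of $u$ on $J\subseteq\supp(F_0)$ forces the optimal price to be affine on $J$) is true, and a version of it underlies the paper's Case~1; but your sketch of its proof is off: for convex $r$ the chord on $[t_0-\delta,t_0+\delta]$ lies \emph{above} $r$, so taking the pointwise max of $L$ and that chord cannot produce something below $r$. The correct route is via \Cref{lemma:duality}: a concave $u$ on $J$ forbids two support points of the optimal $F$ in $\interior J$ (pooling them would improve), hence $C_F<C_{F_0}$ on $J$ except at a single point, and \ref{item:duality:convex} then yields affineness (with a short extra step to remove a possible kink at the surviving support point). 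Second, the step where you pass from strict convexity of $u$ on $[a,b]$ to strict convexity of $v$ on $[a,b]$ via \Cref{definition:lessconvex_ordinal} is correct and is used (in a more localised way) in the paper's argument as well.
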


\Cref{lemma:ordered_prices} is proved in the next section.

\begin{proof}[Proof of the first (sufficiency) part of \Cref{theorem:incr}]
	Fix weakly regular $u,v : [0,1] \to \R$
	such that $u$ satisfies the crater property
	and is coarsely less convex than $v$,
	let $F_0$ be an atomless convex-support distribution,
	and fix
	\begin{equation*}
		G' \in \argmax_{\text{$F$ feasible given $F_0$}} \int u \dd F
		\quad \text{and} \quad
		H' \in \argmax_{\text{$F$ feasible given $F_0$}} \int v \dd F .
	\end{equation*}
	We shall construct
	\begin{equation*}
		G'' \in \argmax_{\text{$F$ feasible given $F_0$}} \int u \dd F
		\quad \text{and} \quad
		H'' \in \argmax_{\text{$F$ feasible given $F_0$}} \int v \dd F 
	\end{equation*}
	such that $G''$ is less informative than $H'$ and $G'$ is less informative than $H''$.

	We derive $G''$ from $G'$ by fully pooling signal realisations over each concavity interval of $u$,
	in the following precise sense.
	Assume without loss of generality that $u$ is not strictly convex,
	and enumerate the maximal proper intervals on which $u$ is concave as
	$([x_k,z_k])_{k=1}^K$ (where $K \in \N$).
	For each $k$, let $y_k$ denote the mean of $G'$ conditional on the event $[x_k,z_k]$.
	(In case $[x_k,z_k]$ is $G'$-null,
	let $y_k$ be an arbitrary element of $[x_k,z_k]$.)
	Define a distribution $G''$ by
	\begin{equation*}
		G''(w) \coloneqq
		\begin{cases}
			G'(x_k-)
			& \text{if $w \in [x_k,y_k)$ for some $k \in \{1,\dots,K\}$} \\
			G'(z_k)
			& \text{if $w \in [y_k,z_k]$ for some $k \in \{1,\dots,K\}$} \\
			G'(w)
			& \text{otherwise,}
		\end{cases}
	\end{equation*}
	where `$G(x-)$' is shorthand for $\lim_{y \uparrow x} G'(y)$.
	For any $G'$-non-null $[x_k,z_k]$,
	the distribution `$G''$ conditional on $[x_k,z_k]$'
	is less informative than
	the distribution `$G'$ conditional on $[x_k,z_k]$',%
		\footnote{Explicitly:
		the distribution $\1_{(z_k,1]} + \1_{[x_k,z_k]} \times [ G'' - G''(x_k-) ] / [ G''(z_k) - G''(x_k-) ]$
		is less informative than the distribution
		$\1_{(z_k,1]} + \1_{[x_k,z_k]} \times [ G' - G'(x_k-) ]  / [ G'(z_k) - G'(x_k-) ]$.}
	so $\int_{[x_k,z_k]} u \dd G'' \geq \int_{[x_k,z_k]} u \dd G'$.
	And we have $G''=G'$ on $\mathcal{X} \coloneqq [0,1] \setminus \Union_{k=1}^K [x_k,z_k]$,
	so that $\int_{\mathcal{X}} u \dd G'' = \int_{\mathcal{X}} u \dd G'$ since $\mathcal{X}$ is open.
	Thus $\int u \dd G'' \geq \int u \dd G'$,
	which since $G'$ optimal for $u$ given prior $F_0$
	implies that $G''$ is, too.

	We similarly derive $H''$ from $H'$
	by spreading signal realisations over each convexity interval of $v$
	as much as possible subject keeping $H''$ less informative than the prior $F_0$.
	Formally, assume without loss of generality that $v$ is not strictly concave,
	enumerate the maximal proper intervals on which $v$ is convex as $(I_\ell)_{\ell = 1}^L$ (where $L \in \N$),
	and define $I \coloneqq \Union_{\ell=1}^L I_\ell$.
	Let $C$ be the convex envelope of $\1_I C_{F_0} + \1_{[0,1] \setminus I} C_{H'}$,
	and let the distribution $H''$ be be defined by $C_{H''} = C$.
	We have $H''=H'$ off $I$,
	and clearly `$H''$ conditional on $I_\ell$' is more informative than `$H'$ conditional on $I_\ell$' for each $H'$-non-null $I_\ell$,
	so
	$\int v \dd (H''-H')
	= \sum_{\ell=1}^L \int_{I_\ell} v \dd (H''-H')
	\geq 0$,
	which since $H'$ is optimal for $v$ given prior $F_0$ implies that $H''$ is, too.

	It remains to prove that $G''$ is less informative than $H'$ and that $G'$ is less informative than $H''$.
	We shall rely on the following claim, whose proof (relegated to the end) hinges on \Cref{lemma:propertyW,lemma:ordered_prices}.

	\begin{namedthm}[Claim.]
		\label{claim:W_sufficiency}
		Let $G$ and $H$ be optimal (given prior $F_0$) for $u$ and $v$, respectively. 
		Then for any $a < b$ in $[0,1]$ such that
		$C_H < C_{F_0}$ on $(a,b)$ and
		$C_H = C_{F_0}$ on $\{a,b\}$,
		there are $c \leq d$ in $\supp(G)$ such that $C_G \leq C_H$ on $[a,b] \setminus (c,d)$ and $u$ is affine on $[c,d]$.
	\end{namedthm}

	To prove that $G''$ is less informative than $H'$,
	it suffices to show that 
	for any $a < b$ in $[0,1]$
	such that $C_{H'} < C_{F_0}$ on $(a,b)$ and $C_{H'} = C_{F_0}$ on $\{a,b\}$,
	we have $C_{G''} \leq C_{H'}$ on $(a,b)$.
	So fix such a pair $a<b$.
	By the \hyperref[claim:W_sufficiency]{claim},
	there are $c \leq d$ in $\supp(G'')$ such that $C_{G''} \leq C_{H'}$ on $[a,b] \setminus (c,d)$ and $u$ is affine on $[c,d]$.
	And $(c,d)$ is empty,
	since $\supp(G'') \intersect [c,d]$ must be a singleton
	by definition of $G''$
	and the fact that $u$ is concave on $[c,d]$.

	Similarly, to prove that $G'$ is less informative than $H''$, 
	it suffices to show that 
	for any $a < b$ in $[0,1]$
	such that $C_{H''} < C_{F_0}$ on $(a,b)$ and $C_{H''} = C_{F_0}$ on $\{a,b\}$,
	we have $C_{G'} \leq C_{H''}$ on $(a,b)$.
	So fix such a pair $a<b$.
	By the \hyperref[claim:W_sufficiency]{claim},
	there are $c \leq d$ in $\supp(G')$ such that $C_{G'} \leq C_{H''}$ on $[a,b] \setminus (c,d)$ and $u$ is affine on $[c,d]$.
	If $[a,b]$ and $[c,d]$ are disjoint, then we are done.
	Suppose for the remainder that $[a,b] \intersect [c,d]$ is non-empty.
	We must show that $C_{G'} \leq C_{H''}$ on $[a',b'] \coloneqq [a,b] \intersect [c,d]$.

	$v$ is convex on $[a',b']$ since $[a',b'] \subseteq [c,d]$,
	so by definition of $H''$,
	the restriction of $C_{H''}$ to $[a',b']$ equals the convex envelope of
	$\1_{(a',b')} C_{F_0} + \1_{\{a',b'\}} C_{H''}$.
	We have $C_{G'} \leq \1_{(a',b')} C_{F_0} + \1_{\{a',b'\}} C_{H''}$ on $[a',b']$
	by hypothesis and the fact that $G'$ is less informative than the prior $F_0$.%
		\footnote{At $a'$,
		we have if $a'=c$ that $C_{G'}(a') \leq C_{H''}(a')$,
		and if not then $a'=a$,
		in which case $C_{G'}(a') \leq C_{F_0}(a') = C_{H''}(a')$ since $G'$ is less informative than $F_0$.
		Similarly at $b'$.}
	Thus since $C_{G'}$ is convex,
	it must satisfy $C_{G'} \leq C_{H''}$ on $[a',b']$.

	\begin{proof}[Proof of the claim]%
		\renewcommand{\qedsymbol}{$\square$}
		Fix $a < b$ in $[0,1]$ such that
		$C_H < C_{F_0}$ on $(a,b)$ and
		$C_H = C_{F_0}$ on $\{a,b\}$.
		Note that $[a,b] \subseteq \supp(F_0)$ since the latter is convex.
		Since $u$ and $v$ are weakly regular,
		\Cref{lemma:duality_weak} provides
		that there exist
		\begin{equation*}
			p \in \argmin_{r \in \mathcal{M}(u)} \int r \dd F_0
			\quad \text{and} \quad
			q \in \argmin_{r \in \mathcal{M}(v)} \int r \dd F_0 ,
		\end{equation*}
		and that $q$ is affine on $[a,b]$.
		By \Cref{lemma:ordered_prices}, it follows that $p$ is also affine on $[a,b]$.
		Write $[a',b']$ for the maximal interval $I$
		such that $p$ is affine on $I$
		and $[a,b] \subseteq I \subseteq \supp(F_0)$.
		We have $C_G = C_{F_0}$ on $\{a',b'\}$ by \Cref{lemma:duality_weak},
		which since $\supp(F_0)$ is convex and contains $[a',b']$
	 	implies that $(a',b') \intersect \supp(G)$ is non-empty.
		Define
		\begin{equation*}
			c \coloneqq \inf \left[ (a',b') \intersect \supp(G) \right]
			\quad \text{and} \quad
			d \coloneqq \sup \left[ (a',b') \intersect \supp(G) \right].
		\end{equation*}

		We first show that $C_G \leq C_H$ on $[a,b] \setminus (c,d)$.
		This is trivial if $c \leq a$ and $b \leq d$, so suppose not.
		Assume that $a < c$; we will show that $C_G \leq C_H$ on $[a,\min\{b,c\}]$.
		(We omit the analogous argument that $C_G \leq C_H$ on $[\max\{a,d\},b]$ when $d<b$.)
		By definition of $c$, $C_G$ is affine on $[a',c]$.
		Since $C_G \leq C_{F_0}$ with equality at $a'$,
		where $C_{F_0}$ is convex and differentiable at $a'$ ($F_0$ being atomless),
		$C_G$ coincides on $[a',c]$ with the tangent to $C_{F_0}$ at $a'$.
		Similarly, since $C_H \leq C_{F_0}$ with equality at $a$
		and $C_H$ is convex,
		we have on $[a,1]$ that $C_H$ exceeds the tangent to $C_{F_0}$ at $a$.
		Since the latter tangent exceeds the former on $[a,1]$,
		it follows that $C_G \leq C_H$
		on $[a',c] \intersect [a,1] = [a,c] \supseteq [a,\min\{b,c\}]$.

		It remains to show that $u$ is affine on $[c,d]$.
		Since $u$ is weakly regular, it suffices to show that $u$ is affine on $[x,w]$ for any $x < w$ in $(a',b') \intersect \supp(G)$.
		Fix such a pair $x < w$, and note that by \Cref{lemma:duality_weak}, $p$ is tangent to $u$ at $x$ and at $w$.
		Then since $p \geq u$ and $u$ satisfies the crater property, 
		\Cref{lemma:propertyW} provides that
		there are $y \in (x,w]$ and $z \in [x,w)$
		such that $u$ is concave on $[x,y]$ and on $[z,w]$
		and strictly convex on $[x,z]$ and on $[y,w]$.
		Clearly it must be that $y=w$ and $z=x$,
		so that $u$ is concave on $[x,w]$.
		Since $p$ is convex and $p \geq u$ on $[x,w]$ with equality on $\{x,w\}$,
		it follows that $u$ is affine on $[x,w]$.
	\end{proof}%
	\renewcommand{\qedsymbol}{$\blacksquare$}

	With the claim established, the proof is complete.
\end{proof}

\subsection{Proof of \texorpdfstring{\Cref{lemma:ordered_prices}}{Lemma \ref{lemma:ordered_prices}}}
\label{app:pf_thm_incr:lemma_ordered_prices}

We rely on the following result, which follows from \Cref{lemma:duality_weak,lemma:propertyW}.

\begin{corollary}
	\label{corollary:duality}
	Let $u : [0,1] \rightarrow \R$ be weakly regular, let $F_0$ be an atomless convex-support distribution, and let $p$ minimise $\int p \dd F_0$ over $\mathcal{M}(u)$.
	Then

	\begin{enumerate}[label=(\roman*)]
		
		\item \label{item:duality_weak:linear}
		for any $x < z$ such that $[x,z]$ is maximal among the intervals of affineness of $p$ within $\supp(F_0)$, 
		there are 
		\begin{equation*}
			x < y \leq \frac{\int_x^z \xi F_0(\dd \xi)}{F_0(z)-F_0(x)} \leq y' < z
		\end{equation*}
		such that $p(y) = u(y)$ and $p(y') = u(y')$, and 
		
		\item \label{item:duality_weak:non_linear}
		if $p(y) > u(y)$ for some $y \in \supp(F_0)$ such that $F_0(y) > 0$ ($F_0(y) < 1$), then $y > 0$ and there is $x \in [0,y)$ ($y < 1$ and there is $z \in (y,1]$) such that $p$ is affine on $[x,y]$ (on $[y,z]$).
		
	\end{enumerate}

	\noindent
	Moreover, if $u$ satisfies the crater property, then 

	\begin{enumerate}[label=(\roman*),resume]
		
		\item \label{item:duality_weak:propertyW}
		given $x < y$ such that $[x,y]$ is maximal among the intervals of affineness of $p$ within $\supp(F_0)$, and $F_0(x) > 0$ ($F_0(y) < 1$),
		it holds that $p(x) = u(x)$ ($p(y) = u(y)$), that $u$ is convex and not affine on some open interval $I$ containing $x$ ($y$), and that
		\begin{equation*}
			u' < \mathrel{(>)} \frac{p(y)-p(x)}{y-x}
			\quad \text{on $(0,x) \intersect I$ (on $(y,1) \intersect I$).}
		\end{equation*}

	\end{enumerate}
	
\end{corollary}

\begin{proof}[Proof of \Cref{corollary:duality}]
	Fix $F$ maximising $\int u \dd F$ among distributions feasible given $F_0$.
	For \ref{item:duality_weak:linear}, fix $x < z$ such that $[x,z]$ is maximal among intervals of affineness of $p$ within $\supp(F_0)$.
	Then $C_F = C_{F_0}$ on $\{x,z\}$ by \Cref{lemma:duality_weak}.%
		\footnote{If e.g. $C_F(x) < C_{F_0}(x)$, then $x$ lies in the interior of $\supp(F_0)$, $C_F < C_{F_0}$ on a neighbourhood of $x$, and $p$ is affine on this neighbourhood by \ref{item:duality_weak:convex}, contradicting the definition of $[x,z]$.}
	Then $(x,z)$ is $F$-non-null since $F_0$ has convex support,%
		\footnote{Since $F_0$ has convex support, $C_{F_0}$ is not affine on $[x,z]$. 
		Then, neither is $C_F$, and thus $\supp(F) \intersect (x,z)$ is not empty.}
	and thus there are $y,y' \in \supp(F)$ such that 
	\begin{equation*}
		x < y \leq \frac{\int_{(x,z)} \xi F(\dd \xi)}{\int_{(x,z)} \dd F} \leq y' < z.
	\end{equation*}
	By \ref{item:duality_weak:support}, $p(y) = u(y)$ and $p(y') = u(y')$.
	Finally, since $C_F = C_{F_0}$ on $\{x,z\}$ and $F_0$ is atomless, 
	$F = F_0$ on $\{x,z\}$ and $F$ is continuous at $x$ and $z$, so that  
	\begin{equation*}
		\frac{\int_x^z \xi F(\dd \xi)}{\int_{(x,z)} \dd F} 
		= \frac{zF(z) - x F(x) - [C_F(z)-C_F(x)]}{F(z)-F(x)} 
		= \frac{\int_x^z \xi F_0(\dd \xi)}{F_0(z) - F_0(x)}.
	\end{equation*}
	This proves \ref{item:duality_weak:linear}.

	For \ref{item:duality_weak:non_linear}, suppose that $p(y) > u(y)$ for some $y \in \supp(F_0)$ such that $F_0(y) > 0$ (the case $F_0(y) < 1$ is analogous).
	Then $y \notin \supp(F)$ by \ref{item:duality_weak:support}, so that $C_F$ is affine on a neighbourhood of $y$.
	Moreover, $y > \min \supp(F_0)$ since $F_0$ is atomless.
	Then, $y > 0$ and, since $\supp(F_0)$ is convex and $C_{F_0}$ is strictly convex on $\supp(F_0)$, there is $x \in [0,y)$ such that $C_F < C_{F_0}$ on $[x,y)$.
	Hence, $p$ is affine on $[x,y]$ by \ref{item:duality_weak:convex}, as $p$ is continuous.

	For \ref{item:duality_weak:propertyW}, fix $x < y$ such that $[x,y]$ is maximal among intervals of affineness of $p$ within $\supp(F_0)$, and $F_0(x) > 0$ (the case $F_0(y) < 1$ is analogous).
	By \ref{item:duality_weak:linear}, there is $w \in (x,y)$ such that $p(w) = u(w)$, so that $p$ is tangent to $u$ at $w$.
	Then, there is $z \in [x,w)$ such that $u$ is strictly convex on $[x,z]$ and concave on $[z,w]$, by \Cref{lemma:propertyW}.
	Let $b \coloneqq \min \supp(F_0)$ and $a$ be the smallest $a' \in [b,x]$ such that $p$ is affine on $[a',x]$.
	We consider two cases.

	\medskip

	\emph{Case~1: $a = x$.}
	Note that $x > b$ since $F_0$ is atomless and $F_0(x) > 0$.
	Then, by the hypothesis of this case, there exists an increasing sequence $(x_k)_{k \in \N} \subseteq (b,x)$ such that $\lim_k x_k = x$ and on which $C_F = C_{F_0}$, by \ref{item:duality_weak:convex}.
	Then, there exists an increasing sequence $(y_k)_{k \in \N} \subseteq (b,x) \intersect \supp(F)$ such that $\lim_k y_k = x$, since $C_{F_0}$ is strictly convex on $\supp(F_0)$.
	By \ref{item:duality_weak:support}, $p(y_k) = u(y_k)$ for each $k \in \N$.
	Then, since $p$ is convex and $u$ is weakly regular, by the hypothesis of this case, $u$ is convex and not affine on $[y_{k'},x]$ for some $k' \in \N$, and 
	\begin{equation*}
		u' < \frac{p(y)-p(x)}{y-x} \quad \text{on $(y_{k'},x)$.}%
			\footnote{To see why this last property must hold, suppose it were to fail. Then there is a sequence $(z_k)_{k=1}^\infty \subseteq (b,x)$ with $\lim_{k \to \infty} z_k = x$ such that $u'(z_k) \geq [ p(y)-p(x) ]/(y-x)$. Since $u$ is weakly regular, it follows that $u' \geq [ p(y)-p(x) ]/(y-x)$ on $[y_{k'},x]$ for some $k' \in \N$. 
			But then $p$ is affine on $[y_{k'},x]$ since $u(y_{k'}) = p(y_{k'})$, contradicting the hypothesis of this case.}
	\end{equation*}
	Moreover, $p(x) = u(x)$ and thus $u$ is affine on $[x,w]$ if $z = x$, since $p \geq u$ with equality on $\{x,w\}$ and $u$ is concave on $[z,w]$.
	The result follows by choosing $I = (y_{k'},z)$ if $z > x$, and $I = (y_{k'},w)$ otherwise.

	\medskip

	\emph{Case~2: $a < x$.}
	In this case, there is $\widehat{x} \in (a,x)$ such that $p(\widehat{x}) = u(\widehat{x})$, by \ref{item:duality_weak:linear}.
	Then, $p$ is tangent to $u$ at $\widehat{x}$, and thus there is $\widehat{y} \in (\widehat{x},x]$ such that $u$ is concave on $[\widehat{x},\widehat{y}]$, and strictly convex on $[\widehat{y},x]$, by \Cref{lemma:propertyW}.
	Define
	\begin{equation*}
		I \coloneqq 
		\begin{cases}
			(\widehat{y},z) & \quad \text{if $\widehat{y} < x < z$}\\
			(\widehat{x},z) & \quad \text{if $\widehat{y} = x$}\\
			(\widehat{y},w) & \quad \text{if $x = z$}.
		\end{cases}
	\end{equation*}
	Note that $\widehat{y} < z$, for otherwise $u$ would be concave on $[\widehat{x},w]$ and thus $p$ would be affine on $[\widehat{x},w]$ (since $p = u$ on on $\{\widehat{x},w\}$), contradicting $\widehat{x} < x$.
	Then $I$ contains $x$, since $\widehat{x} < \widehat{y} \leq x \leq z < w$.
	
	To show that $u$ is convex and not affine on $I$, note that $u$ is strictly convex on $[\widehat{y},z]$, as it is weakly regular and strictly convex on $[\widehat{y},x]$ and $[x,z]$.
	Then $p(x) = u(x)$, since $u$ satisfies the crater property and, clearly, the tangents to $u$ at $\widehat{x}$ and $w$ intersect at $(x,p(x))$.
	Hence $u$ is affine on $[\widehat{x},x]$ (on $[x,w]$) if $\widehat{y} = x$ ($x = z$), since $u$ is concave on $[\widehat{x},\widehat{y}]$ with $u(\widehat{x}) = p(\widehat{x})$ (on $[z,w]$ with $u(w) = p(w)$).
	Since $\widehat{y} < z$ and $u$ is strictly convex on $[\widehat{y},z]$, $u$ is convex and not affine on $I$.

	It remains to show that
	\begin{equation*}
		u' < \frac{p(y)-p(x)}{y-x}
		\quad \text{on $(0,x) \intersect I$.}
	\end{equation*}
	To this end, since $u$ is convex on $I$, we may assume without loss of generality that
	\begin{equation*}
		u'(x) \geq \frac{p(y)-p(x)}{y-x}.
	\end{equation*}
	Then $x = z$ and equality holds, since $p \geq u$ with equality at $x$ and $u$ is strictly convex on $[x,z]$.
	The result follows since $\widehat{y} < z$ and $u$ is strictly convex on $[\widehat{y},x]$.
\end{proof}

\begin{proof}[Proof of \Cref{lemma:ordered_prices}]
	Fix $F_0$, $p$ and $q$. 
	Suppose toward a contradiction that there exist $\widetilde{x} < \widetilde{z}$ in $\supp(F_0)$ such that $q$ is affine on $[\widetilde{x},\widetilde{z}]$, but $p$ is not.
	Assume without loss of generality that $[\widetilde{x},\widetilde{z}]$ is maximal among the intervals of affineness of $q$ within $\supp(F_0)$.
	We consider two cases.

	\medskip

	\emph{Case~1: $u$ is convex on $[\widetilde{x},\widetilde{z}]$.} 
	We shall construct $a \in [0,\widetilde{x}]$ such that $u$ is concave on $[a,\widetilde{z}]$ and $p(a) = u(a)$.
	A similar argument yields $b \in [\widetilde{z},1]$ such that $u$ is concave on $[\widetilde{x},b]$ and $p(b) = u(b)$.
	Then $u$ is concave on $[a,b]$ and thus $p$ is affine on $[a,b]$, contradicting the fact that $p$ is not affine on $[\widetilde{x},\widetilde{z}] \subseteq [a,b]$.

	To construct $a$, note that $v$ is convex on $[\widetilde{x},\widetilde{z}]$ by the hypothesis of this case, since $u$ is coarsely less convex than $v$.
	Then $v$ is affine on $[\widetilde{x},\widetilde{z}]$ by \ref{item:duality_weak:linear}
	(since \ref{item:duality_weak:linear} implies that $q(y) = v(y)$ for some $y \in (\widetilde{x},\widetilde{z})$).
	Then so is $u$, as it is coarsely less convex than $v$.
	Then, if $p(\widetilde{x}) = u(\widetilde{x})$, we may take $a = \widetilde{x}$.
	Hence, assume without loss of generality that $p(\widetilde{x}) > u(\widetilde{x})$.

	Let $\bar z$ be the largest $z \in [\widetilde{x},1]$ such that $p$ is affine on $[\widetilde{x},z]$. 
	Then $\bar z < \widetilde{z}$ by hypothesis, and $\bar z > \widetilde{x}$ by \ref{item:duality_weak:non_linear} (which is applicable since $F_0(\widetilde x) < 1$).
	Let $\bar x$ be the smallest $x \in [0,\widetilde{x}] \intersect \supp(F_0)$ such that $p$ is affine on $[x,\bar z]$.
	By \ref{item:duality_weak:linear}, there is $a \in (\bar x,\bar z)$ such that $p(a) = u(a)$.
	And $a$ belongs to $[0,\widetilde{x}]$ since $u$ and $p$ are affine on $[\widetilde x,\bar z]$ and since $p \geq u$, with strict inequality at $\widetilde x$.
	
	It remains to prove that $u$ is concave on $[a,\widetilde{z}]$.
	As $u$ is affine on $[\widetilde{x},\widetilde{z}]$ and weakly regular, and $\widetilde{x} < \bar z < \widetilde{z}$, it suffices to show that $u$ is concave on $[a,\bar z]$.
	Note that $p$ is tangent to $u$ at $a$ as $\bar x < a < \bar z$ and $p(a) = u(a)$.
	Then $u$ is concave on $[a,\bar z]$ by \Cref{lemma:propertyW}, as $p \geq u$ on $[a,\bar z]$, and $u$ and $p$ are affine on $[\widetilde{x},\bar z]$.%
		\footnote{Indeed, \Cref{lemma:propertyW} yields $y \in (a,\bar z]$ such that $u$ is concave on $[a,y]$ and strictly convex on $[y,\bar z]$. And $y = \bar z$ since $u$ is affine on $[\widetilde{x},\bar z]$.}

	\medskip

	\emph{Case~2: $u$ is not convex on $[\widetilde{x},\widetilde{z}]$.}
	In this case, since $u$ is weakly regular, there are $\widetilde{x} \leq c < d \leq \widetilde{z}$ such that $[c,d]$ is maximal among the intervals in $[\widetilde{x},\widetilde{z}]$ on which $u$ is strictly concave.
	Then $p$ and $u$ differ somewhere in $(c,d)$ and thus, by \ref{item:duality_weak:non_linear}, $p$ is not strictly convex on $(c,d)$.
	Hence there are $\bar x < \bar z$ such that $[\bar x,\bar z]$ is maximal among the intervals of affineness of $p$ within $\supp(F_0)$, and $[\bar x,\bar z] \intersect (c,d)$ is not empty.
	Since $p$ is not affine on $[\widetilde{x},\widetilde{z}]$, either $\widetilde{x} < \bar x$ or $\bar z < \widetilde{z}$.
	We consider the case $\widetilde{x} < \bar x$; the other is analogous.

	Note that $\bar x < d \leq \widetilde z$, where the strict inequality holds as $[\bar x,\bar z] \intersect (c,d)$ is not empty. 
	We shall exhibit a $w \in (\bar x,\widetilde z]$ such that
	\begin{equation}
		u(\bar x)_\alpha u(w) \geq u(\bar x_\alpha w) \quad \text{for all $\alpha \in (0,1)$,}
		\label{eq:u_chord_xw}
	\end{equation}
	a $\widetilde{y} \in (\bar x,w)$ such that $q(\widetilde{y}) = v(\widetilde{y})$, and show that $v(\bar x) < q(\bar x)$.
	To see why this suffices, note that it implies that given $\alpha \in (0,1)$ such that $\bar x_\alpha w = \widetilde{y}$,
	\begin{equation*}
		v(\bar x)_\alpha v(w) < q(\bar x)_\alpha q(w) = q(\widetilde{y}) = v(\widetilde{y}),
	\end{equation*}
	where the strict inequality holds since $\alpha \in (0,1)$, $v(\bar x) < q(\bar x)$ and $v(w) \leq q(w)$, and the first equality holds as $q$ is affine on $[\widetilde{x},\widetilde{z}] \supseteq [\bar x,w]$.
	Together with \eqref{eq:u_chord_xw}, this contradicts the fact that $u$ is coarsely less convex than $v$.

	To construct $w$ note that, by \ref{item:duality_weak:linear}, there is 
	\begin{equation*} 
		\bar x < \frac{\int_{\bar x}^{\bar z} \xi F_0(\dd \xi)}{F_0(\bar z) - F_0(\bar x)} \leq \bar y < \bar z 
	\end{equation*}
	such that $p(\bar y) = u(\bar y)$.
	Define $w \coloneqq \min\left\{\bar y, \widetilde{z}\right\}$ and note that $w \in (\bar x,\widetilde{z}]$.
	To establish \eqref{eq:u_chord_xw}, note $p$ is tangent to $u$ at $\bar y$, so that there is $\gamma \in [\bar x,\bar y)$ such that $u$ is strictly convex on $[\bar x,\gamma]$ and concave on $[\gamma,\bar y]$, by \Cref{lemma:propertyW}.
	Then \eqref{eq:u_chord_xw} holds since $p(\bar x) = u(\bar x)$ by \ref{item:duality_weak:propertyW} (which is applicable since $F_0(\bar x) > 0$ and $\widetilde x < \bar x$).%
		\footnote{This is easily seen graphically. It follows from the facts that $p$ is affine on $[\bar x,\bar y]$, that $p \geq u$ on $[\bar x,\bar y]$ with equality on $\{\bar x,\bar y\}$, that $u$ is convex on $[\bar x,\widehat z]$ and concave on $[\widehat z,\bar y]$ for some $\widehat z \in [\bar x,\bar y]$, and that $\bar x < w \leq \bar y$.}

	To construct $\widetilde{y} \in (\bar x,w)$ such that $q(\widetilde{y}) = v(\widetilde{y})$, let $[a,b]$ be the maximal interval of convexity of $u$ containing $\bar x$.
	(This is well-defined since $u$ is weakly regular).
	Note that if $\bar x \in (c,d)$ then $\gamma = \bar x$, as $u$ is concave on $(c,d)$ and on $[\gamma,\bar y]$, and strictly convex on $[\bar x,\gamma]$. But then $u$ would be affine on $[\bar x,\bar y]$ since $p = u$ on $\{\bar x,\bar y\}$, contradicting the fact that $u$ is strictly concave on $[c,d]$. Hence $\bar x < c$ as $\bar x < d$.
	Then $b \leq c$, and by \ref{item:duality_weak:propertyW} (applicable since $F_0(\bar x) > 0$ and $\widetilde x < \bar x$) we have that $a < \bar x < b$, that $u$ is not affine on $[a,b]$, and that
	\begin{equation}
		u' < \frac{p(\bar z)-p(\bar x)}{\bar z - \bar x} \quad \text{on $(a,\bar x)$.}
		\label{eq:up_strict}
	\end{equation}
	We rely on the following claim, proved at the end. 
	
	\begin{namedthm}[Claim.]
		\label{claim:vthrehsolds_bounds}
		$a \leq \widetilde{x}$ and $\widetilde{z} \leq \bar z$.
	\end{namedthm}
	
	By \ref{item:duality_weak:linear}, we may choose  
	\begin{equation*}
		\widetilde{x} < y \leq \frac{\int_{\widetilde x}^{\widetilde z} \xi F_0(\dd \xi)}{F_0(\widetilde{z})-F_0(\widetilde{x})} < \widetilde{z}
	\end{equation*}
	such that $q(y) = v(y)$.
	Note that $y < \min\{\bar y,\widetilde z\} = w$ since $y < \widetilde{z}$ and 
	\begin{equation*}
		y \leq \frac{\int_{\widetilde x}^{\widetilde z} w \dd F_0(w)}{F_0(\widetilde{z})-F_0(\widetilde{x})} < \frac{\int_{\bar x}^{\bar z} w \dd F_0(w)}{F_0(\bar{z})-F_0(\bar{x})} \leq \bar y,
	\end{equation*}
	where the strict inequality holds as $F_0$ has convex support,
	$\widetilde{x} < \bar x$ and, by the \hyperref[claim:vthrehsolds_bounds]{claim}, $\widetilde{z} \leq \bar z$.
	Thus we may take $\widetilde{y} \coloneqq y$ if $y > \bar x$.
	If instead $y \leq \bar x$, note that $v$ is convex on $[a,b]$, as $u$ is coarsely less convex than $v$ and convex on $[a,b]$.
	Moreover, $q$ is affine on $[\widetilde x,\widetilde z]$ and $q \geq v$ with equality at $y$.
	Since $a \leq \widetilde x < y \leq \bar x < b \leq c \leq \widetilde z$, it follows that $v = q$ on $[\widetilde x,b] = [a,b] \intersect [\widetilde x,\widetilde z]$.
	As $\bar x < w$, we may then choose any $\widetilde y \in (\bar x,\min\{b,w\})$.

	It remains to prove that $v(\bar x) < q(\bar x)$.
	Note that, by \eqref{eq:u_chord_xw} and \eqref{eq:up_strict}, 
	\begin{equation*}
		u(\widetilde{x})_\alpha u(w) > u(\widetilde{x}_\alpha w) \quad \text{for all $\alpha \in (0,1)$,}
	\end{equation*}
	since $a \leq \widetilde{x} < \bar x$, and $u$ is convex on $[a,\bar x]$.%
		\footnote{In detail, on $(a,\bar x)$, $u' > [p(\bar z)-p(\bar x)]/(\bar z - \bar x) = [p(w)-p(\bar x)]/(w - \bar x) \geq [u(w)-u(\bar x)]/(w - \bar x)$, and thus the continuous map that matches $u$ on $[\widetilde{x},\bar x] \union \{w\}$ and is affine on $[\bar x,w]$, is convex and not affine on $[\widetilde{x},w]$. Then the result follows from \eqref{eq:u_chord_xw}.}
	Hence, choosing $\alpha \in (0,1)$ such that $\bar x = \widetilde{x}_\alpha w$, 
	\begin{equation*}
		q(\bar x) = q(\widetilde{x})_\alpha q(w) \geq v(\widetilde{x})_\alpha v(w) > v(\bar x),
	\end{equation*}
	where the equality holds since $q$ is affine on $[\widetilde{x},\widetilde{z}] \supseteq [\widetilde x,w]$, the weak inequality as $q \geq v$, and the strict inequality holds since $u$ is less convex than $v$.
	
	\begin{proof}[Proof of the claim]
		\renewcommand{\qedsymbol}{$\square$}
		We begin by exhibiting $\widetilde{x} \leq c' < d' \leq \widetilde{z}$ such that $u$ is strictly convex on $[\widetilde{x},c']$ and $[d',\widetilde{z}]$, and concave on $[c',d']$.
		By \ref{item:duality_weak:linear}, 
		\begin{equation*}
			v(\widetilde{x}_\alpha \widetilde{z}) = q(\widetilde{x}_\alpha \widetilde{z}) = q(\widetilde{x})_\alpha q(\widetilde{z}) \geq v(\widetilde{x})_\alpha v(\widetilde{z})
			\quad \text{for some $\alpha \in (0,1)$,}
		\end{equation*}
		where the second equality holds since $q$ is affine on $[\widetilde{x},\widetilde{z}]$, and the inequality holds since $q \geq v$.
		Then 
		\begin{equation}
			u(\widetilde{x}_{\alpha} \widetilde{z}) \geq u(\widetilde{x})_\alpha u(\widetilde{z}) \quad \text{for some $\alpha \in (0,1)$,}
			\label{eq:u_no_chord}
		\end{equation}
		since $u$ is coarsely less convex than $v$.
		Hence the tangent to $u$ at some $a_\star \in (\widetilde{x},\widetilde{z})$ weakly exceeds $u$ on $[\widetilde{x},\widetilde{z}]$, as $u$ is weakly regular.
		Therefore, by \Cref{lemma:propertyW}, there are $c' \in [\widetilde{x},a_\star)$ and $d' \in (a_\star,\widetilde{z}]$ such that $u$ is strictly convex on $[\widetilde{x},c']$ and $[d',\widetilde{z}]$, and concave on $[c',a_\star]$ and $[a_\star,d']$.
		As $u$ is weakly regular, it is concave on $[c',d']$, as desired.

		Note that $b \leq c < d \leq \widetilde{z}$.
		Then $a \leq \widetilde{x}$ since $u$ is weakly regular.
		Indeed, if $\widetilde{x} < a$ then, by definition of $a$ and $b$, there would exist $\widetilde{x} \leq a' < a$ and $b < b' \leq \widetilde{z}$ such that $u$ is strictly concave on $[a',a]$ and $[b,b']$.
		But then $c' \leq a'$ and $b' \leq d'$, contradicting the fact that $u$ is convex and not affine on $[a,b]$.

		It remains to show that $\widetilde{z} \leq \bar z$.
		Suppose this fails and seek a contradiction.
		Then $p(\bar z) = u(\bar z)$ by \ref{item:duality_weak:propertyW}, and thus 
		\begin{equation}
			u(\bar x)_\alpha u(\bar z) = p(\bar x) _\alpha p(\bar z) = p(\bar x_\alpha \bar z) \geq u(\bar x_\alpha \bar z) \quad \text{for all $\alpha \in (0,1)$,}
			\label{eq:olc_if}
		\end{equation}
		where the first equality holds since $p(\bar x) = u(\bar x)$, and the second since $p$ is affine on $[\bar x,\bar z]$.
		Moreover, $u$ is convex and not affine on some open interval $I$ containing $\bar z$, by \ref{item:duality_weak:propertyW}.
		Then
		\begin{equation*}
			c' \leq c < d \leq \bar z,
		\end{equation*}
		where the first inequality holds since $\widetilde{x} \leq c < d$ and $u$ is strictly convex on $[\widetilde{x},c']$ and strictly concave on $[c,d]$, and the last inequality holds since $[\bar x,\bar z] \intersect (c,d) \neq \varnothing$ and $u$ is strictly concave on $[c,d]$ and convex on $I \ni \bar z$.
		Then $u$ is convex on $[\bar z,\widetilde{z}]$, as it is convex and not affine on $I \ni \bar z$, concave on $[c',d']$, and strictly convex on $[d',\widetilde{z}]$.
		Then \eqref{eq:up_strict} and \eqref{eq:olc_if} contradict \eqref{eq:u_no_chord}, since $a \leq \widetilde{x} < \bar x$ and $u$ is convex on $[a,\bar x]$.%
			\footnote{To see why, note that the map $\1_{[\tilde{x},\bar x] \union [\bar z,\tilde{z}]} u + \1_{(\bar x,\bar z)} p$ is convex and not affine on $[\widetilde{x},\widetilde{z}]$.}
	\end{proof}
	\renewcommand{\qedsymbol}{$\blacksquare$}

	With the \hyperref[claim:vthrehsolds_bounds]{claim} established,
	the proof is complete.
\end{proof}

\end{appendices}



\printbibliography[heading=bibintoc]


\end{document}